\title{Timed Basic Parallel Processes}
\author{Lorenzo~Clemente}{University of Warsaw, Poland}{}{https://orcid.org/0000-0003-0578-9103}{Partially supported by Polish NCN grant 2017/26/D/ST6/00201.}
\author{Piotr~Hofman}{University of Warsaw, Poland}{}{https://orcid.org/0000-0001-9866-3723}{Partially supported by Polish NCN grant 2016/21/D/ST6/01368.}	
\author{Patrick~Totzke}{University of Liverpool, UK}{}{https://orcid.org/0000-0001-5274-8190}{}
\authorrunning{L.~Clemente and P.~Hofman and P.~Totzke}
\keywords{Timed Automata, Petri Nets}
\newcommand{\Aut}{{\?A}}
\newcommand{\Baut}{{\?B}}
\begin{document}
\maketitle
\begin{abstract}

Timed basic parallel processes (\TBPP) extend communication-free Petri nets (aka.~\BPP or commutative context-free grammars) by a global notion of time.
\TBPP can be seen as an extension of timed automata (\TA) with context-free branching rules,
and as such may be used to model networks of independent timed automata with process creation.

We show that the coverability and reachability problems (with unary encoded target multiplicities)
are \PSPACE-complete and \EXPTIME-complete, respectively. 
For the special case of 1-clock \TBPP, both are \NP-complete and hence not more complex than for untimed \BPP.
This contrasts with known super-Ackermannian-completeness and undecidability results for general timed Petri nets. 

As a result of independent interest, and basis for our \NP upper bounds,
we show that the reachability relation of 1-clock \TA
can be expressed by a formula of polynomial size in the existential fragment of linear arithmetic,
which improves on recent results from the literature.

\end{abstract}

\section{Introduction}
\label{sec:intro}

We study safety properties of unbounded networks of timed processes,
where time is global and elapses at the same rate for every process.
Each process is a \emph{timed automaton} (\TA) \cite{AD1994} controlling its own set of private clocks,
not accessible to the other processes.
A process can dynamically create new sub-processes, which are thereafter independent from each other and their parent,
and can also terminate its execution and disappear from the network.

While such systems 
can be conveniently modelled in \emph{timed Petri nets} (\TdPN), 
verification problems for this model are either undecidable or prohibitively complex:
The reachability problem is undecidable even when individual processes carry only one clock \cite{RFE1999}
and the coverability problem is undecidable for two or more clocks.
In the one-clock case coverability remains decidable but
its complexity is hyper-Ackermannian \cite{AACK2001,HSS2012}.

These hardness results however require unrestricted synchronization between processes,
which motivates us to study of the communication-free fragment of \TdPN,
called \emph{timed basic parallel processes} (\TBPP) in this paper.
This model subsumes both \TA and communication-free Petri nets (a.k.a.~\BPP \cite{C1993,E1997}).
The general picture that we obtain is that extending communication-free Petri nets by a global notion of time
comes at no extra cost in the complexity of safety checking, and it improves on the prohibitive complexities of \TdPN.

\mypar{Our contributions.}
We show that the \TBPP coverability problem is \PSPACE-complete,
matching same complexity for \TA \cite{AD1994, FJ2015},
and that the more general \TBPP reachability problem is \EXPTIME-complete,
thus improving on the undecidability of \TdPN.
The lower bounds already hold for \TBPP with two clocks if constants are encoded in binary;
\EXPTIME-hardness for reachability with no restriction on the number of clocks holds for constants in $\set{0,1}$.
The upper bounds are obtained by reduction to \TA reachability and reachability games \cite{JT2007},
and assume that process multiplicities in target configurations are given in unary.

In the single-clock case, we show that both \TBPP coverability and reachability are \NP-complete,
matching the same complexity for (untimed) \BPP \cite{E1997}.
This paves the way for the automatic verification of unbounded networks of 1-clock timed processes,
which is currently lacking in mainstream verification tools such as UPPAAL \cite{LPY1997} and KRONOS \cite{Yovine:KRONOS:1997}.
The \NP lower bound already holds when the target configuration has size $2$;
when it has size one, 1-clock \TBPP coverability becomes \NL-complete,
again matching the same complexity for 1-clock \TA \cite{LMS2004}
(and we conjecture that 1-clock reachability is in \PTIME under the same restriction).

As a contribution of independent interest, we show that the ternary reachability relation of 1-clock \TA
can be expressed by a formula of existential linear arithmetic (\ELA) of polynomial size.
By \emph{ternary reachability relation} we mean the family of relations $\set{\tareach {pq} {}}$
s.t.~$\mu \tareach {pq} \delta \nu$ holds if from control location $p$ and clock valuation $\mu \in \nnreals^k$
it is possible to reach control location $q$ and clock valuation $\nu \in \nnreals^k$
in exactly $\delta \in \nnreals$ time.
This should be contrasted with analogous results (cf.~\cite{FQSW:ArXiv:2019}) which construct formulas of exponential size,
even in the case of 1-clock \TA.
Since the satisfiability problem for \ELA is decidable in \NP,
we obtain a \NP upper bound to decide ternary reachability $\tareach {pq} {}$.
We show that the logical approach is optimal by providing a matching \NP lower bound for the same problem.
Our \NP upper bounds for the 1-clock \TBPP coverability and reachability problems
are obtained as an application of our logical expressibility result above, and the fact that \ELA is in \NP;
as a further technical ingredient we use polynomial bounds on the piecewise-linear description of value functions in 1-clock priced timed games \cite{HIM2013}.

\mypar{Related research.}
Starting from the seminal \PSPACE-completeness result of the nonemptiness problem for \TA \cite{AD1994} (cf.~also \cite{FJ2015}),
a rich literature has emerged considered more challenging verification problems,
including the symbolic description of the reachability relation
\cite{CJ1999, D:LICS:2002, KP:FSTTCS:2005, Dima:VISSAS:2005, FQSW:ArXiv:2019}.
There are many natural generalizations of \TA to add extra modelling capabilities,
including \emph{time Petri Nets} \cite{M1974, LP1991} (which associate timing constraints to transitions)
the already mentioned timed Petri nets (\TdPN) \cite{RFE1999,AACK2001,HSS2012} (where tokens carry clocks which are tested by transitions),
\emph{networks of timed processes} \cite{AbdullaJonsson:TCS03},
several variants of \emph{timed pushdown automata} \cite{BER1994,Dang:CAV:2001,BenerecettiMinopoliPeron:RTA:2010,TrivediWojtczak:RTA:2010,AAS:LICS:2012,Quaas:LMCS:2015,BenerecettiPeron:TCS:2016,ClementeLasotaLazicMazowiecki:LICS:2017,ClementeLasota:ICALP:2018},
\emph{timed communicating automata} \cite{KrcalYi:CTA:CAV:2006, ClementeHerbreteauStainerSutre:FOSSACS13, AbdullaAtigKrishna:TCA:FORMATS:2018, Clemente:TCA:ArXiv:2018},
and their lossy variant \cite{AbdullaAtigCederberg:TLCS:FSTTCS12},
and timed process calculi based on Milners CCS (e.g.~\cite{BLS2000}).
While decision problems for \TdPN have prohibitive complexity/are undecidable,
it has recently been shown that structural safety properties are \PSPACE-complete using forward accelerations \cite{AACMT2018}.

\mypar{Outline.}
In \cref{sec:prep} we define \TBPP and their reachability and coverability decision problems.
In \cref{sec:ha} we show that the reachability relation for 1-clock timed automata can be expressed in polynomial time in an existential formula of linear arithmetic,
and that the latter logic is in \NP.
We apply this result in Sec.~\ref{sec:oneclock} to show that the reachability and coverability problems for 1-clock \TBPP is \NP-complete.
Finally, in \cref{sec:multiclock} we study the case of \TBPP with $k \geq 2$ clocks,
and in \cref{sec:conclusion} we draw conclusions.

\section{Preliminaries}
\label{sec:prep}
\mypar{Notations.}

We use $\nat$ and $\nnreals$ to denote the sets
of nonnegative integers and reals, respectively. 
For $c\in\nnreals$ we write
$\intof{c} \in \nat$ for its integer part
and $\fractof{c} \eqdef c- \intof c$
for its fractional part.
For a set $\?X$, we use $\?X^*$ to denote the set of finite sequences over $\?X$
and $\msets{\?X}$ to denote the set of finite multisets over $\?X$,
i.e., functions $\N^{\?X}$.
We denote the empty multiset by $\zerro$,
we denote the union of two multisets $\alpha, \beta \in \N^{\?X}$ by $\alpha \msplus \beta$,
which is defined point-wise,
and by $\alpha \leq \beta$ we denote the natural partial order on multisets,
also defined point-wise.
The \emph{size} of a multiset $\alpha \in \N^{\?X}$ is $\size \alpha = \sum_{X \in {\?X}} \alpha(X)$.
We overload notation and we let $X \in \?X$ denote the singleton multiset of size $1$ containing element $X$.
For example, if $X, Y \in \?X$, then the multiset consisting of 1 occurrence of $X$ and 2 of $Y$ will be denoted by $\alpha = X \msplus Y \msplus Y$;
it has size $\size \alpha = 3$.

\mypar{Clocks.}
Let $\clockset$ be a finite set of \emph{clocks}.
A \emph{clock valuation} is a function $\mu \in \nnreals^\clockset$ assigning a nonnegative real to every clock.
For $t\in\nnreals$, we write $\mu+t$ for the valuation that maps clock $x \in \clockset$ to $\mu(x)+t$. 
For a clock $x \in \clockset$ and a clock or constant $e \in \clockset \cup \N$
let $\mu[x := e]$ be the valuation $\nu$ s.t.~$\nu(x) = \mu(e)$ and $\nu(z) = \mu(z)$ for every other clock $z \neq x$
(where we assume $\mu(k) = k$ for a constant $k \in \N$);
for a sequence of assignments $R = (x_1 := e_1; \cdots; x_n := e_n)$
let $\mu[R] = \mu[x_1 := e_1]\cdots[x_n := e_n]$.
A \emph{clock constraint} is a conjunction of linear inequalities of the form
$c\bowtie k$, where $c\in\clockset$, $k\in\N$, and $\bowtie\; \in\set{<, \le, =, \ge, >}$;
we also allow $\true$ for the trivial constraint which is always satisfied.
We write 
$\mu\models\varphi$ to denote that the valuation $\mu$ satisfies the constraint $\varphi$.

\mypar{Timed basic parallel processes.}
    A \emph{timed basic parallel process} (\TBPP)
    consists of finite sets 
    $\clockset$,
    $\?X$, and 
    $\?R$
    of clocks,
    nonterminal symbols, 
    and rules. Each rule is of the form
    $$ X\Rule{\varphi; R}\alpha $$
    where $X\in\?X$ is a nonterminal, $\varphi$ is a clock constraint,
    $R$ is a sequence of assignments of the form $x := e$, where $e$ is either a constant in $\N$ or a clock in $\?C$,
    and $\alpha\in \?X^\oplus$ is a finite multiset of successor nonterminals
    \footnote{\label{footnote:sugar}We note that clock updates $x := k$ with $k \in \N$ can be encoded with only a polynomial blow-up
    by replacing them with $x := 0$,
    while recording in the finite control the last update $k$,
    and replacing a test $x \bowtie h$ with $x \bowtie h - k$.
    We use them as a syntactic sugar to simplify the presentation of some constructions.}.
    Whenever the test $\varphi \equiv \true$ is trivial,
    or $R$ is the empty sequence,
    we just omit the corresponding component and just write
    $X\Rule{\varphi}\alpha$, $X\Rule{R}\alpha$, or $X\Rule{}\alpha$.
    Finally, we say that we \emph{reset} the clock $x_i$ if we assign it to $0$.

    Henceforth, we assume w.l.o.g.~that the size $\len{\alpha}$ is at most 2.
    A rule with $\alpha=\zerro$ is called a \emph{vanishing} rule,
    and a rule with $\len\alpha = 2$ is called a \emph{branching} rule.
    We will write $k$-\TBPP to denote the class of \TBPP with $k$ clocks.
    
    \medskip
    A \emph{process} is a pair $(X, \mu) \in \?X\x\nnreals^\clockset$ comprised of a nonterminal $X$ and a clock valuation $\mu$,
    and a \emph{configuration} $\alpha$ is a multiset of processes, i.e., $\alpha \in \msets{(\?X\x\nnreals^\clockset)}$.
    For a process $P = (X, \mu)$ and $t \in \nnreals$, we denote by $P + t$ the process $(X, \mu + t)$,
    and for a configuration $\alpha = P_1 \msplus \cdots \msplus P_n$,
    we denote by $\alpha + t$ the configuration $Q_1 \msplus \cdots \msplus Q_n$, where $Q_1 = P_1 + t, \dots, Q_n = P_n + t$.
    The semantics of a \TBPP
    $(\clockset,\?X,\?R)$
    is given by an infinite timed transition system $(C, \to)$,
    where $C = \msets{(\?X\x\nnreals^\clockset)}$ is the set of configurations,
    and $\to \subseteq C \times \nnreals \times C$ is the transition relation between configurations.
    There are two kinds of transitions:
    
    \begin{description}
    \item   [Time elapse:]
            For every configuration $\alpha \in C$ and $t\in\nnreals$,
            there is a transition $\alpha\tstep{t}\alpha+t$
            in which all clocks in all processes are simultaneously increased by $t$.
            In particular, the empty configuration stutters: $\emptyset \tstep t \emptyset$, for every $t \in \nnreals$.

    \item   [Discrete transitions:]
            For every configuration $\gamma = \alpha \msplus (X, \mu) \msplus \beta \in C$
            and rule $X \Rule{\varphi; R} Y \msplus Z$ s.t.~$\mu \models\varphi$
            there is a transition $\gamma \step 0 \alpha \msplus (Y, \nu) \msplus (Z, \nu) \msplus \beta$,
            where $\nu = \mu[R]$. 
            Analogously, rules
            $X\Rule{\varphi; R}Y$
            and $X\Rule{\varphi; R}\zerro$
            induce transitions
            $\gamma \step 0 \alpha \msplus (Y, \nu) \msplus \beta$
            and 
            $\gamma \step 0 \alpha \msplus \beta$. 
    \end{description}
    A \emph{run} starting in $\alpha$ and ending in $\beta$
    is a sequence of transitions $\alpha = \alpha_0 \reach {t_1} \alpha_1 \cdots \reach {t_n} \alpha_n = \beta$.
    We write $\alpha\reach t \beta$ whenever there is a run as above where the sum of delays is $t = t_1 + \cdots + t_n$,
    and we write $\alpha\reach *\beta$ whenever $\alpha\reach t \beta$ for some $t \in \nnreals$.

    \medskip
    \noindent
    \TBPP generalise several known models: 
    A \emph{timed automaton} (\TA) \cite{AD1994}
    is a \TBPP without branching rules;
    in the context of \TA, we will sometimes call nonterminals with the more standard name of \emph{control locations}.
    Untimed \emph{basic parallel processes} (\BPP) \cite{C1993, E1997}
    are \TBPP over the empty set of clocks $\?X = \emptyset$.
    \TBPP can also be seen as a structural restriction of \emph{timed Petri nets} \cite{AACK2001, HSS2012}
    where each transition consumes only one token at a time. 

    \TBPP are related to \emph{alternating timed automata} (\ATA) \cite{OuaknineWorrell:MTL:LICS:2005, LasotaWalukiewicz:ATA:ACM08}:
    Branching in \TBPP rules corresponds to universal transitions in \ATA.
    However, \ATA offer additional means of synchronisation between the different branches of a run tree:
    While in a \TBPP synchronisation is possible only through the elapse of time,
    in an \ATA all branches must read the same timed input word.

\mypar{Decision problems.}

We are interested in checking safety properties of \TBPP in the form of the following decision problems.
The \emph{reachability problem} asks whether a target configuration is reachable from a source configuration.

\dproblem{
    A \TBPP $(\clockset,\?X,\?R)$, an initial $X \in \?X$
    and target nonterminals $T_1, \dots, T_n \in \?X$.
    }{ 
    Does $(X, \vec{0}) \reach{*} (T_1, \vec{0}) \msplus \cdots \msplus (T_n, \vec{0})$ hold?
}
\noindent 
It is crucial that we reach all processes in the target configurations \emph{at the same time},
which provides an external form of global synchronisation between processes.

Motivated both by complexity considerations and applications for safety checking,
we study the \emph{coverability problem},
where it suffices to reach some configuration larger than the given target in the multiset order.
For configurations $\alpha, \beta \in \msets{(\?X\x\nnreals^\clockset)}$,
let $\alpha \rcovers \beta$ whenever there exists $\gamma \in \msets{(\?X\x\nnreals^\clockset)}$
s.t.~$\alpha \reach{*} \gamma \geq \beta$.

\dproblem{
    A \TBPP $(\clockset,\?X,\?R)$, an initial $X \in \?X$
    and target nonterminals $T_1, \dots, T_n \in \?X$.
    }{ 
    Does $(X, \vec{0}) \rcovers (T_1, \vec{0}) \msplus \cdots \msplus (T_n, \vec{0})$?
}

The \emph{simple} reachability/coverability problems are as above
but with the restriction that the target configuration is of size $1$, i.e., a single process.
Notice that this is a proper restriction,
since reachability and coverability do not reduce in general to their simple variant.
Finally, the \emph{non-emptiness problem} is the special case of the reachability problem
where the target configuration $\alpha$ is the empty multiset $\emptyset$.

In all decision problems above the restriction to zero-valued clocks in the initial process is mere convenience,
since we could introduce a new initial nonterminal $Y$
and a transition $Y \Rule{x_1 := \mu(x_1); \dots; x_n := \mu(x_n)} X$
initialising the clocks to the initial values provided by the (rational) clock valuation $\mu \in \nnrationals^\clockset$.
Similarly, if we wanted to reach the final configuration $\alpha = (X_1, \mu_1) \msplus (X_2, \mu_2)$
with $\mu_1, \mu_2 \in \nnrationals^\clockset$,
then we could add two nonterminals $Y_1, Y_2$
and two new rules
$X_1 \Rule {x_1 = \mu_1(x_1) \wedge \cdots \wedge x_n = \mu_1(x_n); x_1:=0; \dots; x_n:=0 } Y_1$ and
$X_2 \Rule {x_1 = \mu_2(x_1) \wedge \cdots \wedge x_n = \mu_2(x_n); x_1:=0; \dots; x_n:=0 } Y_2$
and check whether $X \reach{*} (Y_1, \vec{0}) \msplus (Y_2, \vec{0})$ holds.
(It is standard to transform \TBPP with constraints of the form $x_i = k$ with $k \in \nnrationals$
in the form $x_i = k$ with $k \in \N$.)
Similarly, the restriction of having just one initial nonterminal process is also w.l.o.g.,
since if we wanted to check reachability from $(X_1, \vec{0}) \msplus (X_2, \vec{0})$
we could just add a new initial nonterminal $X$ and a branching rule $X \Rule {} X_1 \msplus X_2$.

For complexity considerations we will assume that all constants appearing in clock constraints are given in binary encoding, and that the multiplicities of target processes are in unary.

\section{Reachability Relations of One-Clock Timed Automata}
\label{sec:ha}
In this section we show that the reachability relation of 1-clock \TA
is expressible as an existential formula of linear arithmetic of polynomial size.
Since the latter fragment is in \NP,
this gives an \NP algorithm to check whether a family of \TA can reach the respective final locations \emph{at the same time}.
This result will be applied in Sec.~\ref{sec:oneclock} to show that coverability and reachability of 1-\TBPP are in \NP.
\ignore{
There have been several works studying the problem of computing \TA reachability relations
\cite{CJ1999, D:LICS:2002, KP:FSTTCS:2005, Dima:VISSAS:2005, QSW2017, FQSW:ArXiv:2019}.
In particular, \cite{FQSW:ArXiv:2019} shows that the reachability relation of a $k$-clock \TA
is computable in time polynomial in the size of the automaton
and exponential in the number of clocks $k$ and bit length of the maximum constraint constant.
We improve on this result by showing that, in the case of $k = 1$ clocks,
the formula is polynomial altogether (even in the bit length of the maximum constant).
}
We first show that existential linear arithmetic is in \NP
(which is an observation of independent interest),
and then how to express the reachability relation of 1-\TA in existential linear arithmetic in polynomial time.

The set of \emph{terms} $t$ is generated by the following abstract grammar
\begin{align*}
	s, t \; ::= \; x \sep k \sep \floor t \sep \fractof t \sep {-t} \sep s + t \sep k \cdot t,
\end{align*}
where $x$ is a rational variable,
$k \in \Z$ is an integer constant encoded in binary,
$\floor t$ represents the integral part of $t$,
and $\fractof t$ its fractional part.
\emph{Linear arithmetic} (\LA) is the first order language
with atomic proposition of the form $s \leq t$ \cite{W1999},
we denote by \ELA its existential fragment,
and by \QFLA its quantifier-free fragment.
Linear arithmetic generalises both \emph{Presburger arithmetic} (\PA) 
and \emph{rational arithmetic} (\RA), 
whose existential fragments are known to be in \NP~\cite{P1930,FR1975}.
This can be generalised to \ELA.
(The same result can be derived from the analysis of \cite[Theorem 3.1]{BJW:ACM:2005}).

\begin{restatable}{theorem}{logicNP}
	\label{thm:logic:NP}
	The existential fragment \ELA of \LA is in \NP.
\end{restatable}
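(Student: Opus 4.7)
The plan is to reduce \ELA satisfiability to feasibility of a mixed integer--rational linear program, a problem well known to be in \NP, and hence conclude the desired upper bound. The matching \NP lower bound already holds for existential Presburger arithmetic.

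The first step is to linearize the floor and fractional-part constructors. Scanning the input formula bottom up, for each subterm of the form $\floor t$ introduce a fresh variable $y_t$ constrained to be integer valued, replace $\floor t$ by $y_t$, and add the defining atoms $y_t \le t$ and $t < y_t + 1$. For each subterm of the form $\fractof t$ introduce an auxiliary rational variable $z_t$ with $0 \le z_t < 1$, replace $\fractof t$ by $z_t$, and add $t = y_t + z_t$, reusing $y_t$ when $\floor t$ also occurs and introducing a fresh integer variable otherwise. Processing bottom up ensures that by the time we handle an occurrence, its argument has itself already been linearized. The transformation is polynomial in the size of the input, preserves equisatisfiability, and yields an equivalent existential formula whose terms are honest linear combinations of variables and whose atoms are inequalities $s \le t$, with a designated subset of variables required to take integer values.

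Since the matrix of the transformed formula is a Boolean combination of linear inequalities, an \NP machine can guess the polarity of every atom and thereby reduce to deciding feasibility of a conjunctive mixed system $A x \le b$ in which some coordinates of $x$ are integer valued. I would then invoke the classical small model property for such systems: if feasible, they admit a solution all of whose rational and integer coordinates have bit size polynomial in the bit size of $A$ and $b$ (cf.~Borosh--Treybig in the pure integer case, extended to the mixed case via the integer hull of a rational polyhedron). The machine then guesses a witness and verifies it in polynomial time. The main obstacle is to ensure that the successive introduction of auxiliary integer variables for nested floors does not cause the bit size of the intermediate terms to blow up; this is under control because each $y_t$ is bounded in magnitude by $|t|+1$, so the coefficients of the linearized system are polynomially related to the original ones and the standard \NP bound for mixed integer programming applies.
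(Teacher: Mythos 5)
Your proof is correct, but it takes a genuinely different route from the paper's. The paper never mixes integer and fractional quantities in a single atom: it normalises terms to a shallow form, eliminates $k\cdot x$ by iterated doubling, expands each atom $s\le t$ into a case split on $\floor{s},\floor{t}$ and $\fractof{s},\fractof{t}$, pushes $\floor{\cdot}$ and $\fractof{\cdot}$ inward through sums by a case analysis on whether $\fractof{x}+\fractof{y}\ge 1$, and thereby produces a \emph{separated} existential formula whose integer part and fractional part can be handed to independent \NP{} decision procedures for \PA{} and \RA. You instead keep the two sorts mixed: you introduce a defining integer variable $y_t$ with $y_t\le t< y_t+1$ for each $\floor{t}$ (and $z_t=t-y_t$, $0\le z_t<1$ for each $\fractof{t}$), guess atom polarities, and appeal to the small-witness property of mixed integer--rational linear systems. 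What your approach buys is directness — no separation, no case analysis on sums, and no need for the doubling trick since binary coefficients are native to linear programming; what the paper's approach buys is that it only needs the classical \NP{} bounds for \emph{pure} existential \PA{} and \RA{} (the results it actually cites), and it yields a separated formula, connecting to the quantifier-elimination literature it discusses. One point you should make explicit: after guessing polarities your conjunctive system contains \emph{strict} inequalities (both from negated atoms and from the defining constraints $t<y_t+1$, $z_t<1$), so you need the small-model theorem for mixed systems with strict as well as non-strict inequalities, not literally for $Ax\le b$; this extension is standard (e.g.\ by maximising a slack $\varepsilon$ over the mixed-integer hull, which is again a rational polyhedron of polynomial facet complexity), but it is not covered by the statement you quote and should be argued or cited precisely.
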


\let\oldst=\st
\renewcommand{\st}{s.t.\xspace} 

\newcommand{\eqmod}[2]{\equiv {#2} \; (\!\!\!\!\!\!\mod {#1})}

\newcommand{\trans}{\textcolor{black}{r}}
\newcommand{\ttime}{\textcolor{black}{t}}

Let $\Aut = (\?C, \?X, \?R)$ be a $k$-\TA.
The \emph{ternary reachability relation} of $\Aut$
is the family of relations $\setsmall{\tareach {XY} {}}_{X, Y \in \?X}$,
where each $\tareach {XY} {} \subseteq \nnreals^\?C \x \nnreals \x \nnreals^\?C$
is defined as: $\mu \tareach {XY} \delta \nu$ iff $(X, \mu) \reach \delta (Y, \nu)$.
We say that the reachability relation is \emph{expressed} by a family of \LA formulas $\setsmall{\varphi_{XY}}_{X, Y \in \?X}$ if
\begin{align*}
	\mu \tareach {XY} \delta \nu \quad \textrm{ iff } \quad
		(\mu, \delta, \nu) \models \varphi_{XY}(\vec x, \ttime, \vec y),
		\textrm{ for every } X, Y \in \?X, \mu, \nu \in \nnreals^\?C, \delta \in \nnreals.
\end{align*}
In the formula $\varphi_{XY}(\vec x, \ttime, \vec y)$,
$\vec x$ are $k$ variables representing the clock values in location $X$ at the beginning of the run,
$\vec y$ are $k$ variables representing the clock values in location $Y$ at the end of the run,
and $\ttime$ is a single variable representing the total time elapsed during the run.
In the rest of this section, we assume that the \TA has only one clock $\?X = \set{x}$.

The main result of this section is that 1-\TA reachability relations are expressible by \ELA formulas constructible in polynomial time.

\begin{theorem}
	\label{thm:1-clock:TA}
	Let $\Aut$ be a 1-\TA.
	The reachability relation $\setsmall{\tareach {XY} {}}_{X, Y \in \?X}$
	is expressible as a family of formulas $\setsmall{\varphi_{XY}}_{X, Y \in \?X}$ of existential linear arithmetic \ELA
	in polynomial time.
\end{theorem}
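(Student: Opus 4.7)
The plan is to decompose any 1-clock run into reset-free segments separated by resets, and to express each piece by a polynomial-size \ELA formula. As a preprocessing step I would normalize $\Aut$ so that every transition either resets the clock or tests it without resetting, and split $\varphi_{XY}(x, \ttime, y)$ into two disjuncts depending on whether the run contains any reset at all.

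For the reset-free disjunct, the constants appearing in the guards of $\Aut$ are polynomially many even under binary encoding, and they partition $\nnreals$ into polynomially many regions. The reset-free reachability relation $P_{XY}(x, y, t)$ is then expressible as a polynomial-size \QFLA formula: along any reset-free path, the clock values at transition points form a nondecreasing sequence in the respective guard intervals, so for each fixed pair of start/end regions the set of feasible $(x, y)$ is a simple box with side condition $t = y - x$. A standard transitive closure over the non-resetting region graph computes such a formula for every pair of locations in polynomial time.

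For the disjunct containing at least one reset, I would split the run at its first and last reset transitions, yielding the decomposition $\ttime = t_1 + s + y$: an initial reset-free segment from $(X, x)$ to some $(X', x+t_1)$ ending with a resetting transition to $(W, 0)$; a middle piece of duration $s$ from $(W, 0)$ to $(W', 0)$ consisting of interleaved reset-to-reset moves; and a final reset-free segment of duration exactly $y$ from $(W', 0)$ to $(Y, y)$. To capture the middle piece, I would introduce a \emph{reset graph} $G$ whose vertices are locations and whose edges $U \to V$ are labelled by integer intervals $[a,b]$ (with endpoints among the constants of $\Aut$), meaning: from $(U, 0)$ one can reach some $(V', \tau)$ reset-freely for any $\tau \in [a,b]$ and then reset to $(V, 0)$ via a transition $V' \to V$. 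Using the precomputed $P$, the graph $G$ has polynomial size.

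The main obstacle is expressing ``there is a walk in $G$ from $W$ to $W'$ whose total edge duration equals $s$'' by a polynomial-size \ELA formula. For this I plan to use the classical polynomial-size existential-Presburger description of Parikh images of graph walks (Verma--Seidl--Schwentick): introduce for each edge $e$ a multiplicity $n_e \ge 0$, impose flow conservation at every vertex with the appropriate source/sink adjustment at $W$ and $W'$, add the standard polynomial connectedness gadget on the support of the $n_e$, and for each edge $e$ with label $[a_e,b_e]$ add a real ``total duration'' variable $T_e$ with $a_e n_e \le T_e \le b_e n_e$ and $s = \sum_e T_e$. Conjoining with $P_{XX'}(x, x+t_1, t_1)$, the guards of the first and last reset transitions, and $P_{W'Y}(0, y, y)$, and existentially quantifying over the finite choice of $W$, $W'$, $X'$, the two boundary reset transitions, the times $t_1$ and $s$, and the variables $n_e, T_e$, yields the target formula $\varphi_{XY}$ of polynomial size.
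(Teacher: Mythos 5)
Your proposal is correct and follows essentially the same route as the paper: an interval abstraction of the single clock, the Verma--Seidl--Schwentick existential-Presburger encoding of Parikh images of graph walks, and the key constraint that the total duration contributed by $n_e$ reset-to-reset hops through an interval with endpoints $a_e,b_e$ lies between $a_e n_e$ and $b_e n_e$ (the paper's constraint $z \in y\cdot\lambda$). The only details to tighten are that the feasible durations of a hop between a fixed pair of locations form a union of basic intervals rather than a single closed interval (so you need one edge per location pair per interval, and strict inequalities $a_e n_e < T_e < b_e n_e$ for open intervals); the paper avoids your first/last-reset case split by building one NFA over location--interval pairs that emits a letter per reset and folding the boundary segments into the single term $x'-x$.
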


\noindent
In the rest of the section we prove the theorem above.
We begin with some preliminaries.

\mypar{Interval abstraction.}
We replace the integer value of the clock $x$ by its interval \cite{LMS2004}.
Let $0 = k_0 < k_1 < \cdots < k_n < k_{n+1} = \infty$
be all integer constants appearing in constraints of $\Aut$,
and let the set of intervals be the following totally ordered set:
\begin{align*} 
	\Lambda = \set{ \set {k_0} < (k_0, k_1) < \set {k_1} < \cdots < (k_{n-1}, k_n) < \set {k_n} < (k_n, k_{n+1})}.
\end{align*}
Clearly, we can resolve any constraint of $\Aut$
by looking at the interval $\lambda \in \Lambda$.
We write $\lambda \models \varphi$ whenever $v \models \varphi$ for some $v \in \lambda$
(whose choice does not matter by the definition of $\lambda$).

\ignore{
\mypar{Fractional region abstraction.}
We introduce an extra clock $x_0$ which is $0$ at the beginning, is never reset, and never appears in any constraint.
It is used only to keep track of the fractional elapse of time,
and its relationship with the fractional part of clock $x$.
We distinguish the following 6 fractional regions relating the fractional parts of $x_0$ and $x$:
\begin{align*}
	\xi_0(x_0, x) \;&\equiv\; 0 = \fractof {x_0} = \fractof x < 1
	& \xi_1(x_0, x) \;&\equiv\; 0 = \fractof {x_0} < \fractof x < 1 \\
	\xi_2(x_0, x) \;&\equiv\; 0 < \fractof {x_0} = \fractof x < 1
	&\xi_3(x_0, x) \;&\equiv\; 0 < \fractof {x_0} < \fractof x < 1 \\
	\xi_4(x_0, x) \;&\equiv\; 0 = \fractof {x} < \fractof {x_0} < 1
	&\xi_5(x_0, x) \;&\equiv\; 0 < \fractof {x} < \fractof {x_0} < 1.
\end{align*}
Let $\Xi = \set {\xi_0, \dots, \xi_5}$.
When time elapses, and when $x$ is reset,
we move deterministically from region to region according to the following rules:
\begin{align*}
	&\xi_0 \goesto {\elapse} \xi_2 \goesto {\elapse} \xi_0
	&&\xi_1 \goesto {\elapse} \xi_3 \goesto {\elapse} \xi_4 \goesto {\elapse} \xi_5 \goesto {\elapse} \xi_1 \\
	&\resetop {\xi_0} = \resetop {\xi_1} = \xi_0
	&& \resetop {\xi_2} = \resetop {\xi_3} = \resetop {\xi_4} = \resetop {\xi_5} = \xi_4.
\end{align*}
}

\mypar{The construction.}
Let $\Aut = (\set{x}, \?X, \?R)$ be a \TA.
In order to simplify the presentation below,
we assume w.l.o.g.~that the only clock updates are resets $x := 0$ (cf.~footnote \ref{footnote:sugar}).
We build an \NFA $\Baut = (\Sigma, Q, \to)$
where $\Sigma$ contains symbols $(r, \varepsilon)$ and $(r, \tick \lambda)$
for every transition $r \in \?R$ of $\Aut$ and interval $\lambda \in \Lambda$,
and an additional symbol $\tau$ representing time elapse,
and $Q = \?X \times \Lambda$ 
is a set of states of the form $(X, \lambda)$,
where $X \in \?X$ is a control location of $\Aut$
and $\lambda \in \Lambda $ is an interval.
Transitions $\to \subseteq Q \x \Sigma \x Q$ are defined as follows.
A rule $r = X \Rule {\varphi; R} Y \in \?R$ of $\Aut$ generates one or more transitions in $\Baut$
of the form $$(X, \lambda) \goesto {(r, a)} (Y, \mu)$$
whenever $\lambda\models \varphi$ and any of the following two conditions is satisfied:
\begin{itemize}
	\item the clock is not reset i.e $R$ is equal $x:=x$, and $\mu = \lambda, a = \varepsilon$, or
	\item the clock is reset $x:= 0$, $\mu = \set 0$, 
	and the automaton emits a tick $a = \tick \lambda$.
\end{itemize}
A time elapse transition is simulated in $\Baut$ by transitions of the form
\begin{align*}
	(X, \lambda) &\goesto \tau (X, \mu), \qquad
	\lambda \leq \mu \textrm{ (the total ordering on intervals)}. 
\end{align*}

\mypar{Reachability relation of $\Aut$.}

For a set of finite words $L \subseteq \Sigma^*$,
let $\psi_L(\vec y)$ be a formula of existential Presburger arithmetic 
with a free integral variable $y_\lambda$ for every interval $\lambda \in \Lambda$
counting the number of symbols of the form $(\trans, \tick \lambda)$, for some $\trans \in \?R$.
The formula $\psi_L$ can be computed from the Parikh image of $L$:
By~{\cite[Theorem 4]{VSS05}}, a formula $\tilde\psi_L(\vec z)$ of existential Presburger arithmetic
can be computed in linear time from an \NFA (or even a context-free grammar) recognising $L$,
and then one just defines
$\psi_L(\vec y) \equiv \exists \vec z \cdot \tilde\psi_L(\vec z) \wedge \bigwedge_{\lambda \in \Lambda} y_\lambda = \sum_{\trans \in \?R} z_{\trans, \lambda}$.
Let $L_{cd}$ be the regular language recognised by $\Baut$ by making $c$ initial and $d$ final,
and let $\Lambda = \set{\lambda_0, \dots, \lambda_{2n+1}}$ contain $2n+2$ intervals.
Let $\psi_{cd}(x, \ttime, x')$ 
be a formula of existential Presburger arithmetic computing the total elapsed time $\ttime$,
given the initial $x$ and final $x'$ values of the unique clock:
\begin{align*}
	\psi_{cd} (x, \ttime, x') \; \equiv \;
		&\exists y_0, \dots, y_{2n+1} \cdot \psi_{L_{cd}}(\floor {y_0}, \dots, \floor {y_{2n+1}}) \ \wedge \ \\
		&\exists z_0, \dots, z_{2n+1} \cdot \bigwedge_{\lambda_i \in \Lambda} (z_i \in y_i \cdot \lambda_i) \ \wedge \ 
		\ttime = x' - x + \sum_{\lambda_i \in \Lambda} z_i, \textrm{ where } \\
	z \in y \cdot \lambda \; \equiv \; &\left\{
	\begin{array}{ll}
		a \cdot y < z < b \cdot y & \textrm{ if } \lambda = (a, b), \\
		z = a \cdot y & \textrm{ if } \lambda = \set {a}.
	\end{array}\right.
\end{align*}
Intuitively, $y_i$ represents the total number of times the clock is reset while in interval $\lambda_i$,
and $z_i$ represents the sum of the values of the clock when it is reset in interval $\lambda_i$.
For control locations $X, Y$ of $\Aut$, let
\begin{align*}
	\varphi_{XY}(x, \ttime, x') \; &\equiv \;
		\bigvee_{\lambda, \mu \in \Lambda} \setof {x \in \lambda \wedge x' \in \mu \wedge \psi_{cd}(x, \ttime, x')}
			{c = (X, \lambda),
			d = (Y, \mu)}.
\end{align*}

\noindent
The correctness of the construction is stated below.

\begin{restatable}{lemma}{TAreachLemma}
	For every configurations $(X, u)$ and $(Y, v)$ of $\Aut$
	and total time elapse $\delta \geq 0$,
	\begin{align*}
		u \tareach {XY} \delta v \quad \textrm{ iff } \quad (u, \delta, v) \models \varphi_{XY} (x, \ttime, x').
	\end{align*}
\end{restatable}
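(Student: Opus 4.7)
The plan is to prove both directions by relating concrete timed runs in $\Aut$ to accepting runs of $\Baut$, using the key identity: if a run of $\Aut$ starts with clock value $u$, ends with clock value $v$, and along the way the clock is reset at values $c_1,\dots,c_m$ (in the order they occur), then the total elapsed time equals $(v-u)+\sum_{i=1}^m c_i$. This follows because between consecutive events the clock grows linearly; before the first reset time elapses by $c_1-u$, between reset $j$ and reset $j{+}1$ it elapses by $c_{j+1}$, and after the last reset it elapses by $v$. This identity is exactly the shape of the equation $\ttime = x' - x + \sum_i z_i$ inside $\psi_{cd}$.

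For the forward direction, suppose $u \tareach{XY}\delta v$ via a concrete run $\pi$ in $\Aut$. Abstract $\pi$ into a word $w$ over $\Sigma$: each non-reset rule firing contributes $(r,\varepsilon)$; each reset rule firing contributes $(r,\tick \lambda)$ where $\lambda\in\Lambda$ is the interval containing the clock just before the reset; each time elapse contributes a sequence of $\tau$ symbols reflecting the (weakly increasing) intervals traversed. The intervals $\lambda,\mu$ of $u$ and $v$ determine $c=(X,\lambda)$, $d=(Y,\mu)$, and by construction $w\in L_{cd}$. Letting $y_i$ be the number of tick symbols $\tick{\lambda_i}$ in $w$ and $z_i$ the sum of the concrete reset values that produced them, one has $z_i\in y_i\cdot\lambda_i$ (since each such value lies in $\lambda_i$), and $\psi_{L_{cd}}(y_0,\dots,y_{2n+1})$ holds because $w\in L_{cd}$ and $\tilde\psi_{L_{cd}}$ captures the Parikh image of $L_{cd}$. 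Combined with the key identity, $\delta = v - u + \sum_i z_i$, so $\varphi_{XY}(u,\delta,v)$ is satisfied in the disjunct indexed by $(\lambda,\mu)$.

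For the backward direction, suppose $\varphi_{XY}(u,\delta,v)$ holds via intervals $\lambda,\mu$ with $u\in\lambda$, $v\in\mu$, and witnesses $y_i,z_i$ with $z_i\in y_i\cdot\lambda_i$ for all $i$. By correctness of the Parikh construction there exists a word $w\in L_{cd}$ with exactly $y_i$ tick symbols of type $\tick{\lambda_i}$, and hence an accepting run of $\Baut$ labelled by $w$ from $(X,\lambda)$ to $(Y,\mu)$. This run prescribes the sequence of rules to fire in $\Aut$, the interval the clock must lie in at every step, and at which resets it must be in which interval $\lambda_i$. To realize it concretely, pick, for each $i$, $y_i$ real numbers in $\lambda_i$ summing to $z_i$ (possible since $z_i\in y_i\cdot\lambda_i$: for a singleton interval $\{a\}$ the choice is forced, and for an open interval $(a,b)$ any partition of $z_i$ into $y_i$ values in $(a,b)$ works); assign these values, in the order dictated by $w$, as the clock values at the successive resets. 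Between consecutive events we set the delay so that the clock reaches the required next value; the $\tau$-transitions in $\Baut$, which allow moving to any weakly larger interval, guarantee that the prescribed interval sequence is compatible with this choice. Invoking the key identity once more, the total delay of the resulting run is $v-u+\sum_i z_i = \delta$, so $u\tareach{XY}\delta v$.

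The main obstacle is the backward direction, where the Parikh abstraction forgets the order of resets within a given interval and the individual reset values are collapsed into a single sum $z_i$. The nontrivial point to verify is that any decomposition of $z_i$ into $y_i$ reals in $\lambda_i$ can be consistently woven into the concrete run dictated by $w$, without violating any constraint $\varphi$ on the intermediate rule firings or the monotonicity of time. The encoding of $\tau$ as any interval-increasing transition in $\Baut$, together with the fact that all non-reset rules require only that the current interval satisfies $\varphi$ (a condition already enforced by the path in $\Baut$), makes this routine once formulated carefully.
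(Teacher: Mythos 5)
Your proposal is correct and follows essentially the same route as the paper's proof: both directions translate between concrete runs of $\Aut$ and runs of $\Baut$ (the paper packages this correspondence as a timed-abstract bisimulation with a ``for every successor'' strengthening on time elapses), both rely on the identity that the total elapsed time equals $(v-u)$ plus the sum of the clock values at the resets, and both handle the backward direction by decomposing each $z_i$ into $y_i$ values in $\lambda_i$ and threading them into a concrete run. The point you flag as the main obstacle --- realizability of the prescribed interval sequence --- is exactly the part the paper discharges via the strengthened bisimulation conditions.
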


We conclude this section by applying \cref{thm:1-clock:TA} to solve the 1-\TA ternary reachability problem.
The \emph{ternary reachability problem} takes as input a \TA $\Aut$ as above,
with two distinguished control locations $X, Y \in \?X$,
and a total duration $\delta \in \Q$ (encoded in binary),
and asks whether $(\vec{0}) \tareach {XY} \delta (\vec{0})$.
The result below shows that computing 1-\TA reachability relations is optimal in order to solve the ternary reachability problem.
\begin{theorem}
    \label{thm:1-ta-ternary-reachability}
    The ternary reachability problem for 1-\TA is \NP-complete.
\end{theorem}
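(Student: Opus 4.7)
\textbf{Upper bound.} The plan is to compose the two polynomial-time/nondeterministic ingredients already established in this section. First I would invoke \cref{thm:1-clock:TA} to build, in polynomial time, the existential linear-arithmetic formula $\varphi_{XY}(x,\ttime,x')$ that expresses the reachability relation $\tareach{XY}{}$. The ternary reachability instance $(\vec 0)\tareach{XY}{\delta}(\vec 0)$ then reduces to deciding satisfiability of the \ELA sentence $\varphi_{XY}(0,\delta,0)$, in which the binary-encoded rational $\delta$ is plugged in as a constant. Applying \cref{thm:logic:NP} gives an \NP decision procedure overall.

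\textbf{Lower bound.} I would prove \NP-hardness by a polynomial reduction from \textsc{Subset Sum}: given positive integers $a_1,\dots,a_n$ (in binary) and target $T$, decide whether some $S\subseteq\set{1,\dots,n}$ satisfies $\sum_{i\in S}a_i=T$. Construct a 1-\TA $\Aut$ with one clock $x$ and locations $q_0,p_1,q_1,\dots,p_n,q_n$. At each $q_{i-1}$ the clock value will equal $0$, and we encode the binary choice for index~$i$ by two kinds of rules: a \emph{skip} rule $q_{i-1}\Rule{}q_i$ (taken instantaneously with $x=0$), and an \emph{include} rule pair $q_{i-1}\Rule{x:=0}p_i$ followed by $p_i\Rule{x=a_i;\,x:=0}q_i$, which forces exactly $a_i$ units of time to elapse inside the gadget and leaves $x=0$ at $q_i$. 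Because discrete steps take no time and time only elapses inside ``include'' gadgets, any run from $(q_0,0)$ to $(q_n,0)$ of duration exactly $\delta$ corresponds to choosing a subset $S$ with $\sum_{i\in S}a_i=\delta$. Setting $\delta:=T$, we have $(\vec 0)\tareach{q_0q_n}{T}(\vec 0)$ iff the \textsc{Subset Sum} instance is positive. Since $a_i$ and $T$ are given in binary and the clock constants of $\Aut$ are also permitted in binary encoding, the reduction runs in polynomial time.

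\textbf{Main obstacle.} The upper bound is a direct composition of earlier results, so essentially no new work. The only delicate point in the lower bound is guaranteeing that \emph{all} delay happens inside include-gadgets and that between consecutive $q_i$'s the clock is genuinely $0$, so that the single clock really tracks ``time elapsed since the last reset'' and not some other quantity. The reset-on-entry to $p_i$ together with the equality test $x=a_i$ on exit is exactly what pins this down; no further number-theoretic or geometric argument is needed. An alternative, equally short, reduction would go from 3-\textsc{Partition} or directly from the \NP-hardness of \ELA satisfiability, but \textsc{Subset Sum} seems the cleanest match for the ``exact duration'' semantics of the ternary problem.
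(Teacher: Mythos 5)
Your upper bound is exactly the paper's: compose \cref{thm:1-clock:TA} with \cref{thm:logic:NP}. The lower bound targets the same reduction (from \textsc{SubsetSum}, as in the paper's proof), but your gadget has a genuine bug, and it sits precisely at the point you flag as ``the only delicate point''. The include-entry rule $q_{i-1}\Rule{x:=0}p_i$ is unguarded, so nothing prevents an arbitrary delay $t>0$ at $q_{i-1}$ before it fires; the reset then \emph{erases} that delay from the clock while it still counts towards the total duration $\delta$. Concretely, take $a_1=1$, $a_2=4$ and target $T=3$: the run that waits $2$ time units in $q_0$, enters the include gadget for $a_1$ (reset, elapse $1$, exit with reset), and then skips to $q_2$ with $x=0$ has total duration $3$ and final clock value $0$, yet no subset sums to $3$. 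So the claimed equivalence ``run of duration $\delta$ iff subset summing to $\delta$'' fails in the left-to-right direction; your assertion that ``time only elapses inside include gadgets'' does not hold, since delay transitions are available in every location and the reset-on-entry hides wasted time rather than forbidding it.

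The fix is small and is what the paper's gadget does: make \emph{every} transition leaving a location test the clock, so that all elapsed time is accounted for at the moment of leaving. For instance, use a single location per index with the two outgoing rules $X_{i-1}\Rule{x=0}X_i$ and $X_{i-1}\Rule{x=a_i;\,x:=0}X_i$ (equivalently, add the guard $x=0$ to your include-entry rule, after which its reset is redundant). Then the time spent in each location is exactly $0$ or $a_i$, the final clock value is $0$, and the total duration is exactly the chosen subset's sum. With that correction your argument coincides with the paper's.
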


\begin{proof}
	For the upper bound, apply \cref{thm:1-clock:TA} to construct in polynomial time a formula of \ELA 
	expressing the reachability relation and check satisfiability in \NP thanks to \cref{thm:logic:NP}.
	
	The lower bound can be seen by reduction from \textsc{SubsetSum}.
	Let $\?S = \set{a_1, \dots, a_k} \subseteq \N$ and $a \in \N$ be the input to the subset sum problem,
	whereby we look for a subset $\?S' \subseteq \?S$ s.t.~$a = \sum_{b \in \?S'} b$.
	We construct a \TA with a single clock $x$ and locations $\?X = \set{X_0, \dots, X_{k}}$,
	where $X_0$ is the initial location and $X_k$ the target.
	A path through the system describes a subset by spending exactly $0$ or $a_i$ time in location $X_i$
	(see \cref{fig:1clock-ta-np}).
	In the constructed automaton,
	$(X_1,0)\reach {a}(X_k,0)$ iff the subset sum instance was positive.
	\begin{figure}[htpb]
	\centering
	\includegraphics[width=0.8\linewidth]{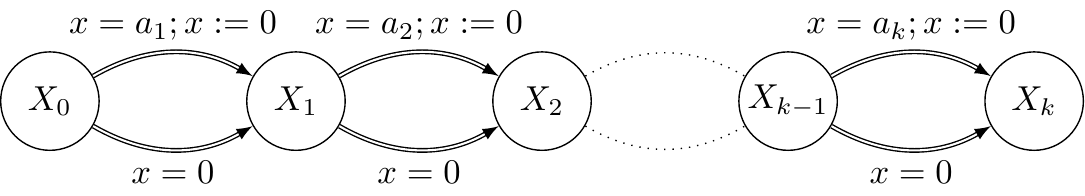}
	\caption{Reduction from subset sum to 1-\TA (ternary) reachability.
		We have $(X_0, 0)\reach{t} (X_k, 0)$ iff
		$t = \sum_{b \in \?S'} b$
		for some subset $\?S' \subseteq \set{a_1, \dots, a_k}$.
	}
	\label{fig:1clock-ta-np}
	\end{figure}
\end{proof}

\let\st=\oldst

\section{One-Clock \TBPP}
\label{sec:oneclock}
As a warm-up we note that the simple coverability problem for $1$-\TBPP,
where the target has size one, is inter-reducible with the reachability problem for $1$-clock \TA and hence \NL-complete \cite{LMS2004}.

\begin{theorem}
    \label{thm:simple-cover-1clock}
    The simple coverability problem for 1-clock \TBPP is \NL-complete.
\end{theorem}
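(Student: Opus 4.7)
The plan is to prove \NL-completeness by reducing the simple coverability problem for 1-\TBPP to the reachability problem for 1-clock \TA, and vice versa, and then appealing to the \NL-completeness result for 1-clock \TA of \cite{LMS2004}.

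For the upper bound, I would exploit the fact that, for coverability, only a single process has to actually reach the target: all the other processes ever created may simply linger in the final configuration, since coverability is upward-closed. Given a 1-\TBPP $(\set{x}, \?X, \?R)$ with start $X$ and target $T$, I build a 1-\TA $\Aut$ with the same clock and the same locations, where I replace each branching rule $X \Rule{\varphi; R} Y \msplus Z$ by the two \TA rules $X \Rule{\varphi; R} Y$ and $X \Rule{\varphi; R} Z$, keep every non-branching rule $X \Rule{\varphi; R} Y$ as it is, and drop every vanishing rule (the corresponding lineage cannot contribute to covering $T$). I then ask whether $(X, 0) \reach{*} (T, 0)$ in $\Aut$. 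Correctness is straightforward in both directions: any run of $\Aut$ witnessing $(X,0) \reach{*} (T,0)$ can be lifted to a run of the \TBPP by keeping, at each branching step, both children and letting the unused sibling idle via time elapses only (no discrete rule is ever forced); conversely, from any covering run in the \TBPP one extracts the ancestral lineage of the chosen $(T,0)$-witness, which yields precisely a run in $\Aut$. Since reachability of a specific clock value $0$ in a 1-clock \TA is \NL (e.g.~by first adding a new location $T'$ and a rule $T \Rule{x = 0} T'$ and asking for location reachability of $T'$, which is \NL-complete by \cite{LMS2004}), the upper bound follows.

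The lower bound is immediate from the fact that a 1-clock \TA is a 1-\TBPP with no branching and no vanishing rules, and in this setting the unique surviving process in every reachable configuration coincides with the usual notion of \TA configuration; hence covering $(T,0)$ from $(X,0)$ in the 1-\TBPP is equivalent to reaching $(T,0)$ from $(X,0)$ in the underlying 1-\TA, which is \NL-hard by \cite{LMS2004}.

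The only mild subtlety, and the place to be careful, is the justification that sibling processes can indeed always be left idle: this relies on the observation that time-elapse transitions are never blocked and that discrete rules are always optional, so that projecting a \TBPP run onto a single ancestral lineage yields a bona fide \TA run, and conversely lifting a \TA run preserves all constraints since the siblings contribute nothing beyond their mere presence in the final multiset.
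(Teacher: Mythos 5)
Your proposal is correct and follows essentially the same approach as the paper: replace each branching rule by two linear rules to obtain a 1-clock \TA, observe that $(X,\mu)\reach{*}(T,\nu)$ in the \TA iff $(X,\mu)\reach{*}(T,\nu)\msplus\gamma$ for some $\gamma$ in the \TBPP (siblings can idle since coverability imposes no constraint on them), and invoke the \NL-completeness of 1-clock \TA reachability; the lower bound is likewise immediate since 1-\TBPP generalise 1-\TA.
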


\begin{proof}
    The lower bound is trivial since $1$-\TBPP generalize $1$-\TA.
    For the other direction we can transform a given \TBPP
    into a \TA by replacing branching rules of the form $X \Rule{\varphi,R} Y \msplus Z$
    with two rules $X \Rule{\varphi, R} Y$ and $X \Rule{\varphi, R} Z$.
    In the constructed \TA we have
    $(X,\mu)\reach{*}(Y,\nu)$ if, and only if,
    $(X,\mu)\reach{*}(Y,\nu)\msplus \gamma$ for some $\gamma$ in the original \TBPP.
\end{proof}
The construction above works because the target is a single process and so there are no constraints on the other processes in $\gamma$, which were produced as side-effects by the branching rules.
The (non-simple) 1-\TBPP coverability problem is in fact \NP-complete.
Indeed, even for untimed \BPP, coverability is \NP-hard,
and this holds already when target sets are encoded in unary (which is the setting we are considering here) \cite{E1997}. 
We show that for 1-\TBPP this lower bound already holds if the target has fixed size $2$.

\begin{lemma}
    \label{lem:cover-1clock-hard}
    Coverability is \NP-hard for 1-\TBPP
    already for target sets of size $\geq 2$.
\end{lemma}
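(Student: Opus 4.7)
I would reduce from \textsc{SubsetSum}: given $\?S = \set{a_1,\dots,a_k}\subseteq \N$ and $a\in\N$, decide whether some $\?S'\subseteq\?S$ satisfies $\sum_{b\in\?S'}b = a$. The idea is to reuse the 1-\TA subset-sum gadget already built in the proof of \cref{thm:1-ta-ternary-reachability} and plug it into a 1-\TBPP that runs a synchronising timer in parallel, using a single branching rule.

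Concretely, I construct a 1-\TBPP with initial nonterminal $S$ and the single branching rule $S\Rule{}Y\msplus W$; since $S$ starts with clock $0$, both children inherit clock $0$. The process $Y$ then executes the 1-\TA of \cref{thm:1-ta-ternary-reachability} (with initial location $X_0$ and final location $X_k$), extended by a fresh rule $X_k\Rule{x=0}T_1$; by the correctness of the gadget, $(T_1,0)$ can appear in $Y$'s subtree precisely at those global times $t$ that are subset sums of $\?S$. In parallel, the timer $W$ fires the single rule $W\Rule{x=a;\,x:=0}T_2$, so $(T_2,0)$ appears in $W$'s subtree only at global time $a$. The coverability target is $(T_1,0)\msplus(T_2,0)$, which has size $2$.

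For correctness, the key observation is that $T_1$ and $T_2$ have no outgoing rules, so once a process enters $T_i$ its clock strictly increases with each subsequent time elapse. Consequently, a configuration containing both $(T_1,0)$ and $(T_2,0)$ can only be produced in a single zero-time discrete burst, at a global time that is simultaneously a subset sum of $\?S$ (forced by $Y$) and equal to $a$ (forced by $W$). Conversely, whenever such a subset exists, one steers $Y$ to reach $(X_k,0)$ exactly at global time $a$ and then interleaves the two final discrete transitions in zero time.

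The main obstacle is synchronising the two parallel subtrees despite each process carrying only a \emph{local} clock: the branching rule is what forces both children to share a common starting point ($S$ spawns them simultaneously), and the guards $x=0$ and $x=a$ transfer this synchronisation to the two target transitions. This is also precisely why target size $1$ does not suffice (cf.~\cref{thm:simple-cover-1clock}) but size $2$ already does.
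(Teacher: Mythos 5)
Your proposal is correct and follows essentially the same route as the paper: a reduction from subset sum that reuses the 1-\TA gadget of \cref{thm:1-ta-ternary-reachability} and adds a single initial branching rule spawning a parallel timer process, the only (cosmetic) difference being that the paper takes $(X_k,0)\msplus(Y,t)$ directly as the target while you normalise to zero-valued target clocks via extra guarded rules, which the preliminaries already justify as w.l.o.g. One minor fix: your branching rule $S\Rule{}Y\msplus W$ should carry the guard $x=0$ (as in the paper's rule $S \Rule{x=0} X_0 \msplus Y$), since otherwise time may elapse in $S$ before branching and your claim that both children inherit clock $0$ would not be forced.
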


\newcommand{\setA}{\mathbb{A}}
\newcommand{\setB}{\mathbb{B}}

\begin{proof}
    We proceed by reduction from subset sum as in \cref{thm:1-ta-ternary-reachability}.
    The only difference will be that here, we use one extra process to keep track of the total elapsed time.
    Let $\?S = \set{a_1, \dots, a_k} \subseteq \N$ and $t \in \N$ be the input to the subset sum problem.
    We construct a 1-\TBPP with nonterminals $\?X = \set{S, X_0, \dots, X_k, Y}$,
    where $S$ is the initial nonterminal and $Y$ will be used to keep track of the total time elapsed.
    The rules are as in the proof of \cref{thm:1-ta-ternary-reachability},
    and additionally we have an initial branching rule $S \Rule{x=0} X_0 \msplus Y$.
    We have $(S, 0) \rcovers(X_k, 0)\msplus (Y, t)$ if, and only if,
    $t = \sum_{b \in \setB} b$ for some subset $\?S'\subseteq \?S$.
\end{proof}

In the remainder of this section we will argue (\cref{thm:reach-1clock}) that a matching \NP upper bound even holds for the \emph{reachability} problem for $1$-\TBPP.
Let us first motivate the key idea behind the construction. 
Consider the following \TBPP coverability query:
\begin{align}
    \label{eq:cov:query}
    \tag{\dag}
    (S, 0) \rcovers (A, 0)\msplus(B, 0).
\end{align}
If \eqref{eq:cov:query} holds,
then 
there is a derivation tree witnessing that $(S, 0) \reach{*} (A, 0) \msplus (B, 0) \msplus \gamma$
for some configuration $\gamma$.
The least common ancestor of leaves $(A,0)$ and $(B,0)$
is some process $(C,c)\in (\?X\x\nnreals)$.
Consider the \TA $\Aut$ obtained from the \TBPP by replacing branching rules $X \Rule{\varphi;R} X_i \msplus X_j$ with linear rules $X \Rule{\varphi;R} X_i$ 
and $X \Rule{\varphi;R} X_j$,
and let the reachability relation of $\Aut$ be expressed by \ELA formulas $\set{\varphi_{XY}}_{X, Y \in \?X}$,
which are of polynomial size by \cref{thm:1-clock:TA}.
Then our original coverability query \eqref{eq:cov:query} is equivalent to satisfiability the following \ELA formula:
\begin{align*}
    \psi \;\equiv\;
    \exists t_0, t_1, c \in \R \cdot \left(
        \varphi_{SC}(0, t_0, c) ~\land
            \varphi_{CA}(c, t_1, 0) \land
                \varphi_{CB}(c, t_1, 0) 
            \right).
\end{align*}
More generally, for any coverability query $(S, 0) \rcovers \alpha$
the number of common ancestors is linear in $\size \alpha$,
and thus we obtain a \ELA formula $\psi$ of polynomial size,
whose satisfiability we can check in \NP thanks to \cref{thm:logic:NP}.

\begin{theorem}
    \label{thm:cover-1clock}
    The coverability problem for 1-clock \TBPP is \NP-complete.
\end{theorem}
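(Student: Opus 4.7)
The \NP lower bound is already established by \cref{lem:cover-1clock-hard}. For the matching upper bound, the plan is to extend the idea sketched above \eqref{eq:cov:query} from the case of two target processes to an arbitrary target $\alpha = (T_1, \vec 0) \msplus \cdots \msplus (T_n, \vec 0)$ of (unary) size $n$. The \NP certificate will be a \emph{reduced derivation tree} together with an \ELA formula whose satisfiability encodes that the tree can be realised by a concrete run of the \TBPP.

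Non-deterministically, in polynomial time, I would guess the skeleton of the tree: a binary tree with $n$ leaves and at most $n - 1$ internal nodes, a bijection between leaves and target processes, and, at each internal node $v$, a nonterminal $C_v \in \?X$ together with a branching rule $C_v \Rule{\varphi_v; R_v} D_v^L \msplus D_v^R \in \?R$ (plus a choice of which output enters which subtree of $v$); the root is labelled $S$. Let $\Aut$ denote the 1-\TA obtained from the \TBPP by replacing every branching rule $X \Rule{\varphi; R} Y \msplus Z$ by the two linear rules $X \Rule{\varphi; R} Y$ and $X \Rule{\varphi; R} Z$, and let $\setsmall{\varphi_{XY}}_{X, Y \in \?X}$ be the family of \ELA reachability formulas for $\Aut$ obtained from \cref{thm:1-clock:TA}. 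I then introduce existential real variables $\tau_v, \mu_v$ for every node $v$, standing for the global time and clock value of the active process at $v$, and one variable $T$ for the total duration. The certificate formula $\psi$ conjoins: $\tau_{\text{root}} = 0$ and $\mu_{\text{root}} = 0$; for every leaf $w$ associated with target nonterminal $T_k$, the constraints $\mu_w = 0$ and $\tau_w = T$, which enforce the crucial simultaneous reachability of all targets; for every internal node $v$, the guard $\mu_v \models \varphi_v$; and for every tree edge $v \to w$, the reachability atom $\varphi_{D, C_w}(\mu_v[R_v], \tau_w - \tau_v, \mu_w)$, where $D$ is the component of the branching rule fired at $v$ that enters the subtree containing $w$.

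The tree has $O(n)$ nodes, so $\psi$ uses only $O(n)$ copies of the formulas $\varphi_{XY}$, each of polynomial size by \cref{thm:1-clock:TA}; thus $\psi$ is altogether polynomial and its satisfiability can be decided in \NP by \cref{thm:logic:NP}. Soundness amounts to stitching together, along each edge, the \TA run witnessed by the corresponding $\varphi_{D, C_w}$ atom with the branching transition fired at $v$, yielding a \TBPP run of duration $T$ whose final multiset covers $\alpha$. I expect the main subtlety to be justifying why one may replace \TBPP branching by mere nondeterministic choice inside each reduced-tree edge: the point is that any processes produced by branching rules fired strictly between $v$ and $w$ and not belonging to the tree can simply be left anywhere in the final multiset, which is harmless precisely because coverability only requires $\geq \alpha$ rather than exact equality. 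This edge-level simplification relies crucially on coverability and will not be available verbatim for the exact reachability problem treated later in the paper.
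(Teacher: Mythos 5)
Your proof follows essentially the same route as the paper: the lower bound is delegated to \cref{lem:cover-1clock-hard}, and the upper bound guesses the tree of least common ancestors of the target leaves and encodes the path segments between branch points via the polynomial-size \ELA reachability formulas of \cref{thm:1-clock:TA}, with satisfiability checked in \NP by \cref{thm:logic:NP}, the side-effect processes being harmless precisely because coverability only asks for $\geq$. Your write-up is in fact slightly more explicit than the paper's sketch in that you also guess the branching rule fired at each internal node and which component enters which subtree, which is what guarantees that the two sub-runs leaving a common ancestor really arise from a single application of one branching rule.
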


In order to witness reachability instances we need to refine the argument above
to restrict the \TA in such a way that they do not accidentally produce processes
that cannot be removed in time. To illustrate this point, consider a
1-\TBPP with rules
$$
    X\Rule{x=0}Y\msplus Z
    \qquad \textrm{ and } \qquad
    Z\Rule{x>0}\zerro.
$$
Clearly
$(X,0)\reach{*}(Y,0)$ holds in the \TA with rules
$X\Rule{x=0}Y$ and $X\Rule{x=0}Z$ instead of the branching rule above.
In the \TBPP however,
$(X,0)$ cannot reach $(Y,0)$ because the
branching rule produces a process $(Z, 0)$,
which needs a positive amount of time to be rewritten to $\zerro$.

\begin{definition}
    For a nonterminal $X$ let
    $\VAN{X}\subseteq\R^2$ be the binary predicate such that
    $$\VAN{X}(x,t) \quad \textrm{ if } \quad (X,x)\reach {t}\zerro$$
\end{definition}
\noindent
Intuitively, $\VAN{X}(x,t)$ holds if the configuration $(X,x)$ can vanish in time at most $t$.

The time 
it takes to remove a processes $(Z,z)$
can be computed as the value of a one-clock priced timed game \cite{BLMR2006, HIM2013}.
These are two-player games played on $1$-clock \TA
where players aim to minimize/maximize the cost of a play leading up to a designated target state.
Nonnegative costs may be incurred either by taking transitions, or by letting time elapse.
In the latter case, the incurred cost is a linear function of time, determined by the current control-state.
Bouyer et al.~\cite{BLMR2006} prove that such games admit $\eps$-optimal strategies for both players, so have well-defined cost value functions
determining the best cost as a function of control-states and clock valuation.
They prove that these value functions are in fact piecewise-linear.
Hansen et al.~\cite{HIM2013} later show that the piecewise-linear description
has only polynomially many line segments and can be computed in polynomial time
\footnote{This observation was already made, without proof, in \cite[Sec.~7.2.2]{T2009}.}.
We derive the following lemma. 
\begin{lemma}
    \label{lem:rasmus}
    A \QFLA formula expressing $\VAN{X}$ is effectively computable in polynomial time.
    More precisely, there is a
    set $\?I$ of polynomially many consecutive intervals
    $\{a_0\}(a_0,a_1)\{a_1\}(a_1,a_2)\{a_2\},\ldots (a_k,\infty)$
so that 
$$\VAN{X}(x,t) \ \equiv\ \bigvee_{0\le i \le k} (x=a_i\land t\ge_i c_i)\ \lor\ (a_i < x < a_{i+1} \land t \ge_i c_i - b_i x),$$
where 
the $a_i,c_i\in\R$ can be represented using polynomially many bits,
$\ge_i \in\{\ge,>\}$ and 
$b_i\in\{0,1\}$,
for all $0\le i<k$.
\end{lemma}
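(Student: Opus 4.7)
The plan is to reduce the predicate $\VAN{X}(x,t)$ to the value of a 1-clock priced timed game and then apply \cite{HIM2013} to obtain the claimed polynomial piecewise-linear description.

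First, I define the auxiliary value function $V(X,x) \eqdef \inf\setof{t \ge 0}{(X,x)\reach{t}\zerro}$. Because the empty configuration stutters, if $(X,x)\reach{t_0}\zerro$ then $(X,x)\reach{t}\zerro$ for every $t \ge t_0$, so $\VAN{X}(x,\cdot)$ is upward closed. Hence the predicate is completely determined by $V(X,\cdot)$ together with, for each interval of the piecewise description, a single bit indicating whether the infimum is attained. This is exactly the data encoded by the tuple $(c_i,b_i,\ge_i)$ in the statement.

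Next, I interpret $V(X,\cdot)$ as the value of a 1-clock priced timed game $\?G$ in which every location has time-elapse cost rate $1$ and every discrete transition is free, so that total cost coincides with total elapsed time. The arena has one location per nonterminal of the \TBPP plus a sink target. Vanishing rules lead to the target, linear rules are ordinary minimizer moves, and a branching rule $X \Rule{\varphi;R} Y \msplus Z$ is modelled as a minimizer move (applying guard and reset) followed by a maximizer choice between $Y$ and $Z$. The key observation justifying the branching-as-maximizer encoding is that when two siblings $(Y,v) \msplus (Z,w)$ run in parallel, they schedule discrete transitions independently but share the global clock, so the minimal joint vanishing time is $\max(V(Y,v),V(Z,w))$: once the faster subtree disappears the other continues alone, and the empty multiset stutters through any leftover delay. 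The inductive computation of $V(X,x)$ therefore matches the min-max recursion solved by $\?G$.

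I then invoke the piecewise-linear characterisation of value functions for 1-clock priced timed games \cite{BLMR2006,HIM2013}: for each location $X$, $V(X,\cdot)$ is piecewise linear with only polynomially many pieces, and its description (breakpoints, slopes, intercepts) is computable in polynomial time and representable using polynomially many bits. Because the price rate equals $1$ and clocks are never scaled, an inspection of optimal strategies on each piece shows that the optimal move is either to fire immediately (giving slope $0$) or to wait until a fixed clock value and then fire (giving slope $-1$); this forces $b_i\in\{0,1\}$. The flag $\ge_i\in\{\ge,>\}$ for each piece records whether the infimum $V(X,x)$ is attained, which is locally determined by whether the last guard fired on the witnessing strategy is strict.

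The step I expect to require the most care is the rigorous justification of the max-of-children semantics in the presence of open guards, so that the infima produced by the min-max recursion compose correctly with the attainability bits assigned to individual pieces; once this bookkeeping is settled, assembling the \QFLA formula of the claimed shape directly from the piecewise description of $V(X,\cdot)$ is routine.
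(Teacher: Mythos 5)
Your proposal follows essentially the same route as the paper: both reduce $\VAN{X}$ to a one-clock priced timed game with unit time-cost and free transitions, in which Min picks derivation rules and Max picks which sibling of a branching rule survives, and then invoke the polynomial piecewise-linear value-function bound of Hansen et al.~\cite{HIM2013} to read off the claimed \QFLA description with slopes in $\{0,-1\}$. The only points the paper makes explicit that you leave implicit are that Max must be forbidden from elapsing time in her intermediate states (so both siblings start from the same clock value, which is why the cited theorem needs a minor adaptation) and that attainability is tracked by working with outcome \emph{formulae} rather than bare value functions, which is exactly the bookkeeping you flag as the delicate step.
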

\begin{proof}[Proof (Sketch)]
    One can construct a one-clock priced timed game
    in which minimizer's strategies correspond to derivation trees.
    To do this, let unary rules $X\Rule{\varphi; R}Y$ carry over as transitions between (minimizer) states $X,Y$;
    vanishing rules $X\Rule{\varphi}\emptyset$ are replaced by transitions leading to a new target state $\bot$, which has a clock-resetting self-loop. 
    Branching rules $X\Rule{\varphi; R}Y\msplus Z$ can be implemented by 
    rules $X\Rule{\varphi}[Y,Z,\varphi]$, $[Y,Z,\varphi]\Rule{\varphi; R}Y$ and $[Y,Z,\varphi]\Rule{\varphi; R}Z$,
    where $X,Y,Z$ are minimizer states and $[Y,Z,\varphi]$ is a maximizer state.
    The cost of staying in a state is $1$,
    transitions carry no costs.
    Moreover, we need to prevent maximizer from elapsing time from the states she controls.
    For this reason, we consider an extension of price timed games where maximizer cannot elapse time.
    In the constructed game, minimizer has a strategy to reach $(\bot,0)$ from $(X,x)$ at cost $t$
    iff $(X,x)\reach{t}\emptyset$. 
    The result now follows from~\cite[Theorem~4.11]{HIM2013}
    (with minor adaptations in order to consider the more restrictive case where maximizer cannot elapse time)
    that computes value functions for the cost of reachability in priced timed games.
    These are piecewise-linear with only polynomially many line segments
    of slopes $0$ or $1$ which allows to present $\VAN{X}$ in \QFLA as stated.
    
    A more detailed and direct construction can be found in \cref{sec:a_proof_of_lem}.
\end{proof}

\Cref{lem:rasmus} allows us to compute a polynomial number of intervals $\?I$ sufficient to describe the $\VAN{X}$ predicates.
We will call a pair $(X,I)\in \?X \x \?I$ a \emph{region} here. 
A crucial ingredient for our construction will be timed automata
that are restricted in which regions they are allowed to produce as side-effects.
To simplify notations let us assume w.l.o.g.~that the given \TBPP has no resets along branching rules.
Let $I(r) \in \?I$ denote the unique interval containing $r\in \nnreals$,
and for a subset $S\subseteq \?X\x\?I$ of regions
write ``$Z \in S$'' for the clock constraint
expressing that $(Z, I(x)) \in S$. More precisely,
$$ Z \in S \ \equiv\ \bigvee_{(Z,(a_i,a_{i+1}))\in S} a_i<x<a_{i+1}
\ \lor\ 
\bigvee_{(Z,\set{a_i}))\in S} a_i=x.$$

\newcommand{\RTA}[1]{\mathit{TA}_{#1}}

\begin{definition}
    Let $S\subseteq {\?X\x\?I}$ be a set of regions
    for the $1$-\TBPP $(\{x\},\?X,\?R)$.
    We define a timed automaton $\RTA{S} = (\{x\},\?X,\?R_S)$
    so that $\?R_S$ contains all of the rules in $\?R$ with rhs of size 1 and none of the vanishing rules.
    Moreover, every branching rule $X\Rule{\varphi}Y \msplus Z$ in $\?R$ introduces
    \begin{itemize}
        \item a rule $X\Rule{\varphi'} Y$ guarded by $\varphi' \eqdef \varphi\land Z \in S$, and 
        \item a rule $X\Rule{\varphi'} Z$ guarded by $\varphi' \eqdef \varphi\land Y \in S$.
    \end{itemize}
\end{definition}

\begin{theorem}\label{thm:reach-1clock}
    The reachability problem for 1-clock \TBPP is in \NP.
\end{theorem}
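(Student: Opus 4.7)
The plan is to extend the coverability argument of \cref{thm:cover-1clock} so that satisfiability of the resulting \ELA formula certifies reachability rather than just coverability. The additional requirement is that every side-effect process produced by a branching rule actually vanishes in time. As in the coverability argument, I consider the \emph{skeleton} of a witnessing derivation tree: the minimal subtree connecting the root $(S,0)$ to the $n$ target leaves. This skeleton has at most $2n-1$ nodes, so it can be guessed together with the labelling of its vertices by nonterminals of $\?X$. Each edge of the skeleton corresponds to a chain of \TBPP transitions which may include branching rules whose off-skeleton side-effects must disappear from the derivation tree by the end of the run.

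For each skeleton edge $e = (u \to v)$ with nonterminals $X_u, X_v$ and real-valued time stamps $\tau_u, \tau_v$, I additionally guess a set $S_e \subseteq \?X \x \?I$ of regions in which side-effects may appear along $e$. Applying \cref{thm:1-clock:TA} to the restricted automaton $\RTA{S_e}$, I obtain a polynomial-size \ELA formula asserting the existence of a run in $\RTA{S_e}$ from $(X_u, x_u)$ to $(X_v, x_v)$ of duration $\tau_v - \tau_u$. To certify that side-effects actually vanish by the total time $T$, for each region $(Z, I) \in S_e$ I conjoin the constraint $\forall z \in I \colon \VAN{Z}(z, T - \tau_v)$. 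By \cref{lem:rasmus}, $\VAN{Z}(z,t)$ restricted to such a region is of the form $t \geq c - b z$ with $b \in \set{0,1}$, so this universal condition reduces to a single linear inequality obtained by substituting the infimum of $I$ for $z$; in particular it is expressible in \ELA.

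The total formula has polynomial size: $O(n)$ skeleton edges, each contributing a polynomial-size sub-formula from \cref{thm:1-clock:TA} and a polynomial number of vanishing constraints from \cref{lem:rasmus}. By \cref{thm:logic:NP} satisfiability is in \NP. Correctness in the forward direction follows by extracting the skeleton and safe-region sets $S_e$ from a witnessing run; in the backward direction, a satisfying assignment yields runs in each $\RTA{S_e}$ which lift to a \TBPP derivation by reintroducing branching rules, and the $\VAN{Z}$ constraints guarantee that every resulting side-effect vanishes in time. The main subtlety is that a side-effect $(Z, z)$ is produced at some time $\tau \in [\tau_u, \tau_v]$ not explicitly tracked in the $\RTA{S_e}$ run, while the vanishing condition $\VAN{Z}(z, T - \tau)$ depends on $\tau$; I handle this by exploiting the monotonicity of $\VAN{Z}(z, \cdot)$ in its time argument, so that enforcing vanishing with budget $T - \tau_v$ dominates the condition for every earlier $\tau$ along the edge.
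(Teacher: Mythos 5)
There is a genuine gap: your formula is sound but not complete, so it can reject positive instances. The problem is precisely the step you flag as the ``main subtlety''. A side-effect $(Z,z)$ produced at time $\tau$ on the edge $u\to v$ only needs to satisfy $\VAN{Z}(z,T-\tau)$; your formula instead demands $\VAN{Z}(z',T-\tau_v)$ for all $z'\in I$, which is strictly stronger in both arguments (a smaller time budget, and the infimum of $I$ is the hardest point of the region since the coefficient $b_i\in\set{0,1}$ in \cref{lem:rasmus} makes the threshold $c_i-b_iz$ non-increasing in $z$). Monotonicity shows the stronger condition implies the needed one --- that is soundness --- but a witnessing derivation tree gives no reason for the stronger condition to hold. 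Concretely, take rules $S \Rule{x=0} A \msplus Z$, \ $A \Rule{x=9;\, x:=0} T$, \ $Z \Rule{x=9} \zerro$ with target $(T,\vec 0)$: the instance is positive (the side-effect $(Z,0)$ born at time $0$ vanishes exactly at time $9$, when the target is reached), but on the single skeleton edge from $S$ to $T$ you are forced to put the region $(Z,\set{0})$ into $S_e$, and your vanishing constraint then reads $\VAN{Z}(0,\,9-9)$, which is false.

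The paper's proof avoids this by introducing a second kind of checkpoint beyond the least common ancestors of targets: for each region, the node where that region is produced for the \emph{last} time along a productive branch is kept explicitly in the witness, with existentially quantified clock value $b'$ and timestamp $\hat b'$, and the check $\VAN{B}(b',\tau-\hat b')$ is anchored there, at the true production time and clock value. Earlier productions of the same region are then shown to vanish by an implication argument that combines the specific shape of $\VAN{B}$ from \cref{lem:rasmus} with the fact that the clock grows no faster than global time (so $b'-b\le \hat b'-\hat b$, which together with $\tau-\hat b'\ge d-b'f$ yields $\tau-\hat b\ge d-bf$). These extra checkpoints stay polynomial in number because the sets $S$ only decrease along a branch. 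This is the ingredient your proposal is missing: with one set $S_e$ and one end-of-edge budget per skeleton edge, and no representative node recording where a side-effect region is actually produced, the reduction to \ELA does not characterise reachability.
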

\begin{proof}
    Suppose that there is a derivation tree witnessing a positive instance of the reachability problem
    and
    so that all branches leading to targets have duration $\tau$.
    We can represent a node by a triple $(A,a,\hat{a})\in(\?X\x\R\x\R)$,
    where $(A, a)$ is a \TBPP process
    and the third component $\hat{a}$ is the total time elapsed so far.
    Call a node $(A,a,\hat{a})$ \emph{productive} if it lies on a branch from root to some target node.
    Naturally, every node $(A,a,\hat{a})$ has a unique region $(A,I(a))$
    associated with it.
    For a productive node let us write $$S(A,a,\hat{a})$$ for the set of regions of
    nodes which are descendants of $(A,a,\hat{a})$
    which are non-productive but have a productive parent.
    See \cref{fig:oneclock-reach} (left) for an illustration.
    Observe that

    \begin{enumerate}
        \item The sets $S(A,a,\hat{a})$ can only decrease along a branch from root to a target.
        \item 
            If $(A,a,\hat{a})$ has a productive descendant
            $(C,c,\hat{c})$ such that
            $S(A,a,\hat{a}) = S(C,c,\hat{c})$,
            then $(A,a)\reach{\hat{c}-\hat{a}}(C,c)$ in the timed automaton $\RTA{S}$.
        \item Suppose $(A,a,\hat{a})$ has only one productive child $(C,c,\hat{c})$
            and that $S(A,a,\hat{a}) \supset S(C,c,\hat{c})$.
            Then it must also have another child $(B,b,\hat{b})$ s.t.~$\VAN{B}(b,\tau-\hat{b})$ holds.
    \end{enumerate}
    
    \noindent
    The first two conditions are immediate from the definitions of $S$ and $\RTA{S}$.
    To see the third, note that the 
    $S(A,a,\hat{a}) \supset S(C,c,\hat{c})$ implies that $(A,a,\hat{a})$ has some non-productive descendant
    $(B,b,\hat{b})$ whose region $(B,I(b))$ is not in $S(C,c,\hat{c})$.
    Since $(C,c,\hat{c})$ is the only productive child, that descendant must already be a child of $(A,a,\hat{a})$.
    Finally, observe that every non-productive node $(B,b,\hat{b})$ satisfies $\VAN{B}(b,\tau-\hat{b})$,
    as otherwise one of its descendants is present at time $\tau$, and thus must be a target node,
    contradicting the non-productivity assumption.

    The conditions above allow us to use labelled trees of polynomial size as reachability witnesses:
    These witnesses are labelled trees as above where only as polynomial number of checkpoints along branches from root
    to target are kept:
    A checkpoint is either the least common ancestor of two target nodes
    (in which case a corresponding branching rule must exist),
    or otherwise it is a triple of nodes as described by condition (3), where a region $(B,I(b))$ is produced for the last time. The remaining paths between 
checkpoints are positive reachability instances of timed automata $\RTA{S}$,
    as in condition (2),
    where the bottom-most automata $\RTA{S}$ satisfy that if $(U,I)\in S$ then $(z,0) \in\VAN{U}$ for all $z\in I$.
    Cf.~\cref{fig:oneclock-reach} (right).
    Notice that the existence of a witness of this form is expressible as a polynomially large \ELA formula
    thanks to \cref{lem:rasmus,thm:1-clock:TA}.

    Clearly, every full derivation tree gives rise to a witness of this form.
    Conversely, assume a witness tree as above exists. One can build a partial derivation tree
    by unfolding all intermediate \TA paths between consecutive checkpoints.
    It remains to show that whenever 
    some $\RTA{S}$ uses a rule $A\Rule{\varphi}C$ originating from a \TBPP rule 
    $A\Rule{\varphi}B\msplus C$ to produce a productive node $(C,c,\hat{c})$ then the node $(B,b,\hat{b})$
    produced as side-effect can vanish in time $\tau-\hat{b}$, i.e., we have to show that then $\VAN{B}(b,\tau-\hat{b})$.
    
    W.l.o.g.~let $\VAN{B}(x,t) \equiv t\ge d-xf$
    (the case with $>$ is analogous)
    for some $d\in\nnreals$ and $f\in\{0,1\}$.
    Observe that the region $(B,I(b))$ is in $S$ by definition of $\RTA{S}$ 
    and that the witness contains a later node $(B,b',\hat{b}')$ with $\VAN{B}(b',\tau-\hat{b}')$, and thus
    $$\tau-\hat{b}'\ge d-b'f.$$
    Notice also that $b'\ge b + \hat{b}'-\hat{b}$ as in the worst-case no reset appears on the path between
    the parent of $(B,b,\hat{b})$ and $(B,b',\hat{b}')$.
    Together with the inequality above we derive that
    $\tau-\hat{b}\ge d-bf$, meaning that indeed $\VAN{B}(b,\tau-\hat{b})$ holds, as required.
\end{proof}

\begin{figure}
\begin{subfigure}[B]{0.5\textwidth}
\centering
\includegraphics[height=4.4cm]{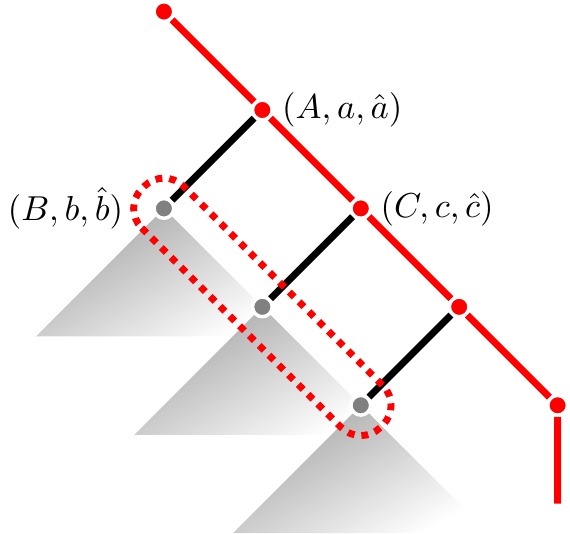}
\end{subfigure}\hfill
\begin{subfigure}[B]{0.5\textwidth}
\centering
\includegraphics[height=5.5cm]{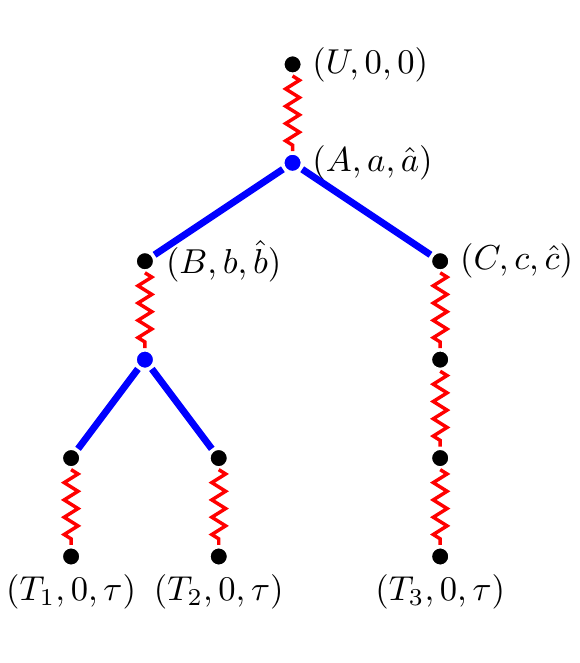}
\end{subfigure}
\caption{\textbf{Left:} Nodes on the red branch are productive, grey sub-trees are non-productive.
$S(A,a,\hat{a})$ contains the regions of nodes in the dotted region.
It holds that
$S(A,a,\hat{a}) \supseteq S(C,c,\hat{c})$
and the inequality is strict iff $(B,I(b))\in S(C,c,\hat{c})$.
\textbf{Right:} small reachability witnesses contain checkpoint where two productive branches split (in blue)
or where the allowed side-effects $S$ strictly decrease (red). The intermediate paths are runs of $S$-restricted \TA.
}
\label{fig:oneclock-reach}
\end{figure}

\section{Multi-Clock \TBPP}
\label{sec:multiclock}
In this section we consider the complexities of coverability and reachability problems for
\TBPP with multiple clocks.
For the upper bounds we will reduce to the reachability problem for \TA \cite{AD1994} and to solving reachability games for \TA \cite{JT2007}.

\begin{theorem}
    \label{thm:multiclock:cover}
    The coverability problem for $k$-\TBPP with $k \geq 2$ clocks is \PSPACE-complete.
\end{theorem}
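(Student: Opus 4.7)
The \PSPACE lower bound is immediate: since $k$-\TBPP with $k \geq 2$ subsumes $k$-\TA (take a \TBPP with no branching rules), and since in \TA there are no side-effects so coverability coincides with reachability, the classical \PSPACE-hardness of $2$-\TA reachability with binary-encoded constants~\cite{AD1994} transfers directly to $2$-\TBPP coverability.

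For the upper bound, the guiding observation is that, unlike reachability, coverability in \TBPP does not require side-effect processes produced by branching rules to vanish in time. It thus suffices to exhibit a derivation tree whose \emph{productive} subtree has the $n$ target processes $(T_1, \vec 0), \dots, (T_n, \vec 0)$ as simultaneous leaves, while the remaining side-effect subtrees may be left untouched and merely contribute to the configuration dominating the target in the multiset order. The plan is to reduce this to reachability in an implicitly represented timed automaton $\Aut^*$ that tracks only the productive branches.

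A state of $\Aut^*$ is a tuple of up to $n$ active processes, each equipped with its own private copy of the $k$ clocks of the \TBPP, so that $\Aut^*$ has $O(nk)$ clocks overall and states of polynomial encoding size. Its transitions simulate: (i) time elapse, which advances all clocks simultaneously; (ii) a linear \TBPP rule applied to one of the tracks; and (iii) a branching \TBPP rule $X \Rule{\varphi;R} Y \msplus Z$ applied to one track, either spawning a new productive track (when the current width is strictly less than $n$) or keeping only one of $Y,Z$ as productive and discarding the other as an ignored side-effect. Starting from a single track at $(X_{init}, \vec 0)$, one asks whether $\Aut^*$ can reach the tuple of $n$ tracks $(T_1, \vec 0), \dots, (T_n, \vec 0)$. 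Correctness in both directions is routine: the forward direction extracts the productive subtree of a witness derivation, while the backward direction expands each discarded branching child back into an arbitrary non-productive subtree of the original \TBPP, which does not need to vanish for coverability.

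Although $\Aut^*$ has an exponential state space, its description is polynomially succinct: each state fits in polynomial space and transitions can be computed in polynomial time from the input \TBPP. The standard region abstraction over the $O(nk)$ clocks yields an exponential-size region graph in which, however, each vertex admits a polynomial-size encoding and whose adjacency relation is polynomial-time checkable. A non-deterministic search over this region graph therefore runs in \NPSPACE, which collapses to \PSPACE by Savitch's theorem. The main subtlety is precisely this step: the textbook \PSPACE algorithm for \TA reachability assumes an explicit polynomial-size automaton, so one has to verify carefully that it adapts to our succinctly presented $\Aut^*$ without any blow-up in the working space, i.e., that the region equivalence on the $O(nk)$ clocks can be manipulated symbolically on the fly and that a binary-encoded step counter suffices to bound the length of the guessed path.
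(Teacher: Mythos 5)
Your proposal is correct and follows essentially the same route as the paper: both reduce coverability to reachability in a succinctly presented timed automaton whose control states track only a bounded number ($n$, resp.\ $m+2$) of productive processes, each with its own private copy of the $k$ clocks --- so exponentially many states but polynomially many clocks --- and then invoke the region-based reachability algorithm, which runs in space logarithmic in the number of control states and polynomial in the number of clocks (your discard-at-branching-time rule plays the role of the paper's shrinking rules). The one discrepancy is the citation for the lower bound: \PSPACE-hardness of reachability for \emph{two}-clock \TA\ is due to Fearnley and Jurdzi\'nski \cite{FJ2015}, not \cite{AD1994}, whose hardness construction uses an unbounded number of clocks.
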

\begin{proof}
    The lower bound already holds for the reachability problem of $2$-clock \TA \cite{FJ2015}
    and hence for the \emph{simple} \TBPP coverability.
    For the upper bound,
    consider an instance where $\Aut= (\?C,\?X,\?R)$ is a $k$-\TBPP and $T_1,\ldots,T_m$ are the target nonterminals.
    We reduce to the reachability problem for \TA $\Baut = (\?C',\?X',\?R')$ with exponentially many control states $\?X'$,
    but only $\size {\?C'} = O(k \cdot \size{\?X} \cdot m)$ many clocks.
    The result then follows by the classical region construction of \cite{AD1994},
    which requires space logarithmic in the number of nonterminals and polynomial in the number of clocks.
    The main idea of this construction is
    to introduce (exponentially many) new nonterminals and rules to simulate
    the original behaviour on bounded configurations only.

    Let $n = m + 2$.
    We have a clock $x_{X, i} \in \?C'$ for every original clock $x \in \?C$, nonterminal $X \in \?X$, and index $1 \leq i \leq n$,
    and a nonterminal of the form $[\alpha] \in \?R'$ for every multiset $\alpha \in\msets{\?X}$ of size at most $\size \alpha \leq n$.
    Since we are solving the coverability problem,
    we do not need to address vanishing rules $X \Rule {\varphi; R} \zerro$ in $\?R$, which are ignored.
    We will use clock assignments $S_{X, i} \equiv \bigwedge_{i \leq j \leq n - 1} x_{X, j} := x_{X, j + 1}$
    shifting by one position the clocks corresponding to occurrences $j = i, i+1, \dots, n - 1$ of $X$.
    We have three families of rules:
    \begin{enumerate}

        \item \textbf{(unary rules).} For each rule $X \Rule {\varphi; R} Y$ in $\?R$
        and multiset $\beta \in \msets{\?X}$ of the form $\beta = \gamma \msplus X \msplus \delta$
        of size $\size \beta \leq n$,
        for some $\gamma, \delta \in \msets{\?X}$,
        we have a corresponding rule  in $\?R'$
        $$[\beta] \Rule {\restrict \varphi {X, i}; \restrict R {X, i; Y, j}; S_{X, i}} [\gamma\msplus Y\msplus \delta]$$
        for every occurrence $1 \leq i \leq \beta(X)$ of $X$ in $\beta$ and for $j = \beta(Y) + 1$,
        where $\restrict \varphi {X, i}$ is obtained from $\varphi$ by replacing each clock $x$ with $x_{X, i}$,
        and $\restrict R {X, i; Y, j}$ is obtained from $R$ by replacing every assignment $x := y$
        by $x_{Y, j} := x_{X, i}$, and $x := 0$ by $x_{Y, j} := 0$.
    
    \item \textbf{(branching rules).} Let $X\Rule {} Y \msplus Z$ in $\?R$ be a branching rule.
        We assume w.l.o.g.~that it has no tests and no assignments, and that $X, Y, Z$ are pairwise distinct.
        We add rules in $\?R'$
        $$[\alpha \msplus X]\Rule {R; S_{X, i}} [\beta], \quad \textrm {with } \beta = \alpha \msplus Y \msplus Z \textrm{ and } \size \beta \leq n, $$
        for all $1 \leq i \leq \alpha(X)$ and $\alpha \in \msets{\?X}$,
        where $R \equiv \bigwedge_{x \in \?C} x_{Y, \beta(Y)} := x_{X, i} \wedge x_{Z, \beta(Z)} := x_{X, i}$
        copies each clock $x_{X, i}$ into $x_{Y, \beta(Y)}$ and $x_{Z, \beta(Z)}$,
        and $S_{X, i}$ was defined earlier.

    \item \textbf{(shrinking rules).} We also add rules that remove unnecessary nonterminals:
        For every $\beta = \alpha \msplus X \in \msets{\?X}$ with $\size \beta \leq n$
        and index $1 \leq i \leq \beta(X)$ denoting which occurrence of $X$ in $\beta$ we want to remove,
        we have a rule $[\alpha \msplus X] \Rule {S_i} [\alpha]$ in $\?R'$.
        
    \end{enumerate}

    It remains to argue that
    $(X,\vec{0})\reach{*}(T_1,\vec{0})\msplus \cdots \msplus(T_n,\vec{0})$ in $\Aut$
    if, and only if,
    $([X],\vec{0})\reach{*}([T_1 \msplus \cdots \msplus T_n], \vec{0})$ in $\Baut$.
    This can be proven via induction on the depth of the derivation tree, where
    the induction hypothesis is that every configuration $\alpha$ of size at most $n$
    can be covered in with a derivation tree of depth $d$ in $\Aut$ if, and only if,
    in the timed automaton $\Baut$ the configuration $[\alpha]$ can be reached via a path of length at most $d$.
\end{proof}

\ignore{
\subsection{Coverability for 1-clock \TCFP}

\ilorenzo{Move to a later section}

\begin{theorem}
    The general coverability problem for 1-clock \TCFP is \NP-complete.
    \NP-hardness holds already when the target set has size $\geq 2$.
\end{theorem}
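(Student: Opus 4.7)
The plan is to establish the two bounds separately, both of which can be read off from material already developed in the excerpt. The \NP lower bound is handed to us by \cref{lem:cover-1clock-hard}: its reduction from \textsc{SubsetSum} already produces a 1-\TBPP coverability instance whose target multiset has size exactly $2$, which matches the sharper ``$\geq 2$'' claim in the statement. So for the lower bound I would simply invoke that lemma.

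For the \NP upper bound, I would follow the strategy informally described just before \cref{thm:cover-1clock}. Consider a coverability query $(S, 0) \rcovers (T_1, 0)\msplus\cdots\msplus(T_n, 0)$. If there is a witnessing partial derivation tree from $(S, 0)$, then the least-common-ancestor skeleton of the $n$ target leaves is a binary tree with at most $n-1$ internal (branching) nodes and at most $2n-1$ nodes overall; between any two consecutive skeleton nodes the derivation degenerates into a chain of unary applications, i.e.\ a run of the 1-clock \TA $\Aut$ obtained from the \TBPP by replacing each branching rule $X\Rule{\varphi;R}Y\msplus Z$ by the two alternatives $X\Rule{\varphi;R}Y$ and $X\Rule{\varphi;R}Z$.

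Applying \cref{thm:1-clock:TA}, the ternary reachability relation $\tareach{XY}{}$ of $\Aut$ is expressible in \ELA by a polynomial-size formula $\varphi_{XY}(x, \ttime, x')$. I would then guess the skeleton (polynomially many nodes, each labelled by a \TBPP nonterminal and a branching rule) and construct a single \ELA formula $\psi$ whose existentially quantified variables record, for each skeleton node, its clock value and absolute time stamp; for each skeleton edge $(X, x, \hat{x})\to(Y, y, \hat{y})$ we conjoin $\varphi_{XY}(x, \hat{y}-\hat{x}, y)$; at each internal node we enforce the guard and reset of the branching rule; and at each leaf we enforce the target nonterminal and the global synchronisation $\hat{x}=\tau$ for a common $\tau$. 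The resulting formula has polynomial size, so its satisfiability can be decided in \NP by \cref{thm:logic:NP}. Altogether: guess skeleton, build $\psi$, call the \NP procedure for \ELA.

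The subtle point, and where I would spend care in the write-up, is why this suffices for coverability although an analogous shortcut is known to fail for reachability (cf.\ the detailed argument of \cref{thm:reach-1clock}). The issue is that off-skeleton branches of the original branching rules spawn side-effect processes that survive into the final configuration. For coverability these are harmless: since we only need to reach some configuration $\geq$ the target in the multiset order, any leftover side-effect process can stay in the covering configuration and never needs to be discharged, so the \ELA formula need not track the $\VAN{X}$ predicates from \cref{lem:rasmus}. This is the main conceptual obstacle, but once observed the correctness of the encoding is routine: the forward direction contracts a witnessing derivation tree to its skeleton and reads off the required \ELA assignment, and the backward direction expands a satisfying assignment by inlining a concrete run of $\Aut$ for each skeleton edge, observing that each inlined run can be realised in the \TBPP because the replaced branching rule produces the required nonterminal together with an arbitrary, irrelevant side-effect.
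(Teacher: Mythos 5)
Your proposal is correct and follows essentially the same route as the paper: \NP-hardness via the subset-sum reduction of \cref{lem:cover-1clock-hard}, and the upper bound by decomposing a witnessing derivation tree into the least-common-ancestor skeleton of the target leaves, encoding each skeleton edge with the polynomial-size \ELA reachability formulas of \cref{thm:1-clock:TA}, and deciding satisfiability in \NP via \cref{thm:logic:NP}. One detail of your write-up deserves emphasis, because it is in fact \emph{more} careful than the paper's own sketch: you label each internal skeleton node with a concrete branching rule and enforce its guard and reset at that node, so that the two outgoing subtrees start from the two children of a single application of a rule $X \Rule{\varphi;R} Y \msplus Z$. The paper's displayed formula instead conjoins $\varphi_{CA}(c,t_1,0) \land \varphi_{CB}(c,t_1,0)$, which does not tie the two branches to one common branching rule fired at $C$; read literally, this over-approximates coverability. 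For instance, with rules $C \Rule{} A \msplus D$ and $C \Rule{} B \msplus E$ (and no rules for $A,B,D,E$), both $\varphi_{CA}(0,0,0)$ and $\varphi_{CB}(0,0,0)$ hold in the derived \TA, yet no reachable \TBPP configuration covers $(A,0)\msplus(B,0)$, because the single $C$ process can fire only one of the two rules. Your explicit treatment of branching rules at internal nodes avoids exactly this trap and is the right formalisation of the paper's idea. The remaining ingredients --- guessing the polynomial-size skeleton, the observation that side-effect processes are harmless for coverability (in contrast to reachability, cf.\ \cref{thm:reach-1clock}), and the two directions of the correctness argument --- match the paper's intended proof.
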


\NP-hardness follows from the proof of Theorem~\ref{thm:reachability:1-clock},
since the reduction from subset sum presented therein 
produces a 1-clock \TCFP for which reachability and coverability are equivalent, i.e.,
$(S, 0) \goesto * (X_{k+1}, 0) (Z, 0)$ iff $(S, 0) \covers (X_{k+1}, 0) (Z, 0)$.

For the \NP upper bound we use the following simple observation.
Consider the following \TCFP coverability query:
\begin{align}
    \label{eq:cov:query}
    \tag{\dag}
    (S, 0) \covers (A, 0) (B, 0).
\end{align}
If \eqref{eq:cov:query} holds in the \TCFP,
then in the underlying \emph{untimed \CFG}
there is a derivation $S \goesto * x A y B z$, for some $x, y, z \in \mathcal X^*$.
Let $C \in \mathcal X$ be the least common ancestor of $A$ and $B$ in the corresponding derivation tree,
Consider the \TA obtained from the \TCFP by replacing branching rules $X \goesto {} YZ$ with linear rules $X \goesto {} Y$ and $X \goesto {} Z$.
and let $\varphi_{XY}(x, t, x')$ be the \EHA formula describing the reachability relation in the \TA,
which is of polynomial size by Theorem~\ref{thm:1-clock:TA}.
Then our original coverability query \eqref{eq:cov:query} is equivalent to the following \EHA formula:
\begin{align*}
    \psi \;\equiv\;
        \exists t_0, t_1, t_2 \in \R \cdot
        \varphi_{SC}(0, t_0, x_C) \wedge
            \varphi_{CA}(x_C, t_1, 0) \wedge
                \varphi_{CB}(x_C, t_2, 0) \wedge
                    t_0 + t_1 = t_0 + t_2.
\end{align*}
For a general coverability query $(S, 0) \covers (\alpha, \bar 0)$
the number of common ancestors $C$'s is linear in $\size \alpha$,
and thus we obtain a \EHA formula $\psi$ of polynomial size.
Thanks to Theorem~\ref{thm:logic:NP} we can check whether it is satisfiable in \NP,
as required.

\subsubsection{Simple coverabiltiy for 1-clock \TCFP.}

\ilorenzo{Move to a later section}

If we restrict our attention to target configurations of size 1,
then the complexity of the coverability problem improves to \NL.
This holds simply because, in this case,
we can replace a branching rule of the form $X \goesto Y Z$
with two linear rules $X \goesto Y$ and $X \goesto Z$,
thus yielding a 1-clock \TA, for which reachabiltiy is in \NL.

\begin{theorem}
    The simple coverability problem for 1-clock \TCFP is \NL-complete.
\end{theorem}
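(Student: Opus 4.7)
My plan is to address the two bounds separately. The lower bound is immediate: one-clock \TBPP subsume one-clock \TA (simply use no branching rules), and nonemptiness for 1-\TA is already known to be \NL-hard \cite{LMS2004}, which transfers to simple coverability of 1-\TBPP.

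For the upper bound, I would reduce simple coverability of a 1-\TBPP $\Aut = (\{x\},\?X,\?R)$ with single target nonterminal $T$ to reachability in a 1-\TA $\Baut$ constructed as follows: $\Baut$ has the same clock and nonterminal set as $\Aut$, inherits every unary rule of $\Aut$, and replaces each branching rule $X \Rule{\varphi;R} Y \msplus Z$ by the two nondeterministic unary rules $X \Rule{\varphi;R} Y$ and $X \Rule{\varphi;R} Z$; vanishing rules can be dropped. The claim to verify is $(X,0) \rcovers (T,0)$ in $\Aut$ iff $(X,0) \reach{*} (T,0)$ in $\Baut$. Applying the \NL algorithm for 1-\TA reachability of \cite{LMS2004} to $\Baut$ then closes the argument, and the construction itself is clearly logspace.

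Correctness follows from a short ancestry argument on derivation trees. For the forward direction, any \TBPP run producing $(T,0) \msplus \gamma$ contains a unique path in the derivation tree from the root $(X,0)$ to the target leaf $(T,0)$; at each branching node along that path I would pick the child leading toward $(T,0)$, obtaining a sequence of rules that is also available in $\Baut$, and the induced timed run reaches $(T,0)$ in $\Baut$. Conversely, every run of $\Baut$ from $(X,0)$ to $(T,0)$ can be lifted back to $\Aut$ by re-instating the discarded sibling at each branching step, producing a \TBPP run that ends in some configuration covering $(T,0)$. The only subtle point---hardly an obstacle---is recognising that this reduction is \emph{specific} to targets of size one: if the target had size $\ge 2$ then a single path in the \TA could not encode the shared-time synchronisation required between siblings of the derivation tree, which is exactly why the larger machinery developed in \cref{sec:ha} (cf.~\cref{thm:cover-1clock}) is needed for the general coverability problem.
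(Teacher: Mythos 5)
Your proposal is correct and follows essentially the same route as the paper: the lower bound via the trivial inclusion of 1-\TA{} into 1-\TBPP{}, and the upper bound by splitting each branching rule $X \Rule{\varphi;R} Y \msplus Z$ into the two unary rules $X \Rule{\varphi;R} Y$ and $X \Rule{\varphi;R} Z$ and invoking 1-\TA{} reachability, with the same observation that the side-effect processes $\gamma$ are unconstrained precisely because the target has size one. Your ancestry argument on derivation trees just spells out the equivalence that the paper states without elaboration.
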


}

\begin{theorem}
\label{thm:multiclock:reach}
    The reachability problem for \TBPP is \EXPTIME-complete.
    Moreover, \EXPTIME-hardness already holds 
    for $k$-\TBPP emptiness, if 
    1) $k\ge 2$ is any fixed number of clocks, or
    2) $k$ is part of the input but only $0$ or $1$ appear as constants in clock constraints.
\end{theorem}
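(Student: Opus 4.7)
The proof splits into three parts: an \EXPTIME algorithm for reachability, and two \EXPTIME-hardness reductions for emptiness. The upper bound generalises the skeleton-with-checkpoints argument of Theorem~\ref{thm:reach-1clock}, but replaces the polynomially presentable predicate $\VAN{X}$ by the winning region of a timed reachability game, since for $k\ge 2$ clocks no succinct description can be expected.

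Concretely, I would first construct a $k$-\TA game with minimizer states $\?X$ and maximizer states $\?X \x \?X$: unary rules carry over as minimizer moves, vanishing rules lead to a fresh sink $\bot$, and a branching rule $X\Rule{\varphi;R}Y\msplus Z$ becomes a minimizer move $X\to[Y,Z]$ followed by two maximizer moves $[Y,Z]\to Y,Z$. Then $(X,\mu)\reach{t}\zerro$ iff minimizer wins from $(X,\mu)$ in total duration $t$, and by \cite{JT2007} this winning condition can be computed in \EXPTIME. A reachability witness for $n$ targets then becomes a polynomial-size tree with at most $n-1$ branching ancestors of the targets, productive edges corresponding to runs of the underlying \TA (obtained by replacing each branching rule by either of its unary projections), and non-productive siblings that must vanish by the common synchronisation time. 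Checking such a tree combines \TA reachability queries along productive edges with the precomputed vanishing regions, and I expect the overall procedure to stay within \EXPTIME.

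For the lower bound with a fixed $k\ge 2$, my plan is to reduce from timed reachability games on $2$-clock \TA, which are \EXPTIME-hard \cite{JT2007}: existential edges become unary \TBPP rules, universal vertices become branching rules $p\Rule{\varphi;R} q_1 \msplus q_2$, and game targets acquire a vanishing rule $q_f\Rule{}\zerro$. Emptiness of the resulting \TBPP then encodes existence of a winning strategy for the existential player, since branching forces both siblings to also reach $\zerro$, matching the universal semantics. For the lower bound with $k$ part of the input and constants in $\set{0,1}$, I would encode acceptance of alternating polynomial-space Turing machines (which is \EXPTIME-complete): linearly many clocks tested only against $0$ and $1$ are enough to encode a binary tape/cell index via standard gadgets, and branching versus unary rules implement universal versus existential transitions, with acceptance corresponding to the empty configuration being reachable.

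The main obstacle is the upper bound: unlike the 1-clock case, $\VAN{X}$ cannot be expressed by a polynomial-size \QFLA formula as in Lemma~\ref{lem:rasmus}, so the checkpoints-along-a-skeleton argument must instead invoke the single-exponential game solution of \cite{JT2007}. Care is required to ensure that combining \TA reachability queries along productive edges with vanishing side-condition queries at branching checkpoints does not introduce any extra exponential blowup, and to verify that the restriction of the maximizer to zero-delay moves (as in the proof of Lemma~\ref{lem:rasmus}) transfers to the multi-clock setting without affecting the \EXPTIME bound.
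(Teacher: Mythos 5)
Your upper bound has a genuine gap. The skeleton-with-checkpoints argument from \cref{thm:reach-1clock} does not transfer to $k\ge 2$ clocks merely by swapping $\VAN{X}$ for a game-computed winning region: along a single productive \TA path between two checkpoints, unboundedly many non-productive siblings are spawned, and each must vanish within its \emph{own} remaining time budget $\tau-\hat{b}$, which depends on where along the path it was created. In the 1-clock proof this is tamed by restricting the \TA to spawn only side-effects whose region lies in a set $S$ drawn from \emph{polynomially many} intervals, and by the monotonicity argument ($\VAN{B}$ piecewise linear with slopes in $\{0,1\}$, so checking the last occurrence of each region suffices). For $k\ge 2$ the region set is exponential and the vanishing predicates have no such linear structure, so neither the polynomial witness size nor the last-occurrence argument survives; saying ``care is required'' does not supply the missing idea. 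The paper takes a different route that avoids side-effect bookkeeping along paths altogether: it first solves \emph{simple} reachability as a \TA reachability game (\cref{lem:multiclock:reach-1}) in which Max, upon each branching, may replace a single stored challenge configuration by the discarded sibling, and Min develops the main branch and the challenge in lockstep over a doubled clock set; it then reduces general reachability to simple reachability (\cref{lem:multiclock:reach-2}) by encoding multisets of size at most $m+2$ as exponentially many nonterminals while keeping the number of clocks polynomial, so that the game-solving bound of \cite{JT2007} still gives \EXPTIME.

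Your lower bound for fixed $k\ge2$ is also problematic as stated. In a timed reachability game the universal player chooses time delays as well as successors, whereas a \TBPP branching rule $p\Rule{\varphi;R}q_1\msplus q_2$ only forces both siblings to eventually vanish, with each sibling free to schedule its own discrete transitions against the global clock; the derivation (i.e.\ the existential side) thus retains control of timing in every branch, so existence of a derivation to $\zerro$ does not coincide with a winning strategy in the game. The paper sidesteps this by reducing from \emph{countdown games} \cite{JLS2008}: there player~0's move is a number $l$, encoded as a delay checked by a guard $x_1=l$, and player~1's move is a choice among the $l$-successors, encoded by spawning \emph{all} of them, so the universal player never needs to control time. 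Your second hardness reduction (from alternating polynomial-space machines, with binary-encoded configurations in $\{0,1\}$-valued clocks) is a plausible alternative to the paper's reduction from the emptiness of the intersection of a context-free grammar with several finite automata, but it is only sketched and would need the tape-copying and alternation gadgets worked out.
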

In the remainder of this section contains a proof of this result, in three steps:
In the first step 
(\cref{lem:multiclock:reach-1})
we show an \EXPTIME upper bound for the special case of \emph{simple reachability},
i.e., when the target configuration has size $1$.
As a second step (\cref{lem:multiclock:reach-2})
we reduce general case to simple reachability and thereby prove the upper bound claimed in \cref{thm:multiclock:reach}.
As a third step (\cref{lem:multiclock:reach-3}), we prove the corresponding lower bound.

\begin{restatable}[Simple reachability]{lemma}{simpleReachability}
    \label{thm:simple:reachability}
    \label{lem:multiclock:reach-1}
    The simple reachability problem for \TBPP is in \EXPTIME.
    More precisely, the complexity is exponential in the number of clocks and the maximal clock constant,
    and polynomial in the number of nonterminals.
\end{restatable}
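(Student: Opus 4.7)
The plan is to reduce simple reachability for a $k$-\TBPP $\Aut = (\?C, \?X, \?R)$ to a reachability game on a timed automaton with $O(\size{\?X})$ states, and then invoke the algorithm of Jurdziński and Trivedi \cite{JT2007} for solving reachability games on \TA in time polynomial in the number of states and exponential in the number of clocks and the maximal clock constant. This directly yields the claimed complexity.

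We construct a two-player timed reachability game $\mathcal G$ between existential Eve and universal Adam. The game states are $\?X$ together with a polynomial number of auxiliary states (one per branching rule) and a distinguished winning sink $\top$; the clocks are those of $\Aut$, augmented by an auxiliary clock $z$ tracking total elapsed time. Unary rules $X \Rule{\varphi; R} Y$ become Eve-controlled transitions from $X$ to $Y$; vanishing rules $X \Rule{\varphi; R} \zerro$ become Eve-controlled transitions to $\top$; and branching rules $X \Rule{\varphi; R} Y \msplus Z$ are split into an Eve move from $X$ to a fresh auxiliary state $[Y \msplus Z]$ (applying $\varphi$ and $R$) followed by an Adam-controlled nondeterministic choice between $Y$ and $Z$. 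Eve's reachability objective is to reach either nonterminal $T$ with all original clocks at $0$ (the spine completing at the target) or the sink $\top$ (a side-effect successfully vanishing). Intuitively, Eve builds the derivation tree along a distinguished spine and must handle both Adam's spine continuation and any adversarial switch to a side-effect subtree.

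For correctness, we show that Eve wins $\mathcal G$ from the initial configuration iff $(X, \vec 0) \reach{*} (T, \vec 0)$ in $\Aut$. Forward: any derivation tree immediately induces a uniform Eve strategy that succeeds under arbitrary Adam choices, since every branch chosen by Adam ends either at the spine leaf $(T, \vec 0)$ or at a vanished leaf $\zerro$. Backward: any winning Eve strategy unfolds into a full derivation tree by materialising both Adam choices at each branching node, producing the unique spine to $(T, \vec 0)$ alongside vanishing subtrees for every side-effect. The complexity bound then follows from \cite{JT2007}, since our game has polynomially many states in $\size{\?X}$ and $k+1$ clocks bounded by the same constants as $\Aut$.

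The main obstacle is to faithfully capture the timing constraint that a side-effect created at spine time $t_0$ must vanish within the spine's total duration $\tau$, i.e., in time at most $\tau - t_0$. In a naive two-player formulation this is not enforced: once Adam commits to a side-effect branch, the subsequent play's timing is unconstrained by the (never-executed) spine completion, which would make the game strictly more permissive than the \TBPP semantics. We resolve this by exploiting the auxiliary clock $z$ and designing the winning condition so that the spine's reaching-time at $T$ and each side-effect's vanishing-time at $\top$ refer to a common total duration; this can be done either by parametrising the game over a candidate duration $\tau$ (guessed symbolically via region splits), or by enriching the region-graph computation underlying \cite{JT2007} to enforce the equality/inequality relating spine-completion and vanishing times. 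The careful design of this synchronisation, together with the verification that the resulting winning region still respects the polynomial bound on states, is the technical heart of the proof.
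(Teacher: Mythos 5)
Your high-level plan---reduce to a timed-automaton reachability game and invoke \cite{JT2007}---is the same as the paper's, but your concrete game does not work, and you have in fact put your finger on exactly the point where it fails. In your construction, when Adam challenges a side-effect the play is \emph{diverted} to that branch and the spine is abandoned, so within a single play the game never has access to both the time at which the side-effect vanishes and the time at which the spine completes. The constraint that a side-effect created at time $t_0$ must vanish by the spine's completion time $\tau$ therefore cannot be expressed as a winning condition of your game: an auxiliary clock $z$ measuring elapsed time is useless because $\tau$ is determined by the part of the derivation that is no longer being played (with the paper's example $X \Rule{x=0} Y \msplus Z$, $Z \Rule{x>0} \zerro$, Eve wins your game although $(X,0)$ cannot reach $(Y,0)$). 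Your two suggested repairs are not worked out and are genuinely problematic: $\tau$ ranges over the reals and is not bounded by the clock constants (it grows with the number of resets along the spine), so it cannot be ``guessed symbolically via region splits''; and ``enriching the region-graph computation'' is precisely the missing technical content, not a proof step.

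The paper avoids the problem by never diverting the play. Its game state is a \emph{pair}: the current spine process together with at most one side-effect process selected by Max, each running on its own disjoint copy of the clocks, so the game has $2k$ clocks and polynomially many locations. Min develops both components, but time elapses for both simultaneously because they live in the same play; Min wins by reaching $(T,\emptyset)$, i.e., the spine is at $T$ \emph{and} the tracked side-effect has already vanished. This makes the synchronisation between vanishing time and spine-completion time automatic, at the price of doubling the number of clocks---which still gives the claimed bound via \cite[Theorem~5]{JT2007}. Correctness also requires the observation that Max only ever needs to track \emph{one} side-effect (the one taking longest to vanish), which your proposal does not address. To salvage your proof you would need to replace the ``divert'' mechanism by such a product construction; that is the actual technical heart of the lemma.
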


\begin{proof}[Proof (Sketch)]
    We reduce to \TA reachability games,
    where two players (Min and Max) alternatingly determine a path of a \TA,
    by letting the player who owns the current nonterminal pick time elapse and a valid successor configuration.
    Min and Max aim to minimize/maximize the time until the play first visits a target nonterminal $T$.
    \TA reachability games can be solved in \EXPTIME, with the precise time complexity claimed above
    \cite[Theorem 5]{JT2007}.
    The idea of the construction is to let Min produce a derivation tree along the branch that leads to (unique) target process.
    Whenever she proposes to go along branching rule,
    Max gets to claim that the other sibling, not on the main branch, cannot be removed until the main branch ends.
    This can be faithfully implemented by storing only the current configuration on the main branch
    plus one more configuration (of Max's choosing) that takes the longest time to vanish.
    Min can develop both independently but must apply time delays to both simultaneously.
    Min wins the game if she can reach the target nonterminal and before that moment all the other branches have vanished.
\end{proof}

\begin{lemma}
\label{lem:multiclock:reach-2}
    The reachability problem for \TBPP is in \EXPTIME.
\end{lemma}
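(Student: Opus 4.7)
The plan is to reduce general \TBPP reachability to \emph{simple} reachability with a polynomial blow-up in clocks and an exponential blow-up in nonterminals, and then invoke \cref{lem:multiclock:reach-1}. The key observation is that in any successful derivation tree with $m$ target leaves, one can single out $m$ ``main'' branches (the root-to-target paths); every other process must vanish by the final instant in any case, since the target multiset contains no extras. So I will internalise the evolving multiset $\beta$ of main-branch nonterminals into the finite state of a single ``tracker'' process, and let the remaining ``side'' processes spawned along the way run in parallel with the original rules.

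Concretely, I will construct $\Aut'=(\?C',\?X',\?R')$ with
$\?C' = \?C \cup \setof{x^i}{x\in\?C,\ 1\leq i\leq m}$
and
$\?X' = \?X \cup \setof{[\beta]}{\beta\in\msets{\?X},\ |\beta|\leq m} \cup \set{\star}.$
The tracker carries one clock slot per live main branch. Every unary or vanishing rule of $\Aut$ on a nonterminal $X$ occupying slot $i$ of $\beta$ lifts to a corresponding tracker rule acting on $\setof{x^i}{x\in\?C}$. Every branching rule $X\Rule{\varphi;R}Y\msplus Z$ applied at slot $i$ of $\beta\msplus X$ lifts to either a ``keep-both'' unary tracker update $[\beta\msplus X]\Rule{}[\beta\msplus Y\msplus Z]$, which duplicates slot $i$ into two fresh slots (permitted only when $|\beta\msplus Y\msplus Z|\leq m$, and needed at most $m-1$ times in total), or to a ``keep-one'' branching rule $[\beta\msplus X]\Rule{}[\beta\msplus Y]\msplus Z$ in $\Aut'$ (symmetrically with $Y,Z$ swapped) spawning the discarded child as a side process carrying the $\?C$-values copied from slot $i$. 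A finalisation rule $[\set{T_1,\dots,T_m}]\Rule{\bigwedge_{i,x} x^i=0}\star$ collapses the tracker to $\star$ exactly when every slot's clocks are zero, and I ask the simple reachability query $([\set{X}],\vec 0)\reach{*}(\star,\vec 0)$ in $\Aut'$. Because simple reachability is an \emph{exact} reachability question, reaching $(\star,\vec 0)$ forces every side process to have vanished, faithfully matching the original requirement.

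The hard part is purely the bookkeeping of clock slots across branchings: choosing which two fresh slot indices $j,k$ receive the duplicated valuation when both children are kept, and transferring the values of slot $i$ to the spawned side process when only one is kept. This is uniform but must be set up carefully so that the simulation preserves every clock constraint. Correctness then follows by a routine induction on derivation trees in both directions: a successful derivation of $\Aut$ labels its $m$ root-to-target paths as main and is simulated step by step in $\Aut'$, while a successful run of $\Aut'$ decodes by unfolding the tracker into a derivation of $\Aut$ reaching exactly the target multiset. For complexity, $|\?C'|\in O(mk)$, $|\?X'|\in|\?X|^{O(m)}$, and the maximum clock constant is unchanged, so \cref{lem:multiclock:reach-1} decides the constructed instance in time exponential in the clocks and maximum constant and polynomial in the nonterminals, altogether $2^{\mathrm{poly}(n)}$, placing \TBPP reachability in \EXPTIME.
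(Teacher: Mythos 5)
Your proposal is correct and follows the same overall route as the paper: reduce general reachability to \emph{simple} reachability by packing a bounded multiset of processes into the finite control of a single tracker process with polynomially many clock copies (exponentially many nonterminals $[\beta]$, polynomially many clocks), and then invoke \cref{lem:multiclock:reach-1}. The two constructions differ in how the unboundedly many processes \emph{not} destined to become targets are handled. The paper keeps every live process inside some tracker: its trackers hold arbitrary bounded sub-multisets of the configuration (of size at most $m+2$), and when a sub-multiset would grow too large a dedicated branching rule partitions the tracker into two trackers, distributing the clock slots via a bijection $f$. You instead commit at each branching to which children lie on a root-to-target path, keep only those at most $m$ ``main'' processes in the tracker, and eject every discarded child as an ordinary side process governed by the original rules; that these side processes really do vanish is then enforced for free by the exactness of the simple-reachability target $(\star,\vec 0)$. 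This arguably gives a cleaner correctness argument (no reasoning about how split trackers recombine or disappear), at the price of the slot bookkeeping you flag: since $[\beta]$ is a multiset, you must either record the slot assignment in the nonterminal (turning $[\beta]$ into a tuple) or index clocks by nonterminal and occurrence, as the paper does with its clocks $x_{X,i}$ and shift assignments $S_{X,i}$. Two small repairs are needed: the finalisation rule should also reset the plain clocks $x\in\?C$ so that $(\star,\vec 0)$ is actually reachable, and the empty-target case $m=0$ needs the paper's dummy-nonterminal trick. The complexity accounting matches the paper's and correctly yields \EXPTIME.
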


\begin{proof}
First notice that the special case of reachability of the empty target set
trivially reduces to the simple reachability problem
by adding a dummy nonterminal, which is created once at the beginning and has to be the only
one left at the end.
Suppose we have an instance of the $k$-\TBPP reachability problem 
with target nonterminals $T_1,T_2,\ldots,T_m$.
We will create an instance of simple reachability
where the number of nonterminals increases exponentially but the number of clocks is $O(k \cdot \size{\?X} \cdot m)$.
In both cases, the claim follows from \cref{thm:simple:reachability}.

We introduce a nonterminal $[\beta]$ for every multiset $\beta \in \msets{\?X}$ of size $\size\beta \leq n := m+2$,
and we have the same three family of rules as in proof of \cref{thm:multiclock:cover},
where the last family 3.~is replaced by the family below:
\begin{description}
    \item[3'.] We add extra branching rules in order to maintain nonterminals $[\beta]$ corresponding to small multisets $\size\beta \leq n$.
    Let $\beta \in \msets{\?X}$ of size $\size \beta \leq n$
    and consider a partitioning $\beta = \beta_1 \msplus \beta_2$,
    for some $\beta_1, \beta_2 \in \msets{\?X}$.
    We identify $\beta$ with the set $\beta = \setof{(X, i)} {X \in \?X, 1 \leq i \leq \beta(X)}$
    of pairs $(X, i)$, where $i$ denotes the $i$-th occurrence of $X$ in $\beta$ (if any),
    and similarly for $\beta_1, \beta_2$.
    We add a branching rule
    $$[\beta] \Rule {} (\beta, f, \beta_1) \msplus (\beta, f, \beta_2), $$
    where $(\beta, f, \beta_i)$ are intermediate locations,
    for every bijection $f : \beta \to \beta_1 \cup \beta_2$
    assigning an occurrence of $X$ in $\beta$ to an occurrence of $X$ either in $\beta_1$ or $\beta_2$.
    We then have clock reassigning (non-branching) rules
    $$(\beta, f, \beta_1) \Rule {S_1} [\beta_1] \quad \textrm{and} \quad
    (\beta, f,  \beta_2) \Rule {S_2} [\beta_2],$$
    where $S_1 \equiv \bigwedge_{x \in \?C} \bigwedge_{X \in \?X} \bigwedge_{1 \leq i \leq \beta_1(X)} x_{X, i} := x_{f^{-1}(X, i)}$
    and similarly for $S_2$. \qedhere
\end{description}
\end{proof}

\begin{restatable}{lemma}{multiclockEXPTIMEhardness}
    \label{lem:multiclock:reach-3}
    The non-emptiness problem for \TBPP
    is \EXPTIME-hard already in discrete time, for
    1) \TBPP with constants in $\set{0, 1}$ (where the number of clocks is part of the input),
    and 
    2) for $k$-\TBPP for every fixed number of clocks $k \geq 2$.
\end{restatable}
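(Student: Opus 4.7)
The plan is to establish both lower bounds by reduction from \EXPTIME-complete problems, exploiting the fact that, when the target is the empty configuration, a branching rule $X \Rule{\varphi;R} Y \msplus Z$ in \TBPP realises a form of universal (AND) choice: the process $(X,\mu)$ vanishes only if \emph{both} of the independently-evolving offspring $(Y,\nu)$ and $(Z,\nu)$ vanish before the chosen global time horizon. Linear rules $X\Rule{\varphi;R}Y$ realise existential (OR) choice. This naturally suggests reductions from alternating machines or two-player games.

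For part 2, fixed $k \geq 2$ clocks with binary-encoded constants, I would reduce from the \EXPTIME-complete reachability problem for countdown games (equivalently, acceptance for alternating polynomial-space Turing machines). With two clocks and a maximal constant $M = 2^{\mathrm{poly}(n)}$ one has enough resources to store an exponential counter value in the integer part of the first clock, using the second clock as an auxiliary register to implement updates (increments, decrements, and tests of the first clock by fresh constants). Universal moves of the countdown game are translated into branching rules; existential moves, into (possibly several) linear rules. Because branching preserves clock values, the two children inherit the current encoded counter value and must each independently play out the remainder of the game. The construction can be done in discrete time since only integer clock values are compared against integer constants.

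For part 1, constants in $\set{0,1}$ with the number of clocks part of the input, I would reduce from the acceptance problem of alternating linearly-bounded Turing machines. The plan is to use a block of $O(n)$ clocks to encode a tape configuration of length $n$: each tape cell is represented by a small constant number of clocks whose values relative to the thresholds $0$ and $1$ encode its symbol (and head position). Each step of the ATM is implemented by a gadget of \TBPP rules that rewrites a single cell; universal states correspond to branching rules, existential states to linear rules. Because constants are restricted to $\set{0,1}$, the global time must be periodically ``recycled'' by resetting all clocks while preserving the encoded bits, which can be done using auxiliary rules of polynomial size. As in part 2, the simulation runs in discrete time.

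The principal technical obstacle in both reductions is faithfully synchronising the two branches produced by a branching rule: after $X\Rule{\varphi;R}Y\msplus Z$ the two processes $(Y,\nu)$ and $(Z,\nu)$ are completely independent, sharing only the global elapse of time, and there is no way to force them to make the same non-deterministic choices later. The standard trick is to encode the alternation purely in the \emph{choice of branching rule itself}: from a universal configuration of the ATM/game, the unique applicable branching rule has its two children $Y$ and $Z$ syntactically pre-committed to the two possible successor configurations, so each child is forced to continue the simulation along its assigned alternative. A global deadline enforced by a dedicated ``timer'' process (produced at the very start of the run) ensures that every surviving branch must vanish exactly when the deadline expires, ruling out cheating behaviours where one branch outlives the other. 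The overall template is analogous to existing \EXPTIME-hardness constructions for alternating timed automata and related models, adapted to work with either the binary-constant or the many-clock regime.
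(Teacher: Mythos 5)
Your overall strategy --- branching rules as universal (AND) choice, linear rules as existential (OR) choice, reduction from an alternating or game-theoretic \EXPTIME-complete problem --- is the right one, and for part (2) it essentially coincides with the paper's: the paper reduces from countdown games, with one nonterminal per game state, a branching rule $p \Rule{x_1=l;\, x_1:=0} p_1\msplus\cdots\msplus p_m$ collecting \emph{all} $l$-successors (so every adversary choice must be handled), and a vanishing rule $p\Rule{x_2=k}\zerro$. For part (1) you take a genuinely different route: the paper reduces from emptiness of the intersection of one CFG with $n$ NFAs, storing the $2n$ NFA states as clock values and letting the grammar's derivation tree drive the branching, whereas you simulate an alternating polynomial-space machine directly, one tape bit per clock. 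Your route is viable and arguably more self-contained, provided you freeze time entirely (guard every rule with $t=0$ for a never-reset clock $t$, and write bits with assignments $x:=0$, $x:=1$), so that the clocks form a static bit-vector on which universal steps, existential steps and acceptance translate verbatim.

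Two of the mechanisms you describe do not work as stated, however. First, the ``global deadline enforced by a dedicated timer process'': in a communication-free model a separate timer process cannot constrain \emph{when} its sibling processes vanish --- for the empty-target problem each process only has to vanish at some point of the run, and the timer's own vanishing imposes nothing on the others. The deadline must instead be carried by \emph{every} process in its own copy of a never-reset clock; this is precisely the role of $x_2$ in the paper's countdown-game reduction, and of the guessed intermediate NFA states copied into both children's clocks at each branching in the paper's part (1). Second, the ``periodic recycling of time while preserving the encoded bits'' is not implementable: with guards only against $0$ and $1$ you cannot read off a clock's current value in order to restore it after a reset. Both issues disappear once you commit to the $t=0$ freeze in part (1) and to elapsed-time-as-monotone-counter in part (2); relatedly, I would drop the plan to ``store an exponential counter in the integer part of a clock and implement increments, decrements and tests'' with two clocks --- countdown games need only a monotonically growing sum tested once against $k$, which an un-reset clock gives you for free, whereas a general updatable counter in two clocks is a much harder (and unnecessary) gadget.
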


\section{Conclusion}
\label{sec:conclusion}
We introduced basic parallel processes extended with global time
and studied the complexities of several natural decision problems, including variants of the coverability and reachability problems.
\Cref{tab:tbpp} summarizes our findings.

The exact complexity status of the simple reachability problem for $1$-\TBPP is left open.
An \NP upper bound holds from the (general) reachability problem (by \cref{thm:reach-1clock})
and \P-hardness comes from the emptiness problem for context-free grammars.
We conjecture that a matching polynomial-time upper bound holds.

Also left open for future work are \emph{succinct} versions the coverability and reachability problems, 
where the target size is given in binary.
A reduction from subset-sum games \cite{FJ2015}
shows that the succinct coverability problem for 1-\TBPP is \PSPACE-hard.
This implies that our technique showing the \NP-membership for the non-succinct version of the coverability problem (cf.~\cref{thm:cover-1clock})
does not extend to the succinct variant, and new ideas are needed.

\newcommand{\red}{\cellcolor{red}}
\newcommand{\orange}{\cellcolor{orange}}
\newcommand{\REF}[1]{(#1)}
\begin{table}[htpb]
    \centering
    \begin{tabular}{r|c|c|c|c|c}
                &Emptiness
                &\makecell{Simple \\ Coverability}
                &Coverability
                &\makecell{Simple \\Reachability}
                &Reachability
                \\
        \hline
        \TBPP   
                &\makecell{\EXPTIME\\~[Lem~\ref{lem:multiclock:reach-3}], \cite{JT2007}}
                &\makecell{\PSPACE \\~[Thm~\ref{thm:multiclock:cover}]}
                &\makecell{\PSPACE \\~[Thm~\ref{thm:multiclock:cover}]}
                &\makecell{\EXPTIME\\~[Thm~\ref{thm:simple:reachability}]}
                &\makecell{\EXPTIME\\~[Thm~\ref{thm:multiclock:reach}]}
        \\
        \hline
        1-\TBPP 
                &\makecell{\P \\~\cite{LMS2004} \ignore{\cite{HIM2013}}}
                &\makecell{\NL\\~\cite{LMS2004}}
                &\makecell{\NP \\~[Thm~\ref{thm:cover-1clock}]}
                &\P~/ \NP
                &\makecell{\NP \\~[Thm~\ref{thm:reach-1clock}]}
                \\
    \end{tabular}
        \\[2ex]
    \caption{Results on \TBPP and $1$-clock \TBPP.
    The decision problems are complete for the stated complexity class. Simple Coverability/Reachability
    refer to the variants where the target has size 1.}
    \label{tab:tbpp}
\end{table}

\bibliography{journals,conferences,autocleaned}
\appendix

\section{Missing proofs for \cref{sec:ha}}

\label{sec:logic:app}

\logicNP*

\noindent
It was proved in \cite{W1999} that \LA admits quantifier elimination
by reduction to quantifier elimination procedures for \PA and \RA.
However, the complexity of deciding the existential fragment of \LA was not discussed,
and moreover the procedure translating a formula of \LA to the separated fragment
has an exponential blow-up when dealing with modulo constraints in the proof of~\cite[Theorem~3.1]{W1999} (case 2),
and moreover it relies on an intermediate transformation with no complexity bound~\cite[Lemma~3.3]{W1999}.
Our result was obtained independently from \cite{BJW:ACM:2005},
where a similar argument is given showing that each sentence of \LA
can be transformed in polynomial time into a logically equivalent boolean combination of \PA and \RA sentences.
In fact, investigating the proof of \cite[Theorem 3.1]{BJW:ACM:2005}
one can see that their translation even preserves existential formulas.
We give below a short self-contained argument reducing \LA to \PA and \RA with only a polynomial blowup,
with the further property that it preserves existential formulas.

\begin{proof}
	By introducing linearly many new existentially quantified variables and suitable defining equalities,
	we assume w.l.o.g.~that terms are \emph{shallow}:
	\begin{align*}
		s, t \; ::= \; x \sep k \sep \floor x \sep \fractof x \sep {-x} \sep x + y \sep k \cdot x.
	\end{align*}
	Since now we have atomic propositions of the form $s \leq t$ 
	with $s, t$ shallow terms,
	we assume that we have no terms of the form ``$-x$''
	(by moving it to the other side of the relation
	and possibility introducing a new existential variable to make the term shallow again).
	Moreover, we can also eliminate terms of the form $k\cdot x$
	by introducing $O(\log k)$ new existential variables and using iterated doubling (based on the binary expansion of $k$);
	for instance, $5\cdot x$ is replaced by $x_0 + x_2$,
	by adding new variables $x_0, \dots, x_2$ and equalities
	$x_0 = x$, $x_1 = x_0 + x_0$, $x_2 = x_1 + x_1$.
	We end up with the following further restricted syntax of terms:
	\begin{align*}
		s, t \; ::= \; x \sep k \sep \floor x \sep \fractof x \sep x + y.
	\end{align*}
	We can now expand atomic propositions
	by replacing the expression on the left with the equivalent one on the right:
	\begin{align*}
		s \leq t \qquad \textrm{ iff } \qquad
			\floor s < \floor t \vee
			(\floor s = \floor t \wedge \fractof s \leq \fractof t),
	\end{align*}
	We push the integral $\floor \_$ and fractional $\fractof \_$ operations inside terms,
	according to the following rules:
	\begin{align*}
		\floor k 
			&\to k &
		\fractof k
			&\to 0	
		\\
		\floor {\floor x}
			&\to \floor x &
		\fractof {\floor x}
			&\to 0	
		\\
		\floor {\fractof x}
			&\to 0 &
		\fractof {\fractof x} 
			&\to \fractof x
	\end{align*}
	It remains to consider sums $x + y$.
	We perform a case analysis on $\fractof x + \fractof y$:
	\begin{align*}
		s \sim \floor {x + y}
			\to\ 	&(s \sim \floor x + \floor y \wedge \fractof x + \fractof y < 1)\ \vee \\
					&(s \sim \floor x + \floor y + 1 \wedge \fractof x + \fractof y \geq 1), \\
		s \sim \fractof {x + y}
			\to\ 	&(s \sim \fractof x + \fractof y \wedge \fractof x + \fractof y < 1)\ \vee \\
					&(s \sim \fractof x + \fractof y - 1 \wedge \fractof x + \fractof y \geq 1).
	\end{align*}
	We thus obtain a logically equivalent \emph{separated formula},
	i.e., one where integral $\floor x$ and fractional $\fractof y$ variables never appear together in the same term.
	Since we only added existentially quantified variables in the process,
	the resulting formula is still in the existential fragment,
	which can be decided in \NP by calling separately decision procedures for \PA and \RA.
\end{proof}

\TAreachLemma*

\begin{proof}
	Consider the following binary relation $R$
	between the configurations of the infinite transition system induced by the \TA $\Aut$
	and states of the \NFA $\Baut$:
	\begin{align}
		\label{eq:ta:bisim}
		((X, u), (X', \lambda)) \in R \quad \textrm{ iff } \quad
			X = X' \wedge u \in \lambda.
	\end{align}
	Let $\hat b = \trans$ for $b = (\trans, a)$ and $\hat \tau = \tau$ otherwise.
	It is immediate to show that $R$ is a variant of timed-abstract bisimulation, in the following sense:
	For every configuration $c$ of $\sem \Aut$ and state $d$ of $\Baut$,
	if $(c, d) \in R$, then
	\begin{enumerate}
		\item
			For every transition of $\Aut$ of the form $c \goesto \trans c'$
			there is a transition of $\Baut$ of the form $d \goesto b d'$
			s.t.~$\hat b = \trans$ and $(c', d') \in R$.
			Moreover, if $\trans = \tau$ is a time elapse,
			then in fact $d \goesto \tau d'$ \emph{for every} $d'$ s.t.~$(c', d') \in R$.
		\item
			For every transition of $\Baut$ of the form $d \goesto b d'$
			there is a transition of $\sem \Aut$ of the form $c \goesto \trans c'$
			s.t.~$\hat b = \trans$ and $(c', d') \in R$.
			Moreover, if $b = \tau$ is a symbolic time elapse,
			then in fact $c \goesto \tau c'$ \emph{for every} $c'$ s.t.~$(c', d') \in R$.
	\end{enumerate}
	The two additional conditions in each point distinguish $R$ from timed-abstract bisimulation.

	For the ``only if'' direction, assume $u \tareach {XY} \delta v$,
	as witnessed by a path
	$$(X_0, u_0) \step {\trans_1 \delta_1} (X_1, u_1) \step {\trans_2 \delta_2} \cdots \step {\trans_n \delta_n} (X_n, u_n), $$
	with 
	$(X_0, u_0) = (X, u)$ and $(X_n, u_n) = (Y, v)$,
	where $u, v$ are the initial, resp., final values of the clock,
	and $\delta = \delta_0 + \cdots + \delta_n$ is the total time elapsed.
	Let $\lambda, \mu \in \Lambda$ be the unique intervals s.t.~$u \in \lambda, v \in \mu$,
	and take $c = (X, \lambda), d = (Y, \mu)$.
	Since $((X, u), c) \in R$, by the definition of $R$ there exists a corresponding path in $\Baut$
	$$c_0 \goesto {b_1 \tau} c_1 \goesto  {b_2 \tau} \cdots \goesto {b_n \tau} c_n, $$
	where $c_0 = (X, \lambda)$, $c_n = (Y, \mu)$, for every $1 \leq i \leq n$, $((X_i, u_i), c_i) \in R$,
	$c_i$ of the form $(X_i, \lambda_i)$,
	and $b_i$ is of the form $(\trans_i, a_i)$,
	with $a_i = \tick {\lambda_i}$ whenever $\trans_i$ is a reset (and $a_i = \varepsilon$ otherwise).
	Let $\trans_{i_1}, \dots, \trans_{i_m}$ be all reset transitions in $\trans_1, \dots, \trans_n$.
	The time elapsed between the $j$-th and the $(j+1)$-th reset is
	$u_{i_{j+1} - 1} = \delta_{i_j} + \delta_{i_j + 1} + \cdots + \delta_{i_{j+1} - 1}$,
	and, by the invariant above, $u_{i_{j+1} - 1} \in \lambda_{i_{j+1} - 1}$
	and $a_{i_{j+1} - 1} = \tick {\lambda_{i_{j+1} - 1}}$.
	Consequently, the total time elapses is
	$\delta = (u_{i_1 - 1} - u_0) + (u_{i_2 - 1} + \cdots + u_{i_m - 1}) + u_m$.
	By the definition of $\varphi_{L_{cd}}$,
	this shows $(u, \delta, v) \models x \in \lambda \wedge x' \in \mu \wedge \psi_{cd}(x, \ttime, x')$, as required.
	
	For the ``if'' direction, assume $(u, \delta, v) \models \varphi_{XY}(x, \ttime, x')$.
	There exist configurations $c = (X, \lambda)$ and $d = (Y, \mu)$ s.t.~
	$u \in \lambda$, $v \in \mu$, and $(u, \delta, v) \models \psi_{cd}(x, \ttime, x')$.
	By identifying variables with their values to simplify the notation in the sequel,
	there exist $\vec y = (y_\lambda)_{\lambda \in \Lambda}$ s.t.\ $\vec y \models \psi_{L_{cd}}$
	and $\vec x = (x_\lambda)_{\lambda \in \Lambda}$
	s.t.\ $x_\lambda \in y_\lambda \cdot \lambda$ and $\delta = v - u + \sum_{\lambda \in \Lambda} x_\lambda$.
	By the definition of $\psi_{L_{cd}}$, there exists a run in $\Baut$ from $c$ to $d$ of the form
	$$ (X, \lambda) \goesto {\tau (\trans_1, a_1)} (X_1, \lambda_1) \goesto {\tau (\trans_2, a_2)} \cdots \goesto {\tau (\trans_n, a_n)} (Y, \mu),$$
	where without loss of generality
	we have composed possibly several $\tau$-transitions together into a single $\tau$-transition,
	\st, whenever $\trans_i$ is a reset, $a_i = \tick {\lambda_i}$ (and $a_i = \varepsilon$ otherwise).
	Consequently, $y_\lambda$ is precisely the number of $\trans_i$'s s.t.~$a_i = \tick \lambda$.
	Since $((X, u), c) \in R$ and $R$ is a timed-abstract bisimulation,
	there exists a corresponding run in $\sem \Aut$
	from $(X, u)$ to $(Y, v)$ of the form
	$$ (X, u) \goesto {\delta_1 \trans_1} (X_1, u_1) \goesto {\delta_2 \trans_2} \cdots \goesto {\delta_n \trans_n} (Y, v)$$
	s.t.~$((X_i, u_i), c_i) \in R$ for every $1 \leq i < n$, i.e., $u_i \in \lambda_i$.
	It remains to show that $\delta$ is the total time elapsed by the run above,
	i.e., $\delta = \delta_1 + \cdots + \delta_n$.
	Let $\trans_{i_1}, \dots, \trans_{i_m}$ be all reset transitions from $\trans_1, \dots, \trans_n$.
	By the definition of $R$, we can choose the $u_i$'s arbitrarily in $\lambda_i$.
	In particular we can choose $u_{i_1} \in \lambda_{i_1}, \dots, u_{i_m} \in \lambda_{i_m}$
	at the time of resets s.t.~for every interval $\lambda \in \Lambda$,
	the cumulative value of the clock at the times of reset when it was in interval $\lambda$
    is precisely $x_\lambda = \sum \setof {u_{i_j}} {\lambda_{i_j} = \lambda}$.
	Consequently, the total time accumulated by the clock in any reset is
	$\sum_{\lambda \in \Lambda} x_\lambda = u_{i_1} + \cdots + u_{i_m}$.
	Moreover, we have $u_{i_1} = \delta_1 + \cdots + \delta_{i_1} - u$,
	$u_{i_j} = \delta_{i_{j-1} + 1} + \cdots + \delta_{i_k}$ for $1 < j \leq m$,
	and $v = \delta_{i_m + 1} + \cdots + \delta_n$.
	Consequently, $\delta_1 + \cdots + \delta_n = u - v + \sum_{\lambda \in \Lambda} x_\lambda$,
	as required.
\end{proof}

\section{Missing proofs for \cref{sec:multiclock}}

\simpleReachability*

\begin{proof}
    Let $\Baut = (\?C, \?X, \?R)$ be the input \TBPP
    and suppose we want to solve a simple reachability query $(X, \vec 0) \reach * (T, \vec 0)$
    We denote with $\?X_\emptyset$ the set $\?X \cup \set \emptyset$.
    We construct the \TA reachability game $\Aut = (\?D, \?Y = \Ymin \cup \Ymax, \?S)$,
    where the set of clocks $\?D$ contains two copies $x_L$ and $x_R$ for every clock $x \in \?C$,
    locations in $\Ymin=\?X \times \?X_{\set\emptyset}$ are of the form $(X, Y_\emptyset)$ with $Y_\emptyset = Y \in \?X$ or $Y_\emptyset = \emptyset$,
    and $\Ymax=\?X\times \?X \times \?X_\emptyset$.
    For a formula $\varphi$ and $D \in \set{L, R}$, we write $\restrict \varphi D$
    for the formula obtained from $\varphi$ by replacing every clock $x\in\?C$ by its copy $x_D$;
    similarly for $\restrict S D$, where $S$ is a sequence of clock assignments.
    The initial location is $(X, \emptyset)$ and the target location is $(T, \emptyset)$.

    The idea of the construction is to let
    Min stepwise produce a derivation tree along the branch that leads to (unique) target process.
    Whenever she proposes to go along a branching rule,
    Max gets to claim that the other sibling, not on the main branch,
    cannot be removed until the branch ends.
    This can be faithfully implemented by storing only the current configuration on the main branch
    plus one more configuration (of Max's choosing) that takes the longest time to vanish.
    Min can develop both independently but must apply time delays to both simultaneously.
    The game ends and Min wins when $(T, \emptyset)$ is reached.

    A round of the game from position $(X, Y_\emptyset)$ is played as follows. 
    We assume w.l.o.g.~that there are only branching and vanishing rules.
    There are two cases.
    \begin{enumerate}

        \item In the first case, Min plays from the first component.
            For every rule $X \Rule {\varphi;S} Z_0\msplus Z_1$, 
            Min chooses that $Z_i$ is the nonterminal which will reach $T$, and that $Z_{1-i}$ should vanish.
            Thus, the game has two transitions
            $$X \Rule {\restrict \varphi L; \restrict S L} (Z_i, Z_{1-i}, Y_{\emptyset}), \quad \textrm{ for } i \in \set{0, 1},$$
            This is followed by Max, who chooses whether to keep the second component $Y_\emptyset$,
            or to replace it by $Z_{1-i}$.
            Thus, there are transitions
            $$(Z_i, Z_{1-i}, Y_{\emptyset}) \Rule{} (Z_i, Y_{\emptyset}) \quad \textrm{ and } \quad
            (Z_i, Z_{1-i}, Y_{\emptyset}) \Rule{} (Z_i, Z_{1-i}), \quad \textrm{ for } i \in \set{0, 1}.$$

        \item In the second case, Min plays from the second component $Y_\emptyset = Y \neq \emptyset$.
        There are two subcases.
        \begin{enumerate}
            \item Min selects a vanishing rule $Y \Rule {\varphi} \emptyset$.
            The corresponding transition in the game is
            $$(X, Y) \Rule {\restrict \varphi L} (X, \emptyset).$$
            \item Min selects a branching rule $Y \Rule {\varphi; S} Z_0\msplus Z_1$. 
            which corresponds to the following transition in the game:
            $$ (X, Y) \Rule {\restrict \varphi R; \restrict S R} (X, Z_0, Z_1). $$
            Then, Max chooses to keep $Z \in \set{Z_0, Z_1}$, corresponding to the following two transitions
            $$ (X, Z_0, Z_1) \Rule {} (X, Z), \quad \textrm{ for } Z \in \set{Z_0, Z_1}.$$
        \end{enumerate}
    \end{enumerate}
    We have that Min has a strategy to reach $(T, \emptyset)$ from $(X, \emptyset)$ in the \TA game if, and only if,
    $(X, \vec{0}) \reach{*} (T, \vec{0})$ in the \TBPP.
\end{proof}

\multiclockEXPTIMEhardness*

\mypar{First case.}
    The first case can be shown by reduction from the non-emptiness problem of the language intersection of one
    context-free grammar and several nondeterministic finite automata
since we can reduce to the case above by adding polynomially many clocks.
We reduce from the non-emptiness problem of the intersection of one \CFG $G$
and several nondeterministic finite automata (\NFA) $\Aut_1, \dots, \Aut_n$,
which is \EXPTIME-hard (this is a folklore result; cf.~also \cite{HLMS:LMCS:2012}).
We proceed as in a similar reduction for densely-timed pushdown automata \cite{AAS:LICS:2012}.
We assume \wlg that all $\Aut_i$'s control locations are from a fixed set of integers $Q = \set{0, \dots, n - 1}$.
We construct a \TBPP $\Baut$ with $2n + 1$ clocks:
For each \NFA $\Aut_i$, we have a clock $x_i$ representing its current control location
and a clock $y_i$ representing its future control location after the relevant nonterminal reduces to $\varepsilon$;
an extra clock $t$ guarantees that no time elapses.
We assume \wlg that $G$ is in Chomsky-Greibach normal form, i.e., productions are either of the form $X \goesto a YZ$ or $X \to \varepsilon$.

\ignore{
In the construction below we allow some syntactic sugar in order to simplify the presentation of the \TBPP.
We will use clock values only from the discrete and bounded domain $D = \set{0, 1, \dots, n - 1}$;
thus our reduction works already in discrete time.
We allow assignment operations of the form $x_i := k$ with $k \in D$.
Such an operation can be simulated by elapsing $n - k$ time units, resetting $x_i$, and then elapsing $k$ time units (thus $n$ time units in total).
Along the way, in order to preserve the value of the other clocks (also in $D$),
at every time unit we reset clocks that reach value $n$.
We will also use clock assignments of the form $x_i := y_j$.
They can be simulated by nondeterministically guessing the value $x_i = k$ and then assigning $y_j := k$.
}

We now describe the reduction.
The simulation of reading input symbol $a$ proceeds in $n+1$ steps.
In step $0$, for every production $X \goesto a YZ$ of $G$ we have a production of $\Baut$
\begin{align*}
	X \Rule {t = 0} Y_{a, 1}(\vec x := \vec x, \vec y := \vec k, t := t) \msplus Z(\vec x := \vec k, \vec y := \vec y, t := t).
\end{align*}
We use here an extended syntax to simplify the presentation.
Intuitively, the rule above requires guessing a new tuple of clock values $\vec k \in Q^n$,
which represent the intermediate control locations of the \NFA's after nonterminal $Y$ will vanish.
Process $Y_{a, 1}$ inherits the value of clocks $\vec x$ from $X$,
and the new value of clocks $\vec y$ is the guessed $\vec k$.
Symmetrically, process $Z$ inherits the value of clocks $\vec y$ from $X$,
and the new value of clocks $\vec x$  is $\vec k$.
While guessing $\vec k$ seemingly requires an exponential number of rules,
one can in fact implement it by guessing its components one after the other,
by introducing $n$ extra clocks; we avoid the extra bookkeeping for simplicity.

In step $i$, for $1 \leq i \leq n$, for every transition $p_i \goesto a q_i$ of $\Aut_i$, we have a production
\begin{align*}
    X_{a, i} \Rule {t = 0, x_i = p_i, x_i := q_i} X_{a, i+1}.
\end{align*}
The current phase ends with a production $X_{a, n+1} \Rule {t=0} X$.
Finally, a production $X \to \varepsilon$ of $G$ is simulated by
\begin{align*}
	X \Rule {\vec x = \vec y} \emptyset.
\end{align*}
The following lemma states the correctness of the construction.
We denote by $\lang G X$ the set of words over the terminal alphabet recognised by nonterminal $X$ in the grammar $G$,
and by $\lang {\Aut_i} {p, q}$, with $p, q \in Q_i$, the set of words s.t.~the \NFA $\Aut_i$ has a run from state $p$ to state $q$.

\begin{lemma}
    We have that $\lang G X \cap \lang {\Aut_1} {p_1, q_1} \cap \cdots \cap \lang {\Aut_n} {p_n, q_n} \neq \emptyset$
    if, and only if, $(X, \mu) \reach * \emptyset$,
    where $\mu = \set {x_1 \mapsto p_1, y_1 \mapsto q_1, \dots, x_n \mapsto p_n, y_n \mapsto q_n}$.
\end{lemma}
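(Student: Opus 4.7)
I would prove the lemma via a strengthened invariant rather than attack it directly. Namely, I would show that for every nonterminal $Y \in \?X$ of $G$ and every clock valuation $\nu$ with $\nu(t) = 0$ and $\nu(x_i), \nu(y_i) \in Q$,
\begin{align*}
	(Y, \nu) \reach{*} \emptyset \quad \textrm{iff} \quad \lang G Y \cap \bigcap_{i=1}^n \lang{\Aut_i}{\nu(x_i), \nu(y_i)} \neq \emptyset.
\end{align*}
The lemma is then the special case $Y = X$, $\nu = \mu$. The intended meaning of the two copies of clocks is that $\vec x$ stores the \emph{current} state of each $\Aut_i$ and $\vec y$ stores the \emph{target} state that must be reached after the subword derived from $Y$ has been consumed. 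The guard $t = 0$ throughout, together with a reset on every step, enforces that no time elapses, so the construction in fact works in discrete time.

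For the $(\Leftarrow)$ direction I would induct on the height of a parse tree for a common word $w \in \lang G Y \cap \bigcap_i \lang{\Aut_i}{\nu(x_i), \nu(y_i)}$. In the base case $Y \to \varepsilon$, membership in every $\lang{\Aut_i}{\nu(x_i), \nu(y_i)}$ forces $\nu(x_i) = \nu(y_i)$, so the guard of the vanishing rule $Y \Rule{\vec x = \vec y} \emptyset$ is satisfied. In the inductive case $Y \to a\, Y_1 Y_2$ with $w = a w_1 w_2$, fix for each $i$ the intermediate states $r_i$ (after reading $a$ in $\Aut_i$) and $s_i$ (after also reading $w_1$). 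I would fire the branching rule with the guessed tuple $\vec k := \vec s$, producing sibling processes that, after the $n$ deterministic simulation steps updating each $x_i$ from $\nu(x_i)$ to $r_i$ via transitions $\nu(x_i) \goesto a r_i$ of $\Aut_i$, reach valuations $\nu_1$ with $\nu_1(x_i) = r_i, \nu_1(y_i) = s_i$ and $\nu_2$ with $\nu_2(x_i) = s_i, \nu_2(y_i) = \nu(y_i)$. The induction hypothesis applied to $(Y_1, \nu_1)$ and $(Y_2, \nu_2)$ vanishes both processes.

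For the $(\Rightarrow)$ direction I would induct on the length of a vanishing run $(Y, \nu) \reach{*} \emptyset$, using that each process evolves independently and that every process created must itself vanish. The first rule fired at the root is either a vanishing rule $Y \Rule{\vec x = \vec y} \emptyset$ — witnessing $\nu(x_i) = \nu(y_i)$ for every $i$ together with a production $Y \to \varepsilon$, hence $\varepsilon$ belongs to the intersection — or the first rule of a branching macro for $Y \to a\, Y_1 Y_2$. In the latter case, the $n$ subsequent reading rules each witness an $a$-transition $\nu(x_i) \goesto a r_i$ in $\Aut_i$; extracting the two sibling subprocesses and applying the induction hypothesis yields words $w_1 \in \lang G {Y_1} \cap \bigcap_i \lang{\Aut_i}{r_i, s_i}$ and $w_2 \in \lang G {Y_2} \cap \bigcap_i \lang{\Aut_i}{s_i, \nu(y_i)}$. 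Concatenating gives $w = a w_1 w_2$ in the required intersection.

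The main obstacle is the bookkeeping around the nondeterministically guessed tuple $\vec k$ and the syntactic sugar ``$\vec x := \vec k$'', which hides an intermediate phase that must be unfolded and shown to faithfully propagate consistent values to both children. A secondary subtlety is ensuring that, because processes run concurrently in the \TBPP semantics, one cannot cheat by interleaving steps of different branches in a way that breaks the invariant; this is resolved by observing that the $t = 0$ guard forbids any time elapse and that each created process contributes an independent subrun whose correctness follows from the inductive hypothesis applied separately.
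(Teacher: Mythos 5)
The paper states this lemma without proof (it appears only as ``the following lemma states the correctness of the construction''), so there is no proof of record to compare against. Your proposal supplies the missing argument, and it is the right one: the strengthened invariant $(Y,\nu)\reach{*}\emptyset$ iff $\lang G Y \cap \bigcap_i \lang{\Aut_i}{\nu(x_i),\nu(y_i)}\neq\emptyset$ is exactly the intended semantics of the two clock copies, and the two inductions (on parse-tree height and on run length) go through. You also correctly place the $n$ reading steps for the symbol $a$ on the \emph{left} child only, which is the one point where a careless reading of the construction could derail the invariant.

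Two small points to tighten. First, your remark that ``the $t=0$ guard forbids any time elapse'' is not literally true: time-elapse transitions are always available in the semantics, and the vanishing rule $X \Rule{\vec x = \vec y} \emptyset$ carries no $t=0$ guard. What is true is that every information-carrying rule requires $t=0$, so no such rule can fire after a positive delay, and the guard $\vec x = \vec y$ is invariant under time elapse (both sides shift by the same amount within a single process); hence any run to $\emptyset$ can be normalised to one with zero total delay, and only then does the projection argument for the $(\Rightarrow)$ direction apply cleanly. Second, in the $(\Rightarrow)$ direction you should say explicitly that projecting a global run onto the descendants of one child yields a valid (shorter) run of that sub-process — this is the standard BPP independence argument, and it is what licenses applying the induction hypothesis to each sibling separately. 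With these clarifications your proof is complete.
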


\mypar{Second case.}

We now show that the reachability of the empty configuration problem for $k$-\TBPP is \EXPTIME-hard 
for any fixed number of clocks $k \geq 2$.
This can be shown by reduction from \emph{countdown games} \cite{JLS2008},
which are two-player games $(Q,T,k)$ given by a finite set $Q$ of control states, a finite set $T \subseteq (Q \x \N_{>0} \x \Q)$ of transitions, labelled by positive integers, and a target number $k\in\N$.
All numbers are given in binary encoding.
The game is played in rounds, each of which starts in a pair $(p,n)$ where $p\in Q$ and $n \le k$,
as follows.
First player $0$ picks a number $l \le k-n$, so that at least one $(p,l,p')\in T$ exists;
Then player $1$ picks one such transition and the next round starts in $(p',n+l)$.
Player $0$ wins iff she can reach a configurations $(q,k)$ for some state $q$.

Determining the winner in a countdown game is \EXPTIME-complete \cite{JLS2008}
and can easily encoded as a reachability of the empty configuration problem for a $2$-\TBPP $(\{x_1,x_2\},Q, \?R)$.
All we need is one nonterminal per control state and two clocks. 
Suppose $p_1,p_2, \ldots ,p_m$ are the $l$-successors of state $p$ in the game. Then our \TBPP has a rule
$$p \Rule{x_1=l; x_1:=0} p_1\msplus p_2\msplus\ldots\msplus p_m$$
This checks that the first clock is $l$ and then resets it.
Finally, for all states $p$ there is a rule
$p \Rule{x_2=k} \zerro$
that checks if the second clock (which is never reset) equals the target $k$.
  
Notice that a derivation tree that reduces $(p,0,0)$ to the empty multiset is a winning strategy for player $0$ in the
countdown game: Since all branches end in a vanishing rule, the total time on the branch was exactly $k$.

\section{\PSPACE-hardness of $1$-\TBPP Coverability with Binary Targets}
\label{app:1-cover-binary}
We show that reachability and coverability for 1-clock \TBPP with target sets 
given as a vector in $\N^\?X$ in binary encoding, is \PSPACE-hard.
This contrasts with the case of targets encoded in unary, which is \NP-complete
(\cref{thm:multiclock:cover}).
\newcommand{\Adam}{$\forall$}
\newcommand{\Eve}{$\exists$}

We reduce from solving subset-sum games, which is \PSPACE-complete \cite{FJ2015}.
A \emph{subset-sum game} (\SSG) has as input a natural number $s \in \N$ and a tuple
\begin{align*}
    \forall \set {u_1, v_1} \exists \set {w_1, z_1} \cdots \forall \set {u_n, v_n} \exists \set {w_n, z_n},
\end{align*}
where all $u_i, v_i, w_i, z_i \in \N$ are given in binary.
The game is played by two players, \Adam\ and \Eve\,
who alternate in choosing numbers:
At round $i$, \Adam\ chooses a number $x_i \in \set{u_i, v_i}$
and \Eve\ chooses a number $y_i \in \set{w_i, z_i}$.
At the end of the game, the two players have jointly produced a sequence of numbers
$x_1, y_1, \dots, x_n, y_n$,
and \Eve\ wins the game if $x_1 + y_1 + \cdots + x_n + y_n = s$.

Given a \SSG as above, we construct a 1-clock \TBPP and a target configuration $\alpha$
s.t.~\Eve\ wins the \SSG iff $w$ is reachable from the initial configuration.
\Eve's moves are mimicked by branching transitions,
and \Adam's by nondeterministic choice.
An additional process checks that the total time elapsed in every branch is equal to the target sum $s$.

We can implement the above intuition in a \TBPP with a single clock $x$,
nonterminals $\?X=\{S,T, F\}\cup\{\forall_i,\exists_i,U_i,V_i\mid 1\le i \le n\} \cup \set{\exists_{n+1}}$ and rules as follows.
We have an initial production
\begin{align*}
    S \Rule{x=0} \forall _1\msplus T
\end{align*}
and a final rule that allows $T$ to check that the total elapsed time is $s$:
\begin{align*}
    T \Rule{x = s; x:=0} F
\end{align*}
For each round $1 \leq i \leq n$, the $i$-th move of \Eve\ is modelled by rules
\begin{align*}
    \exists_i \Rule{x=w_i;x:=0} \forall_i \quad\text{ and }\quad \exists_i \Rule{x=z_i; x=0} \forall_i
\end{align*}
and the $i$-th move of \Adam\ by
\begin{align*}
    \forall_i \Rule{} U_i \msplus V_i,
    \quad U_i \Rule{x=u_i;x:=0} \exists_{i+1},
    \quad\text{ and }\quad V_i \goesto {x=v_i; x=0} \exists_{i+1}.
\end{align*}
An extra rule concludes the simulation: $\exists_{i+1} \Rule{x = 0} F$.
The target multiset is $\alpha = (2^n + 1) \cdot F$, i.e., $2^n+1$ copies of $F$.
The correctness of the construction is stated in the lemma below.

\begin{lemma}
    \Eve\ wins the \SSG if, and only if, $(S, \vec{0}) \reach{*} (\alpha, \vec{0})$.
\end{lemma}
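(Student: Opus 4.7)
The plan is to prove the lemma in both directions by establishing an explicit correspondence between Eve's winning strategies and derivations $(S, \vec 0) \reach{*} (\alpha, \vec 0)$. The construction rests on two key observations. First, none of the intermediate nonterminals $\forall_i, \exists_j, U_i, V_i, T$ admits a vanishing rule, so every process created in any derivation is forced to eventually become an $F$. Second, because $F$ has no outgoing rule, every copy of $F$ in the final configuration must have $x = 0$ at the same global time, and this time is pinned to exactly $s$ by the $T$-branch, whose only rule $T \Rule{x=s; x:=0} F$ fires precisely when the global clock equals $s$.

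For the forward direction ($\Rightarrow$), given a winning Eve strategy $\sigma$, I would first apply the initial rule $S \Rule{x=0} \forall_1 \msplus T$ and then let $T$ elapse time $s$ and fire, producing one copy of $F$ at clock $0$ at global time $s$. On the $\forall_1$ side, I would build a full binary derivation tree of depth $n$: each Adam node $\forall_i$ fires its branching rule $\forall_i \Rule{} U_i \msplus V_i$, so both alternatives $u_i$ and $v_i$ are explored in parallel. Within each branch, at every Eve node the derivation plays the move prescribed by $\sigma$ applied to the Adam-choice sequence read off that branch. Because $\sigma$ is winning, the sum of delays along every branch equals $s$, so each of the $2^n$ leaves reaches the final Eve nonterminal at clock $0$ at global time $s$ and fires the rule producing $F$. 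Together with the $F$ from $T$, this yields the target $\alpha = (2^n + 1) \cdot F$ at clocks $\vec 0$.

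For the backward direction ($\Leftarrow$), given any derivation $(S, \vec 0) \reach{*} (\alpha, \vec 0)$, viewed as a derivation tree $\mathcal{T}$, the structural observations above show that $\mathcal{T}$ consists of a single $T$-branch plus the $\forall_1$-rooted subtree producing exactly $2^n$ copies of $F$ (since branching happens only at the initial rule and at the $n$ Adam nodes along each branch). The $T$-rule forces the final global time to be $s$, and since each leaf's last transition to $F$ must occur at global time $s$ with $x = 0$, every root-to-leaf path in the $\forall_1$-subtree has total elapsed time $s$. I would then extract Eve's strategy $\sigma$ by reading, along each root-to-leaf path corresponding to an Adam-choice sequence $(x_1, \dots, x_n)$, the Eve choices fired at the $\exists_j$ nodes on that path. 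The resulting $\sigma$ is a well-defined strategy because two leaves whose Adam prefixes agree up to round $j$ descend from a common $\exists_j$ ancestor in $\mathcal{T}$ and hence agree on Eve's $j$-th choice; it is winning because the equality $\sum_i x_i + \sum_i y_i = s$ holds along every branch.

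The main point requiring care is the structural analysis in the backward direction: one must verify that the given derivation really has the full-binary shape described, with no extraneous subtrees that somehow fail to synchronise. The absence of vanishing rules, the rigid doubling effected by each branching rule $\forall_i \Rule{} U_i \msplus V_i$, and the timing constraint from $F$-immortality together pin down the tree structure, after which both the strategy extraction and the verification of its winning condition are mechanical bookkeeping.
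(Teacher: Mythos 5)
The paper states this lemma without proof, so there is nothing to compare against except the one-line intuition in the surrounding text; your argument is correct and fleshes out exactly that intended correspondence (Adam's choices realised by the forced branching rules $\forall_i \Rule{} U_i \msplus V_i$, Eve's by the nondeterministic choice at the $\exists_j$ nodes, and the $T$-branch together with the impossibility of resetting an $F$ pinning every branch's duration to $s$). The two points that genuinely need care — that the absence of vanishing rules plus the exact-match target forces the derivation tree to be the full binary tree with $2^n+1$ leaves, and that the extracted Eve strategy is well-defined because leaves agreeing on an Adam prefix share the corresponding $\exists$-ancestor — are both identified and handled correctly. (Note that the paper's prose sentence "Eve's moves are mimicked by branching transitions, and Adam's by nondeterministic choice" has the two players swapped relative to the actual rules; you correctly follow the rules rather than that sentence.)
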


The reduction above produces a 1-\TBPP where
$(S, 0) \reach{*}(\alpha, \vec{0})+\gamma$ implies that $\gamma = \zerro$.
We thus get the following lower bound.
\begin{theorem}
The reachability and coverability problems for 1-clock \TBPP with target configuration encoded in binary are \PSPACE-hard.
\end{theorem}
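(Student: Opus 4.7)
The plan is to reduce from the subset-sum game (\SSG) problem, which is \PSPACE-complete by~\cite{FJ2015}. Given an \SSG instance with target sum $s$ and alternating pairs $\forall\set{u_i,v_i}\exists\set{w_i,z_i}$ for $1\le i \le n$, I would construct a 1-clock \TBPP whose derivation trees correspond precisely to game plays. The intuition is that branching rules naturally realise \emph{universal} choice (both children must be successfully reduced), whereas multiple rules with the same left-hand side realise \emph{existential} choice. Adam's choice in round $i$ is therefore encoded by a branching rule $\forall_i\Rule{}U_i\msplus V_i$, and Eve's by two alternative unary rules from $\exists_i$. Time elapses of $u_i$ (resp.~$v_i$, $w_i$, $z_i$) are enforced by guards $x=u_i$ etc.\ followed by $x:=0$, so that the cumulative time spent on any particular branch of the derivation tree equals the sum of the chosen numbers along that branch.

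Next, I would add a separate ``auditor'' process $T$ that is spawned once at the root via $S\Rule{x=0}\forall_1\msplus T$, with the sole rule $T\Rule{x=s;\,x:=0}F$. Since the global time elapses uniformly for all processes in a \TBPP, this process turns into $F$ precisely when the global time is $s$. Choosing the target multiset $\alpha = (2^n+1)\cdot F$ in binary encoding then forces (i) all $2^n$ leaves of the complete Adam/Eve derivation tree to reduce to $F$, and (ii) the auditor to witness that total elapsed time equals $s$ at the moment every branch reaches a target. The correctness lemma splits naturally into two directions: from a winning strategy of Eve one builds a derivation tree whose every branch sums to $s$; conversely, any derivation tree reaching $\alpha$ must contain $2^n$ copies of $F$, and since Adam branches deterministically double the number of processes while Eve rules do not, such a tree must traverse every Adam branch and the time constraint forces Eve's choices to complete the sum.

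For the coverability part, I would observe that the constructed \TBPP has no slack: every reachable configuration has exactly $2^n+1$ processes (since branching only happens at $\forall_i$ and these exactly compensate the disappearance enforced by the auditor/leaves structure up to the shape of the derivation tree), and in fact the only way to reach a configuration $\gamma\ge\alpha$ is $\gamma=\alpha$. Hence reachability and coverability coincide on the image, and the same reduction establishes \PSPACE-hardness for both problems with binary-encoded targets.

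The main obstacle will be the correctness argument tying global time synchronisation to the subset-sum condition: I must argue that although different branches of the derivation tree fire their rules at different local points, the \emph{global} time advances monotonically and uniformly, so every branch from root to a leaf $F$ accumulates exactly the same total delay, namely $s$. This relies on the auditor's guard $x=s$ being checkable only at one global instant (since $x$ is reset along every branch but the auditor has its own copy), and on the fact that the path-sum of the chosen $u_i/v_i/w_i/z_i$ equals this global delay. A careful induction on $i$ will show that at the moment the $i$-th level of the tree acts, the partial sum corresponds to $x_1+y_1+\cdots+x_i+y_i$ as played in the game, from which the equivalence with Eve winning follows.
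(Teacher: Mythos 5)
Your proposal is correct and takes essentially the same route as the paper's own proof: a reduction from subset-sum games \cite{FJ2015} where Adam's universal choices become branching rules $\forall_i\Rule{}U_i\msplus V_i$, Eve's existential choices become alternative unary rules guarded by $x=w_i$ / $x=z_i$ with resets, an auditor process spawned by $S\Rule{x=0}\forall_1\msplus T$ with rule $T\Rule{x=s;\,x:=0}F$ certifies the total elapsed time, and the binary-encoded target $(2^n+1)\cdot F$ forces all $2^n$ game branches plus the auditor to synchronise at time $s$. The paper likewise concludes coverability hardness from the observation that any reachable configuration covering $\alpha$ must equal $\alpha$ (your intermediate claim that \emph{every} reachable configuration has exactly $2^n+1$ processes is imprecise, since configurations grow as branching rules fire, but this does not affect the argument).
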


\section{A proof of Lemma~\ref{lem:rasmus}}
\label{sec:a_proof_of_lem}

\newcommand{\blue}[1]{\textcolor{blue}{#1}}
\renewcommand{\red}[1]{\textcolor{red}{#1}}
\newcommand{\change}[2]{\red{#1}\blue{#2}}

\begin{definition}
 An \ELA formula over variables $x,y$ is \emph{basic} if it is $\true$, $\false$ or of form
$$\phi(x,y)\equiv (a \le_1 x \le_2 b) \land (c+dx\le_3 y)$$
for some constants $a,b,c\in\nnreals$, $d\in\{-1,0\}$ and $\le_1,\le_2,\le_3~\in\{<,\le\}$.
That is, its support is the upward-closure (wrt.~$y$) of a finite line segment with end points $a,b$, and slope $0$ or $-1$.
It is \emph{crossing} 
another basic formula $\phi'(x,y)\equiv (a' \le_1' x' \le_2' b') \land (c'+d'x'\le_3' y')$ 
at point $e\in\R$ if 
$(c+de = c'+d'e) \land (a \le_1 e \le_2 b) \land (a' \le_1' e \le_2' b')\land (d\neq d')$, i.e., their two line segments intersect.
A family $\?F$ of basic formulae is called \emph{simple} if no two elements cross.
\end{definition}

\begin{proposition}
    \label{prop:pre-basic}
    If $\phi(x,y)$ is a basic formula 
    then both
    $\varphi(x,y)\eqdef \exists t. \phi(x+t,y-t)$
    and
    $\varphi(x,y)\eqdef \exists t. \phi(x+t,y)$
    are basic formulae.
\end{proposition}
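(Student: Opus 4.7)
The plan is to proceed by case analysis on the slope $d \in \{-1, 0\}$ of $\phi$, for each of the two operations. Writing $\phi(x, y) \equiv (a \le_1 x \le_2 b) \land (c + dx \le_3 y)$, I substitute the shifted arguments into $\phi$, regroup to isolate the constraints on $t$, and eliminate the existential by testing feasibility of the resulting $t$-interval (which is always a single interval since all atoms are linear in $t$ with the same orientation).

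For $\varphi(x, y) \eqdef \exists t.\, \phi(x+t, y)$, the first conjunct becomes $a - x \le_1 t \le_2 b - x$. When $d = 0$, the second conjunct $c \le_3 y$ is independent of $t$, so eliminating $t$ yields a basic formula of slope $0$ with the same $y$-bound $c$ and an $x$-range read off from the $t$-feasibility condition. When $d = -1$, the second conjunct becomes $t \ge_3 c - x - y$, and feasibility of the three-sided $t$-interval reduces to $y \ge c - b$ together with the $x$-range condition, again giving a basic formula of slope $0$.

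For $\varphi(x, y) \eqdef \exists t.\, \phi(x+t, y-t)$, the transformation moves $(x, y)$ along a line of slope $-1$. When $d = -1$, the boundary of $\phi$ has matching slope, so $c - (x+t) \le_3 y - t$ simplifies to $c - x \le_3 y$ after $t$ cancels; feasibility of the remaining $t$-interval then reduces to the $x$-range condition, yielding a basic formula of slope $-1$. When $d = 0$, the second conjunct becomes $t \le_3 y - c$; combining with $a - x \le_1 t \le_2 b - x$, feasibility of the resulting interval reduces to $a - x \le y - c$, i.e., $y \ge_3 a + c - x$, which is a basic formula of slope $-1$.

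The main obstacle is careful bookkeeping of the strictness flags $\le_1, \le_2, \le_3$ through the quantifier elimination, together with degenerate situations such as $a > b$, $c - b < 0$, or an empty $x$-range, which should collapse to $\false$ or $\true$ (both basic). In particular, one must check in each of the four cases that the constants extracted for the new basic formula lie in $\nnreals$, intersecting with the implicit domain $x, y \in \nnreals$ if needed. Once each case is expanded with attention to these flags, the result is in every case a single well-formed basic formula, completing the proof.
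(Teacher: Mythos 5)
The paper states this proposition without proof, so your direct quantifier-elimination-by-cases strategy is the natural one, and three of your four cases go through --- modulo pinning down the range of $t$, which you never do. Note that only the reading $t \ge 0$ can ever produce the finite upper endpoint required by the definition of a basic formula (``finite line segment with end points $a,b$''): with $t$ unrestricted, none of the four cases constrains $x$ from above, so none of your conclusions would be basic. The genuine problem is the case $\varphi(x,y) \eqdef \exists t.\,\phi(x+t,y-t)$ with $d=0$. With $t\ge 0$, feasibility of $\max(0,\,a-x)\le t\le\min(b-x,\;y-c)$ requires $x\le b$ \emph{and} $y\ge c$ \emph{and} $y\ge a+c-x$; you keep only the last conjunct. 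The true region has lower boundary $y=\max(c,\;a+c-x)$ over $x\in[0,b]$, i.e.\ a segment of slope $-1$ on $[0,a]$ followed by a segment of slope $0$ on $[a,b]$ --- a disjunction of \emph{two} basic formulae, not one. Concretely, for $a=2$, $b=4$, $c=3$ your formula accepts $(x,y)=(3,\,2.5)$, yet no $t\ge 0$ satisfies $2\le 3+t\le 4$ and $3\le 2.5-t$. If instead you intend $t$ unrestricted, your identity $\varphi\equiv(y\ge a+c-x)$ is correct but holds for all $x\ge 0$, so there is no finite upper bound on $x$ and the formula is again not basic.

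So in one case your proof asserts a false identity; in fact you have hit a real imprecision in the statement itself, which as literally written holds only when $a=0$ or when the result is read relative to the $x$-domain of $\phi$. A correct write-up must either weaken the conclusion to ``a disjunction of at most two basic formulae'' or observe that for $x\ge a$ the slope-$(-1)$ piece is subsumed by $y\ge c$, so the restriction of $\varphi$ to the domain of $\phi$ is a single basic formula. The latter is all the paper ever needs: in the $(a,b)$-games every formula in $\mathcal{F}(\mathcal{G})$ has the same $x$-domain $(a,b)$, and the shifted formula is only evaluated there. Your final paragraph anticipates degenerate cases ($a>b$, $c-b<0$, empty ranges), but none of those is the issue; also, $c-b<0$ should be handled by clamping the new constant to $0$ (the $y$-constraint becomes vacuous over $\nnreals$), not by collapsing the whole formula to $\true$ or $\false$.
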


\begin{definition}
    Let $\phi$ be a formula and $\?F$ be a family of formulae over variables $x,y$.
We call $\phi$ \emph{piecewise} in $\?F$ if it is equivalent to a finite disjunction of formulae in $\?F$.
It is called \emph{piecewise basic} if it is the finite disjunction of basic formulae.

A family $\?F'$ a \emph{simplification} of $\?F$ if it is simple and every formula piecewise in $\?F$ is also piecewise in $\?F'$.
\end{definition}
\begin{proposition}\label{prop:simple_fun}
    Suppose $\?F$ is a simple family and $\varphi, \psi$ are piecewise in $\?F$.
    Then $\varphi\land \psi$ and $\varphi\lor \psi$ are piecewise in $\?F$.
\end{proposition}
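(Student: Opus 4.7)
The disjunction case is immediate: if $\varphi \equiv \bigvee_i \phi_i$ and $\psi \equiv \bigvee_j \phi'_j$ are each a finite disjunction of formulae from $\?F$, then $\varphi \lor \psi \equiv \bigvee_i \phi_i \lor \bigvee_j \phi'_j$ is again such a disjunction. For conjunction, I would distribute:
\[
    \varphi \land \psi \;\equiv\; \bigvee_{i,j} (\phi_i \land \phi'_j),
\]
which reduces the task to showing that for any two $\phi, \phi' \in \?F$, the conjunction $\phi \land \phi'$ is piecewise in $\?F$.

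The plan is then to analyse this pairwise conjunction geometrically. Writing $\phi$ and $\phi'$ as upward-closures (in $y$) of line segments of slope $d,d'\in\{-1,0\}$ over $x$-intervals $I,I'$, let $J = I \cap I'$ (taking strict/non-strict endpoints into account). If $J = \emptyset$ then $\phi \land \phi' \equiv \false$, which is trivially piecewise in $\?F$ as the empty disjunction. Otherwise, since $\?F$ is simple, $\phi$ and $\phi'$ do not cross, so either their slopes agree and one line is pointwise above the other on $J$, or their slopes differ but their intersection point lies outside $J$; either way, one of the two lines, say $\ell_\phi(x) = c + dx$, dominates the other on all of $J$. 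The intersection region is then the upward-closure of $\ell_\phi$ restricted to $J$, i.e., a truncation of $\phi$ to the sub-interval $J \subseteq I$.

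It remains to check that this truncation is piecewise in $\?F$. The key observation is that adding truncations of members of a simple family preserves simplicity (sub-segments of non-crossing segments are still non-crossing), and by the ``simplification'' framework of the preceding definitions, piecewise-in-$\?F$ is preserved when passing to any such simple enlargement of $\?F$. In particular, a truncation of $\phi$ can be treated as a disjunction of elements from an enlarged simple family equivalent (in piecewise expressiveness) to $\?F$, so it is piecewise in $\?F$ as required.

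I expect the main obstacle to lie precisely in this last step, namely justifying that the truncation really is expressible as a disjunction of elements of $\?F$ rather than of some enlargement of it. Depending on the exact conventions adopted in the paper (which the excerpt leaves slightly implicit), this may be handled either by quietly assuming $\?F$ is closed under truncation of its segments, or by invoking the simplification construction to show that working ``up to piecewise-equivalence'' is harmless. In either reading, the geometric content of the argument — that two non-crossing upward-closures intersect to yield a region bounded below by one of the original line segments restricted to a smaller interval — is what makes the proposition go through.
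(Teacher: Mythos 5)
The paper states this proposition without proof, so there is nothing to compare against directly; your reconstruction of the intended argument (trivial disjunction case; distribute the conjunction to reduce to pairs; observe that two non-crossing upward-closures intersect in the upward-closure of the dominating segment restricted to the common $x$-interval $J=I\cap I'$) is certainly the geometric content the authors have in mind. However, the step you yourself flag as the obstacle is a genuine gap, and your proposed repair does not close it. Arguing that the truncation is piecewise in a \emph{simple enlargement} of $\?F$ runs the implication the wrong way: by the definition of simplification, being piecewise in $\?F$ implies being piecewise in the larger family, not conversely, so membership in an enlarged family tells you nothing about being piecewise in $\?F$ itself. Indeed, with the definitions taken literally the proposition is false: take $\?F=\{\phi,\phi'\}$ with $\phi\equiv(0\le x\le 2)\land(0\le y)$ and $\phi'\equiv(1\le x\le 3)\land(1\le y)$. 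Both have slope $0$, so they do not cross and $\?F$ is simple, yet $\phi\land\phi'\equiv(1\le x\le 2)\land(1\le y)$ is not equivalent to any disjunction of members of $\?F$ (each of $\phi$, $\phi'$ already has support strictly larger than that of the conjunction).

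What actually makes the statement true in the context where it is used is an additional structural property of the families in question that the definition of ``simple'' does not record: for the families $\?F(\?G)$ the $x$-domains of any two members are either equal or disjoint (all payoffs of an $(a,b)$-game share the domain $(a,b)$, and the minimal simplification subdivides every segment at the same set of crossing points). Under that hypothesis no truncation ever arises: if the domains are disjoint the pairwise conjunction is $\false$, and if they coincide then non-crossing forces one line to dominate the other on the whole common domain, so $\phi\land\phi'$ is exactly the dominating member (taking the stricter $y$-bound when the two lines coincide), hence a single element of $\?F$. You should either add this domain condition as an explicit hypothesis or verify it for the specific families to which the proposition is applied; without it the final step of your argument cannot be completed.
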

\begin{proposition}
    \label{prop:simplification:computable}
    Every finite family $\?F$ of basic formulae has a unique (subset) minimal simplification, which can be computed in polynomial time.
\end{proposition}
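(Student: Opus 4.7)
My plan is to proceed in two phases: first refine $\?F$ into a simple family $\?F''$ by splitting each basic formula at all critical $x$-coordinates, and then prune to obtain the minimal simplification $\?F'$.

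For the refinement, I would let $E\subseteq\nnreals$ be the set of all $x$-coordinates that are either endpoints of the $x$-supports of formulae in $\?F$, or crossing points between two formulae of $\?F$. Each crossing corresponds to the unique solution of a single linear equation, so with $|\?F|=n$ we have $|E|=O(n^2)$ computable in polynomial time. For each $\phi\in\?F$ with $x$-support $I$, I would split $I$ at the points of $E\cap I$ and produce one basic sub-formula per resulting sub-interval, keeping the slope $d$ and constant $c$ of $\phi$. Let $\?F''$ be the collection of all such pieces; its size is $O(n^3)$. Any crossing between two members of $\?F''$ must occur at an $x$-coordinate in $E$, but since we split precisely at those coordinates no crossing can occur inside a sub-interval, so $\?F''$ is simple. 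Moreover each $\phi\in\?F$ is the disjunction of its sub-formulae in $\?F''$, hence every formula piecewise in $\?F$ is piecewise in $\?F''$.

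To pass from $\?F''$ to the minimal simplification $\?F'$, I would call two basic formulae equivalent when they have identical upward-closures (equivalently, the same $x$-range with matching boundary types, the same slope, and the same constant) and keep exactly one representative per equivalence class in $\?F'$. For uniqueness of the subset-minimal simplification, I would argue that any simplification $\?G$ of $\?F$ must, for each $\phi\in\?F$, use in its disjunction for $\phi$ a set of basic formulae of $\?G$ whose boundaries together reproduce the boundary of $\phi$ pointwise over $\phi$'s $x$-support $I$; since $\?G$ is simple, two of its boundaries cannot transversally cross inside $I$, which together with the pointwise-minimum identity forces each participating formula to match $\phi$'s slope and constant on its respective $x$-subsupport and for these subsupports to cover $I$. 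Refining by $E$ then identifies these subsupports as the atomic sub-intervals of $\?F''$, and so $\?G$ must contain, up to equivalence, every element of $\?F'$.

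The main obstacle will be making this uniqueness argument fully rigorous, carefully tracking the open/closed boundary types that appear in the definitions of basic formula and crossing, and ruling out subtle alternative representations of the same $\phi$. Once that is settled the rest is routine: $E$, $\?F''$, and the equivalence bucketing are each computable in time polynomial in $n$, giving the claimed polynomial running time.
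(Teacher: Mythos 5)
The paper states this proposition without an accompanying proof, so there is no official argument to compare against; judged on its own terms, your refinement construction is the natural one and delivers the part of the statement that the appendix actually uses downstream: a simple family of size $O(n^3)$, computable in polynomial time, in which every formula piecewise in $\?F$ remains piecewise. One detail needs repair. After splitting at the points of $E$ you must decompose each support into \emph{open} intervals and singletons, and normalise the slope of every singleton piece to $d=0$. Under the paper's definition two formulae whose supports both contain a crossing point $e$ and whose values agree there still cross whenever their slopes differ, so half-open pieces such as $(a,e]$ and $[e,b')$, or two un-normalised singleton pieces at $e$, would leave $\?F''$ non-simple.

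The genuine gap is in the uniqueness argument. Your concluding claim --- that any simplification $\?G$ must contain, up to equivalence, every element of $\?F'$ --- is false, because nothing in the definition forces a simplification to split supports only at points of $E$. Already for $\?F=\{\phi\}$ with $\phi(x,y)\equiv(0<x<2)\land(c\le y)$, both $\{\phi\}$ and $\{(0<x\le 1)\land(c\le y),\,(1<x<2)\land(c\le y)\}$ are simple families in which $\phi$ is piecewise, and each is subset-minimal (no proper subfamily expresses $\phi$), yet they are incomparable. So there is no unique minimal element of the subset order among all simplifications, and the step where the ``pointwise-minimum identity'' is said to force the participating subsupports to be the atomic sub-intervals of $\?F''$ cannot be rescued: a simplification may legitimately use subsupports that are strictly finer than those atoms. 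What is true, and what the rest of the paper needs, is that the \emph{coarsest} simple refinement --- exactly the family you construct --- is canonical; if a uniqueness statement is wanted, it should be phrased and proved for that object (e.g., as the unique simplification all of whose pieces are maximal), not for a subset-minimal simplification.
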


Towards proving \cref{lem:rasmus},
recall that $\VAN{X}(x,y)$ holds if one can rewrite $(X,x)
\reach{y}\zerro$.
We will characterize these predicates using a type of game.
These are essentially one-dimensional priced timed games of \cite{BLMR2006,HIM2013},
with the provisio that Maximizer can only make discrete steps and cost rates can only be $0$ or $1$.
Additionally, we are interested in identifying if \emph{optimal} (value achieving) minimizer strategies exists, which is why we are dealing with value \emph{formulae} instead of simply value functions.

\newcommand{\tend}{T_f}
\begin{definition}\label{def:pgames}
    A \emph{\pgame} is given by a timed automaton $\?A=(\clockset,\?X, \?R)$,
    a partitioning $\?X=\?X_{min}\cup\?X_{\max}$ of the set of states into those belonging to players Min and Max, respectively,
    and a \emph{cost} $\cost(X) \in\{0,1\}$ for every state $X\in\?X$.
    Additionally, a game is parametrized by a payoff predicate $\payoff_X(x,y)$ for every state $X$.

A \emph{play} of the game is a walk $(s_0,t_0),(s_1,t_1),(s_2,t_2),\ldots$
in the configuration graph of $\?A$, where the player who owns the current state chooses the successor, with the restriction that only Min can use time steps.

The \emph{outcome} of infinite plays is $\infty$.
At any point in the play where the current configuration $(s_i,t_i) \in \?X_{\min}\x\R$ belongs to Min and
$\payoff_{s_i}(t_i,p)$ holds for some $p\in\R$,
she can stop the play and get the finite outcome $p + \sum_{i=0}^{n-1} \cost(s_i)\cdot(t_{i+1}-t_i)$.
The aim of players Min/Max is to minimize and maximize the outcome, respectively.

The \emph{outcome formulae} are the family $\{\varphi_X\mid X\in\?X\}$ of formulae such that 
$\varphi_X(x,y)$ holds iff Min has a strategy to guarantee a payoff of at most $y$ from initial configuration $(X,x)$.
\end{definition}

In order to compute the $\VAN{X}$ predicates for a given \TBPP we can equivalently compute the outcome formulae for a corresponding \pgame,
where Min gets to pick derivation rules and Max gets to pick the successor whenever a branching rule is used.
Every nonterminal owned by Min has cost rate $1$, and all other have rate $0$.
The payoff formula for $X\in\?X$ is simply $\payoff_X(x,y)\equiv \bigvee\{\varphi(x) \mid (X\Rule{\varphi; R}\zerro)\in\?R\}$, i.e., it is true iff $(X,x)\step{0}\zerro$, the configuration $(X,x)$ can vanish in no time.

We will solve these games iteratively, based on the special case for a single clock interval.

\begin{definition}
    Let $a,b\in\R$.
    An $(a,b)$-game 
    is a \pgame\ $\?G=(\?C,\?X,\?R,\cost,\payoff)$
    such that
            every rule $(X\Rule{\varphi;R}\alpha)\in\?R$ has the same clock constraint $\varphi(x)\equiv a<x<b$
            and does not reset the clock, and
            all payoff predicates are basic and with domain $(a,b)$.
    
    Let $\?F$ be the family of formulae that contains
    for every $\payoff_X(x,y)\equiv (\varphi(x)\land y\diamond c)$,
    the two formulae
    $(\varphi(x)\land y \diamond c)$
    and
    $(\varphi(x)\land y \diamond c-(b-x))$.
    Notice (\cref{prop:pre-basic}) that $\?F$ contains only basic formulae,
    and that there are at most $\card{\?X}$ points at which two formulae from $\?F$ can cross. 
    We write $\?F(\?G)$ for the simplification of $\?F$.
\end{definition}
The following properties of $\?F(\?G)$ follow from easy geometric reasoning.
\begin{proposition}
    \label{prop:FG-small}
    \begin{itemize}
        \item $\?F(\?G)$ has at fewer than $4\card{\?X}^3$ many elements. 
        \item Every formula piecewise in $\?F(\?G)$ is a disjunction of at most $2\card{\?X}(\card{\?X}-1)+1$ many basic formulae.
        \item If a formula $f$ is piecewise in $\?F(\?G)$ then so is $\bigvee_{0\le t \le (b-x)}(f(x+t,y-\cost(X)\cdot t)$.
    \end{itemize}
\end{proposition}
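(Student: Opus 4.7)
The proof relies on the restricted geometric structure of $\?F$: it contains at most $2\card{\?X}$ basic formulae, each the upward closure in $y$ of a line segment on $(a,b)$ with slope $0$ or $-1$, with at most one segment of each slope per nonterminal.

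For the first bullet, I would note that two basic formulae cross only if their underlying segments have opposite slopes, since parallel segments on the same $x$-range cannot meet transversally. Hence the total number of crossings is at most $\card{\?X}^2$, one per opposite-slope pair. The minimal simplification cuts each segment at its crossings; since each segment meets at most $\card{\?X}$ others, it yields at most $\card{\?X}+1$ arcs. Summing over the $2\card{\?X}$ original segments gives $\card{\?F(\?G)}\le 2\card{\?X}(\card{\?X}+1)<4\card{\?X}^3$, well within the claimed bound.

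For the second bullet, I would observe that a formula piecewise in $\?F(\?G)$ corresponds geometrically to the upward closure of the upper envelope of some subset of $\?F$'s segments. Such two-slope upper envelopes alternate between horizontal and slope-$(-1)$ pieces, and each transition between consecutive pieces is witnessed by a distinct crossing of opposite-slope segments. A refined count, noting that each opposite-slope pair contributes at most one transition to the envelope while discarding dominated segments, yields at most $2\card{\?X}(\card{\?X}-1)+1$ maximal linear pieces, each itself basic.

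The third bullet is the main difficulty. Since the bounded shift distributes over disjunction, it suffices to handle a single basic $f\in\?F(\?G)$. Proposition~\cref{prop:pre-basic} yields that the unbounded existential shift of a basic formula is basic; the bounded shift $t\in[0,b-x]$ imposes only a finite window on the witness, but the extremal choice $t=0$ always witnesses the points already in $f$ within the game's domain. A four-way case analysis over the slope of $f$ and the value of $\cost(X)\in\{0,1\}$ confirms that the shifted formula, restricted to $(a,b)$, is either $f$ itself or another basic formula whose slope lies in $\{0,-1\}$ and whose constants coincide with one of the two formulae added to $\?F$ for the payoff generating $f$. Consequently the shifted formula is piecewise in $\?F(\?G)$. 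The main obstacle is precisely this verification that no new constants are introduced by shifting; it follows from the symmetric construction of $\?F$, which by design includes both the horizontal and slope-$(-1)$ variants of every payoff.
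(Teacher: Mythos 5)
First, a point of reference: the paper offers no proof of this proposition at all --- it is asserted to ``follow from easy geometric reasoning'' --- so your argument has to stand on its own. For the first two bullets your counting is essentially sound, modulo small imprecisions that do not threaten the (generous) bounds: the relevant envelope of a disjunction of upward-closed sets is the \emph{lower} envelope (the pointwise minimum), not the upper one; the minimal simplification must also contain the singleton formulae at the cut points (otherwise a full segment is not a disjunction of its open arcs), so each segment contributes up to $2\card{\?X}+1$ pieces rather than $\card{\?X}+1$; and the true source of the factor $\card{\?X}(\card{\?X}-1)$ in the second bullet is that the two formulae generated by the \emph{same} payoff meet only at $x=b$, outside the open interval, so only cross-payoff pairs can cross --- not ``discarding dominated segments''. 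None of this breaks the stated inequalities.

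The genuine gap is in the third bullet, precisely where you locate the difficulty. Your case analysis concludes that the shifted formula is ``either $f$ itself or another basic formula whose constants coincide with one of the two formulae added to $\?F$ for the payoff generating $f$''. This fails when $f$ is a \emph{proper piece} of a segment, i.e.\ an element of $\?F(\?G)$ with domain $I=(e_1,e_2)\subsetneq(a,b)$ --- and these are exactly the formulae the bullet must handle, since it quantifies over everything piecewise in $\?F(\?G)$. For such an $f$ with slope $0$, constant $c$, and $\cost(X)=1$, the shift produces for $x<e_1$ the ramp $y\ge c+(e_1-x)$: a slope-$(-1)$ line through $(e_1,c)$ whose constant $c+e_1$ is in general unrelated to the constants $c$ and $c+(b-x)$ contributed by $f$'s own payoff. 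Choosing $t=0$ only recovers $f$ on its own domain; it says nothing about this leftward extension, which is the entire content of the claim. The reason the bullet is nevertheless true is a different alignment: the cut point $e_1$ of the minimal simplification is, by construction, a \emph{crossing} of $f$'s segment with some opposite-slope segment $\ell$ of $\?F$, say $y=c'+(b-x)$ with $c'+(b-e_1)=c$, and the ramp $y\ge c+(e_1-x)$ coincides exactly with $\ell$ on $(a,e_1)$; dually, a slope-$(-1)$ piece delayed at cost $0$ plateaus at the height of the horizontal segment crossing it at its right cut point. So the lines introduced by shifting are always lines already present in $\?F$, cut at their own crossing points --- this is the geometric fact your proof needs and does not supply. (A final nit you silently corrected: the formula $y\diamond c-(b-x)$ in the definition of $\?F$ has slope $+1$ and is not basic; it should read $c+(b-x)$, which is what the alignment above requires.)
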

\begin{lemma}
    \label{lem:ab-game}
    The outcome formulae for the $(a,b)$-game $\?G$
    are piecewise in $\?F(\?G)$ and computable in polynomial time.
\end{lemma}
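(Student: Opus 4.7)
The plan is to compute the outcome formulae $\varphi_X$ for $X\in\?X$ by monotone value iteration of the Bellman operator of the $(a,b)$-game, using Propositions~\ref{prop:simple_fun} and~\ref{prop:FG-small} to keep every intermediate formula piecewise in $\?F(\?G)$, and then to bound the number of iterations polynomially.

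First I would set up the iteration, starting with $\varphi_X^{(0)}\equiv\payoff_X$ when $X\in\?X_{\min}$ and $\varphi_X^{(0)}\equiv\false$ otherwise, and applying the Bellman update
\begin{align*}
\varphi_X^{(i+1)} & \equiv \textstyle\bigwedge_{(X\to Y)\in\?R}\varphi_Y^{(i)}, \qquad X\in\?X_{\max},\\
\varphi_X^{(i+1)}(x,y) & \equiv \exists t\ge 0 \cdot \Bigl(x+t<b \,\land\, \bigl(\payoff_X\lor\textstyle\bigvee_{(X\to Y)\in\?R}\varphi_Y^{(i)}\bigr)(x+t,\,y-\cost(X)\cdot t)\Bigr),\qquad X\in\?X_{\min}.
\end{align*}
The operator is monotone in the implication order, so the sequence converges to its least fixpoint, which I would identify with the true outcome formula via the standard correspondence between $i$-step Min strategies and witnesses of $\varphi_X^{(i)}(x,y)$.

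Next, I would verify the invariant that every $\varphi_X^{(i)}$ is piecewise in $\?F(\?G)$. The base case holds because each $\payoff_X$ is by definition a basic formula lying in $\?F\subseteq\?F(\?G)$. For the inductive step, conjunctions and disjunctions of piecewise-$\?F(\?G)$ formulae remain piecewise in $\?F(\?G)$ by Proposition~\ref{prop:simple_fun}, and the pre-time-elapse operator $f\mapsto \exists t\ge 0\cdot(x+t<b \land f(x+t,y-\cost(X)\cdot t))$ preserves the class by the third bullet of Proposition~\ref{prop:FG-small}. The second bullet of the same proposition ensures that each $\varphi_X^{(i)}$ can be canonically represented as a disjunction of $O(\card{\?X}^2)$ basic formulae from $\?F(\?G)$; this canonical form can be maintained at each step in polynomial time using Proposition~\ref{prop:simplification:computable}.

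Finally I would bound the iteration count, which is the main obstacle. Monotone iteration over the Boolean lattice of subsets of $\?F(\?G)$ terminates in principle, but a naive bound is $2^{|\?F(\?G)|}$, which is exponential. To obtain polynomial termination I would follow the strategy-based reasoning behind Theorem~4.11 of~\cite{HIM2013}: argue that each strict increase of some $\varphi_X^{(i)}$ corresponds to Min discovering a new useful time delay in her strategy, and that there are only polynomially many such delays per state (essentially one per basic formula in $\?F(\?G)$), bounding the total number of productive iterations by $\card{\?X}\cdot|\?F(\?G)| = O(\card{\?X}^4)$. Combined with the polynomial per-iteration cost established above, this yields polynomial-time computation of the outcome formulae as claimed.
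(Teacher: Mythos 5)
Your proposal is correct and follows essentially the same route as the paper: a monotone Bellman/value iteration started from the payoffs, an inductive invariant (via Propositions~\ref{prop:simple_fun} and~\ref{prop:FG-small}) that every iterate stays piecewise in $\?F(\?G)$, and polynomial termination by counting strict improvements of disjuncts against the finite simple family $\?F(\?G)$. The paper phrases the last step as ``each productive iteration must replace some disjunct by a strictly larger basic formula, which can happen at most $4\card{\?X}^3$ times'' (and additionally preprocesses away Max-only cycles because its Max update is self-referential at the same iteration index), but these are presentational differences, not a different argument.
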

\begin{proof}
First, without loss of generality we may assume that there 
are no cycles in the game restricted only to the Max player states. If they are then their outcome must be $\outcome_X(x,y)\equiv\false$ as Max can enforce an infinite play.
Further, we assume that every transition going from a Max configuration is going to a Min 
player configuration, we simply introduce shortcuts for Max player.

Using the natural partial ordering $f\le g$ iff $f(x,y)\implies g(x,y)$ of two-variable formulae, we notice that
the outcome formulas $\{g_X\mid X\in\?X\}$ must be the least family of formulae satisfy the following equations.

\begin{equation}
    g_{X}(x,y)
    \quad=\quad
    \bigwedge_{X\Rule{\varphi}Y} g_{Y}(x,y)
    \qquad\text{for $X\in \?X_{\max}$, and}
\end{equation}
\begin{equation}
    \begin{aligned}
     g_{X}(x,y)
     \quad=\quad
     \payoff_X(x,y)
     &\quad\lor
     \bigvee_{(X\Rule{\varphi} Y)\in \?R} g_{Y}(x,y)\\
     &\quad\lor
     \bigvee_{0<t\le(b-x)}
     g_{X}(x+t,y-\cost(X)\cdot t)
 \end{aligned}
\end{equation}
for $X\in \?X_{min}$.
This allows to approximate them as the least fixed point of ordinal indexed families
$\?F^{(\alpha)}$,
where 
$\?F^{(0)}\eqdef \{g_X^{(0)} \eqdef \payoff_X \mid X\in\?X_{\min}\}\cup\{g_X^{(0)} \eqdef \false \mid X\in\?X_{\max}\}$, and for all $\alpha>0$
define 
\begin{equation}
    \label{eq:app1}
    g^{(\alpha)}_{X}(x,y)
    \quad\eqdef\quad
     \bigwedge_{X\Rule{\varphi}Y} g^{(\alpha)}_{Y}(x,y)
    \qquad\text{for $X\in \?X_{max}$, and}
\end{equation}
\begin{equation}
    \label{eq:app2}
    \bigvee_{\beta<\alpha}
    \left(
    g^{(\beta)}_X(x,y)
    \quad\lor
    \bigvee_{X\Rule{\varphi} Y} g^{(\beta)}_{Y}(x,y)
     \quad\lor
     \bigvee_{0<t\le(b-x)}
     (g^{(\beta)}_{X}(x+t,y-\cost(X)\cdot t))
    \right)
    \quad
\end{equation}
for $X\in \?X_{\min}$.
Notice that here, $g_Z^{(\alpha)} \le g_X^{(\beta)}$ for $\alpha<\beta$.
Intuitively,
$g^{(\alpha)}_X(x,y)$ holds if in a game starting in $(X,x)$, Min only needs to make $\alpha$ moves to get a payoff of at least $y$.

\begin{claim}
    \label{claim:finite-piecewise}
Every family $\?F^{(k)}$ with finite index $k\in\N$ is piecewise in $\?F(\?G)$.
\end{claim}
We prove this by induction on $k$,
where the base case, $\?F^{(0)}=\payoff$ trivially holds.
For the induction step, consider the case for $X\in\?X_{\min}$ 
and let
\begin{equation}
    \begin{aligned}
        g^{(k+1)}_{X}(x,y)
        \quad =\quad&
        g^{(k)}_{X}(x,y)~\lor\\
                           &\left(
    \bigvee_{X\Rule{\varphi} Y} g^{(k)}_{Y}(x,y)
     \quad\lor
     \bigvee_{0<t\le(b-x)}
     (g^{(k)}_{X}(x+t,y-\cost(X)\cdot t))
     \right)
    \end{aligned}
\end{equation}
Here, $(g^{(k)}_{X}(x+t,y-\cost(X)\cdot t))$
is piecewise in $\?F(\?G)$ by
(the last point in) \cref{prop:FG-small},
because $g^{(k)}_{X}$ is piecewise in $\?F(\?G)$ by induction assumption.
Together with \cref{prop:simple_fun} this shows that $g_X^{(k+1)}$ must be piecewise in $\?F(\?G)$.
The case for Maximizer states is analogous.

\begin{claim}
    \label{claim:finite-stabilization}
    There exists $k \le 4\card{\?X}^3$ such that $\?F^{(k+1)} = \?F^{(k)}$.
\end{claim}
This follows from \cref{claim:finite-piecewise} and \cref{prop:FG-small}:
Every time $\?F^{(k)}$ strictly improves
there must be at least one of its elements $g^{(k)}_X(x,y) = \bigvee_{i\le \card{\?X}} f_i$
where at least one disjunct $f_i$ is replaced by a strictly larger one.
This can happen at most $4\card{\?X}^3$ times.

It remains to observe that one can compute
a representation of $\?F^{(k+1)}$ from $\?F^{(k)}$, using \cref{eq:app1,eq:app2}.
\end{proof}

\Cref{lem:rasmus} now rests on the following lemma, which we prove by recursively applying \cref{lem:ab-game} to solve \pgame s without resets and transition guards.
\begin{lemma}
    Let $(\clockset,\?X, \?R,\cost,\payoff)$ be a \pgame\
    where $\payoff$ is piecewise basic.
    Let $k\in\N$ denote the number of different constants appearing either as endpoints in $\payoff$ or in the guards of rules.
    Every outcome formula $\outcome_X(x,y)$ is a disjunction of at most $2\cdot k\cdot\card{\?X}$ basic formulae
    and is computable in polynomial time.
\end{lemma}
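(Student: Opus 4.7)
My plan is to reduce the general case to the $(a,b)$-game lemma by decomposing the clock axis along the critical constants and stitching per-region solutions together via a monotone fixed-point iteration.

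\textbf{Region decomposition.} Let $0 \le c_1 < \cdots < c_k$ enumerate the distinct constants appearing either in guards of $\?R$ or as endpoints of basic disjuncts of $\payoff$. These partition $\nnreals$ into $2k+1$ regions: the singletons $\{c_i\}$ and the open intervals between consecutive constants, with $(c_k,\infty)$ as the rightmost. Within any single region every guard of $\?R$ and every basic disjunct of $\payoff$ has constant truth value in $x$, so the \pgame restricted to a single open region $(a,b)$ is, after removing reset rules and trivialising the guards that are satisfied throughout the region, an $(a,b)$-game in the sense of the previous lemma.

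\textbf{Per-region reduction and Kleene iteration.} Given a tentative assignment supplying outcome formulae at $x=0$ (for use by reset rules) and at each right endpoint $c_{i+1}$ (for use by time-elapse exits into the next region), I would fold these into the in-region payoff: a reset $X\Rule{\varphi;x:=0}\alpha$ is replaced by a vanishing-style rule whose payoff is the current $x=0$ outcome for the post-reset state, and the right-boundary outcomes are appended to $\payoff$ at $\{c_{i+1}\}$. The $(a,b)$-game lemma then yields a piecewise-basic solution on that region of polynomial size. Since the boundary and reset data themselves depend on other regions' outcomes, I would compute the global outcomes as the least fixed point of the monotone operator that applies the per-region construction simultaneously and feeds the outputs back. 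Starting from $\false$ and iterating right-to-left sweeps produces an ascending chain (in the pointwise $\implies$ order). To bound its length, I would exhibit, in the spirit of \cref{prop:FG-small}, a common simple family of basic formulae of polynomial size that contains every formula arising in the iteration, so that only polynomially many strict refinements are possible. Each sweep runs in polynomial time by the $(a,b)$-game lemma, hence the overall procedure is polynomial. Counting the disjuncts in the final outcomes by region, and merging adjacent basic pieces that share slope and threshold across region boundaries, yields the announced bound of $2\cdot k\cdot\card{\?X}$ basic formulae per outcome.

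\textbf{Main obstacle.} The central difficulty is the feedback introduced by resets: every reset rule ties the outcome at $x=0$ to outcomes in arbitrarily distant regions, so a single right-to-left pass does not suffice and a genuine fixed point must be computed. Making the Kleene iteration converge in polynomial time hinges on a careful analogue of \cref{prop:FG-small} that is uniform across all regions, ensuring that all intermediate formulae live in one common simple family of polynomial size. Establishing this uniform bound, and verifying that the per-region reductions keep the total number of basic disjuncts at $O(k\cdot\card{\?X})$ rather than the $O(k\cdot\card{\?X}^2)$ that a naive application of the $(a,b)$-game lemma would give, is the principal technical work of the proof.
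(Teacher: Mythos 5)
Your region decomposition and the right-to-left, per-region computation via the $(a,b)$-game lemma (with untimed min/max games at the singleton regions) is exactly the paper's approach for the reset-free case. The gap is in how you handle resets. You propose computing the global outcomes as the least fixed point of a monotone operator obtained by feeding the $x=0$ outcomes back into the reset rules, and you defer the convergence argument to a hoped-for ``uniform simple family of polynomial size'' containing all intermediate formulae. That is precisely the part that does not come for free: each Kleene iteration derives new payoffs from the previous iteration's outcomes, and nothing in \cref{prop:FG-small} (which is tied to the payoffs of a \emph{single} $(a,b)$-game) prevents the set of slopes, thresholds and crossing points from growing with each round. As stated, your proof has no bound on the number of iterations and hence no polynomial-time guarantee; you acknowledge this yourself by calling it ``the principal technical work.''

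The paper closes this gap with a different, essentially game-theoretic idea that your proposal is missing: a Min strategy never benefits from letting a configuration repeat along a play (cutting out the intermediate segment can only decrease the outcome), so plays with more than $\card{\?X}$ clock resets may be declared losing without changing the outcome formulae. This turns the cyclic dependency into a stratification $\outcome^{(0)},\dots,\outcome^{(\card{\?X})}$, where $\outcome^{(0)}$ is the reset-free game and $\outcome^{(i+1)}$ is obtained from $\outcome^{(i)}$ by replacing each resetting rule $X\Rule{\varphi;\, x:=0}Y$ with a non-resetting rule to a fresh cost-$0$ sink $\hat Y$ whose payoff at the right endpoint encodes $\outcome^{(i)}_Y(0,y)$. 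Since this substitution introduces no new guard constants, each level is computed by the same reset-free procedure over the same $2k+1$ regions, and only $\card{\?X}$ levels are needed. You would need either this observation or a genuine proof of your uniform-family claim; without one of them the argument is incomplete.
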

\begin{proof}
    We can partition the reals into $2 k+1$ intervals according to the constraint constants
    $$\{a_0=0\}, (a_0,a_1), \{a_1\}, (a_1,a_2) \{a_2\},\ldots (a_{k-1},a_{k}=\infty)$$
    and write $\outcome_X(x,y)\equiv \bigvee_{i\le k} f_X^i$, where
    $f^{2i}_X(x,y)\eqdef\outcome_X(x,y)\land (x=a_i)$ for even indices
    $f^{2i+1}_X(x,y)\eqdef\outcome_X(x,y)\land (a_i<x<a_{i+1})$ for odd indices.

    Suppose first that no rule in the \pgame\ resets a clock to $0$.
    In this case $f_X^i$ are independent of $f_X^j$ with smaller index $j<i$ and can be computed stepwise from last to first interval,
    preserving the invariant that each $f_X^i$ is 
    the disjunction of at most $\card{\?X}$ many basic formulae.
    \begin{itemize}
        \item For the final interval we must have $f_X^{2k+1}(x,y)\equiv (x>a_k)\land \payoff_X(x,y)$.
    
        \item For even indices (singleton intervals $\{a_i\}$), 
            outcome formulae
            can be computed once $\{f^{2i+1}_Y\mid Y\in\?X\}$ are known, using dynamic programming.
            Briefly, compute payoff formulae of the form
            $\phi_X(x,y)\equiv (x=a_i)\land y\diamond c$, for $\diamond\in\{>,\ge\}$ and $c\in\R$,
            from the $f^{2i+1}$, which is possible because those are piecewise basic and by \cref{prop:pre-basic},
            and solve an untimed min/max game for those payoffs.
        \item For odd indices corresponding to intervals $(a_i,a_{i+1})$, we can use
            \cref{lem:ab-game} to 
            compute $f^{2i+1}_X$ as the outcome formulae
            for an $(a_i,a_{i+1})$-game with payoffs given by
          $\payoff_X(x,y) \equiv (x=a_{i+1})\land f^i_X(x,y)$.
\end{itemize}
This produces outcome formulae which are piecewise basic and expressed as disjunctions of at most $2k\card{\?X}+1$ basic formulae.

It remains to argue that the above construction can be extended to \pgame s \emph{with} resetting rules.
This is based on the observation that Minimizer can be required not to allow a configuration to repeat along a play. 
Indeed, the outcome of finite a play
$$(s_0,t_0),\ldots,(s_0,t_0),\ldots,(s_k,t_k)p$$
where $p\in\R$ as chosen by Min at the end of a finite play,
can only be larger than that of the suffix $(s_0,t_0),\ldots,(s_k,t_k)p$ alone. Any Min strategy that allows for repetitions can therefore be turned into one that does not,
by cutting out the intermediate paths.
Consequently, plays with more than $\card{\?X}$ clock resets can be declared to be losing (have outcome $\infty$)
without changing the resulting outcome formulae.

In order to compute outcome formulae for an unrestricted \pgame~we can now
stepwise compute formulae $\outcome^{(i)}_X$ for the same game but where a play ends (with outcome $\infty$) as soon as Min uses the $i$th reset.
By the observation above, $\outcome = \outcome^{\card{\?X}}$.
The base case $\outcome^{(0)}_X$ corresponds to the game where all reset rules are removed.

In order to compute
$\outcome^{(i+1)}_X$ we can use the same procedure as described above,
but in every $(a,b)$-game,
we replace resetting rules
$X\Rule{\varphi, x:=0}Y$
by non-resetting rules $X\Rule{\varphi}Y'$
to new, unproductive nonterminals $\hat{Y}$
with
$\payoff_{\hat{Y}}(x,y)\eqdef (x=b) \land \outcome^{(i)}_Y(0,y)$,
and $\cost(\hat{Y}) =0$.
This way, if from $(X,t)$, Min would originally gain an outcome by resetting to $(Y,0)$,
she can now gain the same outcome by moving to $(Y',t)$, delaying by $(b-t)$ at no cost,
and collect the outcome in configuration $(\hat{Y},b)$.
This change will not introduce any new constants in the guards of rules, so the total number of intervals 
in the description of the outcome formulae remains unchanged.
\end{proof}

\end{document}